\documentclass{article}
\usepackage[margin=1in]{geometry}

\usepackage[utf8]{inputenc}
\usepackage[T1]{fontenc}
\usepackage[utf8]{inputenc}
\usepackage{verbatim}
\usepackage{tikz}
\usetikzlibrary{shapes.geometric}
\usetikzlibrary{quotes}
\usetikzlibrary{arrows.meta}
\usepackage{subfig}

\usetikzlibrary{snakes,arrows,shapes}

\author{Aditya Jayaprakash \\
Department of Computing Science\\ 
University of Alberta
\and Mohammad R. Salavatipour\footnote{Supported by NSERC.}\\
Department of Computing Science\\ 
University of Alberta}
\date{}
\usepackage{hyperref}
\usepackage{bbm}
\usepackage{amsmath,amssymb,enumerate,ifthen,tikz,floatrow,amsmath, algorithm,algorithmic}
\usepackage{mathtools}
\usepackage{graphicx}

\graphicspath{ {./images/} }
\newtheorem{theorem}{Theorem} 
\newtheorem{lemma}{Lemma}

\newtheorem{definition}{Definition} 
\newtheorem{remarka}{Remark} 
\usepackage{float}

\def\polylog{\mathop{\rm polylog}\nolimits}
\usepackage{hyperref}
 
\newenvironment{proof}{{\bf Proof.}}{\hfill\rule{2mm}{2mm}} 
 
\newcommand{\calI}{{\cal I}}
\newcommand{\calT}{{\cal T}}

\newcommand{\Prob}[1]{\mathbb{P}{\left[#1\right]}}
\newcommand{\Ex}[1]{\mathbb{E}{\left[#1\right]}}

\newcommand{\opt}{\mbox{\sc opt}}
\newcommand{\OPT}{\mbox{\sc OPT}}

\newcommand{\RR}{\mathbb{R}}

\newcommand{\mi}[1]{\underset{#1}{\text{min }}}

\newcommand{\eps}{\epsilon}

\newcommand{\poly}{\text{poly}}

\DeclarePairedDelimiter\ceil{\lceil}{\rceil}

\newcommand{\cost}{\text{cost}}

\DeclarePairedDelimiter\autobracket{(}{)}
\newcommand{\br}[1]{\autobracket*{#1}}

\title{Approximation Schemes for Capacitated Vehicle Routing on Graphs of Bounded Treewidth, Bounded Doubling, or Highway Dimension}
\newcommand{\tvec}{\vec{t}}
\newcommand{\yvec}{\vec{y}}
\newcommand{\nvec}{\vec{n}}

\newcommand{\hvec}{\vec{h}}
\newcommand{\lvec}{\vec{l}}
\newcommand{\zvec}{\vec{z}}
\newcommand{\ovec}{\vec{o}}
\newcommand{\pvec}{\vec{p}}

\newcommand{\bbold}{\textbf{B}}
\newcommand{\cbold}{\textbf{C}}
\newcommand{\abold}{\textbf{A}}

\newcommand{\hbold}{\textbf{H}}
\newcommand{\ibold}{\textbf{I}}

\newcommand{\norm}[1]{\left\lVert#1\right\rVert}

\newcommand{\boundellipse}[3]
{(#1) ellipse (#2 and #3)
}
\begin{document}

\maketitle
\begin{abstract}
In this paper we present Approximation Schemes for Capacitated Vehicle Routing Problem (CVRP) on several classes of graphs.
In CVRP, introduced by Dantzig and Ramser in 1959 \cite{Dantzig}, we are given a graph $G=(V,E)$ with metric edges costs, a depot $r\in V$, and a vehicle of bounded capacity $Q$. The goal is
to find minimum cost collection of tours for the vehicle that return to the depot, each visiting at most $Q$ nodes, such that they 
cover all the nodes. This generalizes classic TSP and has been studied extensively. In the more general setting each node $v$ has
a demand $d_v$ and the total demand of each tour must be no more than $Q$. Either the demand of each node must be served by one tour
(unsplittable) or can be served by multiple tour (splittable). The best known approximation algorithm for general graphs has 
ratio $\alpha+2(1-\epsilon)$ (for the unsplittable) and $\alpha+1-\epsilon$ (for the splittable) for some fixed $\epsilon>\frac{1}{3000}$, where $\alpha$ is the best approximation for TSP.
Even for the case of trees, the best approximation ratio is $4/3$ \cite{Becker18} and it has been an open question if there is an approximation scheme for this simple class of graphs. Das and Mathieu \cite{Das-Mathieu} presented an approximation
scheme with time $n^{\log^{O(1/\epsilon)}n}$ for Euclidean plane $\RR^2$. No other approximation scheme is known for any other class of metrics
(without further restrictions on $Q$). In this paper we make significant progress on this classic problem by presenting Quasi-Polynomial Time Approximation Schemes (QPTAS) for graphs of bounded treewidth, graphs of bounded highway dimensions, and graphs of bounded doubling dimensions. For comparison, our result implies an approximation scheme for Euclidean plane with run time $n^{O(\log^{10}n/\eps^{9})}$. 
\end{abstract}

\section{Introduction}
Vehicle routing problems (VRP) describe a class of problems where the objective is to find cost efficient delivery routes for delivering items from depots to clients using vehicles having limited capacity. These problems have numerous applications in real world settings. The Capacitated Vehicle Routing Problem (CVRP) was introduced by Dantzig and Ramser in 1959 \cite{Dantzig}. In CVRP, we are given as input a graph $G=(V,E)$ with
metric edge weights (also referred to as costs) $w(e)\in\mathbb{Z}^{\ge 0}$,
a depot $r\in V$, along with a vehicle of capacity $Q>0$, and wish to compute a minimum weight/cost collection of tours, each starting from the depot and visiting at most $Q$ customers, whose union covers all the customers. In the more general setting each node $v$ has a demand $d(v)\in \mathbb{Z}^{\ge 1}$ and the goal is to find a set of tours of minimum total cost each of which includes $r$ such that the union of the tours covers the demand at every client and every tour covers at most $Q$ demand. 

There are three common versions of CVRP: \emph{unit}, \emph{splittable}, and \emph{unsplittable}. In the splittable variant, the demand of a node can be delivered using multiple tours, but in the unsplittable variant, the entire demand of a client must be delivered by a single tour. The unit demand case is a special case of the unsplittable case where every node has unit demand and the demand of a client must be delivered by a single tour. CVRP has also been referred to as the $k$-tours problem \cite{Arora-Euclidean-PTAS, stoc/AsanoKTT97}.
 All three variants admit constant factor approximation algorithm in polynomial-time \cite{Haimovich-Kan}. Haimovich et al. \cite{Haimovich-Kan} showed that a heuristic
 called iterative partitioning (which starts from a TSP tour and breaking the tour into capacity respecting tours by making a trip back and  forth to the depot) implies an $(\alpha+1(1-1/Q))$-approximation
 for the unit demand case, with $\alpha$ being the approximation ratio of Traveling Salesman Problem (TSP). 
 A similar approach implies a
$2 + (1 - 2/Q)\alpha)$-approximation for the unsplittable variant \cite{ALTINKEMER1987149}. Very recently, Blauth et al. \cite{Vygen} improved these approximations by showing that there is an $\eps > 0$ such that there is an $(\alpha + 2 \cdot (1 - \eps))$-approximation algorithm for unsplittable CVRP and a $(\alpha + 1 - \eps)$-approximation algorithm for unit demand CVRP and splittable CVRP. For $\alpha = 3/2$, they showed $\eps > 1/3000$. 
 All three variants are APX-hard in general metric spaces \cite{Papadimitrio-Yannakakis}, so a natural research focus has been on structured metric spaces, i.e. special graph classes. Even on on trees (and in particular on stars) CVRP remains NP-hard \cite{Labbe-Mercure}, and there exists constant-factor approximations (currently being $4/3$ \cite{Becker18}), better than those for general metrics, however the following question has remained open:\\
\textbf{Question.} Is it possible to design an approximation scheme for CVRP on trees or more generally graphs of bounded treewidth? 


We answer the above question affirmatively. For ease of exposition we start by prove the following first:
\begin{theorem}\label{thm:tree}
For any $\eps > 0$, there is an algorithm that, for any instance of the unit demand CVRP on trees outputs a $(1 + \eps)$-approximate solution in time $n^{O(\log^4 n/\eps^3)}$.
For any instance of the splittable CVRP on trees when $Q = n^{O(\log^c n)}$ the algorithm  
runs in time $n^{O(\log^{2c + 4}n)}$.
\end{theorem}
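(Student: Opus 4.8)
The plan is to set up a dynamic program over the tree, rooted at the depot $r$, whose key insight is that an (approximately) optimal solution can be made "structured" so that only a quasi-polynomial amount of information needs to be threaded through each edge of the tree. Fix an edge $e=(u,v)$ with $v$ the child; the set of tours in a solution that cross $e$ is what the DP must summarize. A tour that uses $e$ enters the subtree $T_v$, serves some demand there, and returns; the only thing the rest of the solution cares about is, for each such tour, how much residual capacity it still has available for the rest of the tree. So the natural DP state at $e$ is a multiset of "residual load" values in $\{0,1,\dots,Q\}$, one per tour crossing $e$, together with the total number of such crossings (which, times $2w(e)$, is the contribution of $e$ to the cost). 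The obstacle is that this multiset can be huge, so the heart of the proof is a rounding/bucketing lemma: we argue that losing only a $(1+\eps)$ factor we may assume the number of distinct tours crossing any edge is $O(\log n/\eps)$ or that their residual loads take only $\poly\log n$ distinct values, so that the state space has size $n^{O(\log^3 n/\eps^2)}$ or so.

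First I would establish the structural reduction. Using the standard trick of adding back-and-forth trips to the depot at cost proportional to $w(e)$ and the Haimovich–Kan-type lower bound ($\OPT \ge 2\sum_v d(v)\,\dist(r,v)/Q$ plus the MST/TSP bound), I would show that we may round all edge weights and demands to powers of $(1+\eps')$ within a polynomial range, collapsing long paths, so that the tree has depth $O(\log_{1+\eps'} n) = O(\log n/\eps)$ and only $O(\log n/\eps)$ distinct edge-weight scales. Next, I would prove that in a near-optimal solution the residual-load vector crossing each edge can be rounded: partition the tours crossing $e$ by their residual load into $O(\log Q/\eps)$ geometric buckets and argue (a swapping/exchange argument, reassigning demand among tours in the same bucket and paying only lower-order cost against the $\dist(r,\cdot)$ lower bound) that we may assume all tours in a bucket have equal residual load. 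For the unit-demand case $Q \le n$ so $\log Q = O(\log n)$; for the splittable case with $Q = n^{O(\log^c n)}$ we get $\log Q = O(\log^{c+1} n)$, which is where the exponent in the running time blows up accordingly.

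Then I would describe the DP itself. Process the tree bottom-up. At a leaf, the state is: "$d(v)$ units of demand must be covered," realized by some multiset of tour-visits to $v$. At an internal node $v$ with edge $e$ to its parent and children $v_1,\dots,v_k$, a state is a tuple recording, for each geometric bucket, the number of tours crossing $e$ in that bucket and their (common, rounded) residual load; the DP transition must (i) "merge" the child states — deciding which tours crossing $e$ continue down into which child $v_i$, how much load they spend in $T_{v_i}$, and which tours are born/die at $v$ — subject to flow conservation of load and coverage of $d(v)$, and (ii) add the cost $2w(e)\cdot(\text{number of crossings of }e)$. The merge at a node of degree $k$ is itself a small auxiliary DP (process children $v_1,\dots,v_k$ one at a time, carrying a partial residual-load profile), so the per-node work is polynomial in the state-space size, giving total running time (state space)$^{O(1)} \cdot n = n^{O(\log^4 n/\eps^3)}$ for unit demand and $n^{O(\log^{2c+4} n)}$ for the splittable case as claimed.

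The main obstacle is the rounding lemma for the residual loads — showing that bucketing the tours crossing each edge into geometrically-spaced load classes, and equalizing within a class, costs only a $(1+\eps)$ factor overall. The subtlety is that an individual reassignment of demand between two tours can force a cascade of changes deeper in the tree (a tour now arriving with less residual capacity may need to be supplemented by an extra back-and-forth trip), so the charging must be done globally: I would charge the extra trips created at scale-$i$ edges against the $\dist(r,\cdot)$-weighted demand lower bound restricted to that scale, and sum the geometric series over the $O(\log n/\eps)$ scales. A secondary technical point is handling the splittable case cleanly when $Q$ is superpolynomial: there the load values themselves live in a superpolynomial range, but after geometric rounding there are only $O(\log Q/\eps) = O(\log^{c+1} n/\eps)$ of them, so the state at an edge is still a vector of $O(\log^{c+1} n/\eps)$ bucket-counts (each count at most $n$), keeping the state space quasi-polynomial; I would make sure the transition respects that splitting a node's demand across many tours is allowed, which is in fact what makes the splittable analysis cleaner than the unsplittable one.
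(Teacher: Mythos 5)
There is a genuine gap at exactly the point you flag as ``the main obstacle,'' and the fix you sketch is not the one that works. Equalizing the residual loads of all tours crossing an edge within a geometric bucket is not a harmless rounding: rounding a tour's subtree-load \emph{up} to a common bucket value can push that tour over capacity $Q$ (the $(1+\eps)$ slack within a bucket is a multiplicative $\eps Q$ in the worst case, which cannot be absorbed), while rounding \emph{down} leaves demand uncovered. Your proposed repair --- cover the displaced demand with extra back-and-forth trips to the depot and charge them against the Haimovich--Kan radial lower bound $2\sum_v d(v)\dist(r,v)/Q$, summed over $O(\log n/\eps)$ scales --- is precisely the kind of charging that yields additive-$\alpha$ constant-factor guarantees, not a $(1+\eps)$ loss: the number of tours crossing an edge (hence the number of repairs) can be of the same order as $f(e)$ itself, and nothing in the argument bounds the repair cost by an $\eps$ fraction of $\opt$. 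The paper's structure theorem uses a genuinely different mechanism that your proposal does not contain: within each big bucket at a node $v$, sort the partial tours by size, partition them into $g=O(\log n/\eps^2)$ equal-cardinality groups (as in the bin-packing APTAS), and \emph{shift} every tour's bottom part to the corresponding, smaller, partial tour of the previous group; this creates enough slack to then round each partial tour up (by adding dummy tokens) to its group maximum without violating $Q$. The price is that the last group's tokens become orphaned, and these are covered not by depot trips but by an $\eps$-probability random sample of the optimum's tours (duplicated and assigned to random levels), whose total cost is $O(\eps)\cdot\opt$ and which, by a Chernoff/union-bound argument, land in every big vertex--bucket pair in sufficient numbers w.h.p. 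Note also that the resulting size profile is not ``one value per geometric bucket'' but $g$ group-maxima per big bucket plus exact sizes for small buckets, i.e.\ $O(\log Q\log^3 n/\eps^3)$ values, which is what actually drives the exponent $\log^4 n/\eps^3$.

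A secondary under-specified step is the depth reduction. Rounding edge weights to powers of $(1+\eps')$ does not by itself bound the depth (a root-to-leaf path can have $n$ unit edges), and collapsing prefix distances must be done per path of a heavy-path-type decomposition of the tree into $O(\log n)$ levels, which is why the paper only achieves height $O(\log^2 n/\eps)$ rather than the $O(\log n/\eps)$ you claim; this extra $\log$ factor also feeds into the sampling probabilities and bucket-size thresholds in the structure theorem, so it cannot be waved away. The DP itself, and the accounting of the splittable case via $\log Q = O(\log^{c+1} n)$, match the paper.
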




We then show how this result can be extended to design QPTAS for graphs of bounded treewidth.

\begin{theorem}\label{thm:treewidth}
For any $\eps > 0$, there is an algorithm that, for any instance of the unit demand CVRP on a graph $G$ of bounded treewidth $k$ outputs a $(1+\eps)$-approximate solution in time 
$n^{O(k^2\log^3 n/\eps^2)}$. For the splittable CVRP on graphs of bounded treewidth when $Q = n^{O(\log^c n)}$, 
the algorithm outputs a $(1 + \eps)$-approximate solution in time $n^{O(k^2\log^{2c + 3} n/\eps^2)}$.
\end{theorem}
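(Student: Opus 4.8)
The plan is to run the dynamic program of Theorem~\ref{thm:tree} over a tree decomposition of $G$ instead of over a tree, with the role played there by a single edge of the host tree now played by a bag, which is a vertex separator of size at most $k+1$. Concretely, I would first compute (in FPT or polynomial time, since $k$ is bounded) a tree decomposition $(T,\{B_t\})$ of $G$ of width $O(k)$, make it nice (leaf / introduce / forget / join nodes) and rooted with $O(n)$ nodes, and add the depot $r$ to every bag; this only raises the width to $O(k)$ and is absorbed into the $O(k^2)$ factor. For a node $t$ let $V_t$ be the set of vertices in the subtree rooted at $t$ and $G_t=G[V_t]$ the ``processed'' part. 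Since $B_t$ separates $V_t\setminus B_t$ from the rest of $G$, the intersection of any tour with $V_t\setminus B_t$ is a vertex-disjoint union of paths, each of whose two endpoints lies in $B_t$. Such a path, together with its two $B_t$-endpoints, I will call a \emph{segment}; each segment carries the demand it covers, and segments are the analogue of ``the portion of a tour below an edge $e$'' from the tree case.

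The first step is to port the structural/quantization lemmas from the proof of Theorem~\ref{thm:tree}. After losing a $(1+\eps)$ factor we may assume: the demands have been rounded; every tour visits $r$ exactly once (splitting a tour at a repeated visit to $r$ is cost-free and never increases a tour's load); and — the key point — that the loads of the segments crossing any fixed bag take only $O(\log n/\eps)$ distinct rounded values (for unit demand; $O(\log Q/\eps)$ for splittable). This is the treewidth analogue of the ``few load classes'' lemma of the tree case and is proved the same way, by a global rerouting/rounding argument that pays $(1+\eps)$ once. The relevant state at a bag $B_t$ is then a vector recording, for each unordered pair $\{u,v\}\subseteq B_t$ and each rounded load class $\ell$, how many segments of $G_t$ with endpoints $u,v$ and load $\ell$ there are; with the simplifications above one argues that these counts may be taken to be $n^{O(1)}$ and that there are $O(k^2\cdot\mathrm{polylog})$ coordinates, so the number of states is $n^{O(k^2\log^{O(1)}n/\eps^{O(1)})}$.

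The dynamic program fills a table $D_t[\,\cdot\,]$ bottom-up. At a leaf it is trivial. At an \emph{introduce} node we extend each configuration by the constantly many ways the new vertex can sit on the segments through the bag. At a \emph{forget} node, where a vertex $w$ leaves the bag, $w$ lies in $V_t\setminus B_t$ forever after, so each segment currently ending at $w$ must be glued (concatenated, loads added) to another segment ending at $w$; when such a gluing produces a segment both of whose endpoints are $r$ — equivalently at the root — it is a completed tour, at which point we verify that its total load is at most $Q$ and pay its cost. At a \emph{join} node we add the segment-count vectors of the two child configurations that agree on $B_t$. The answer is the minimum accumulated cost over root configurations in which all segments are closed, and a solution is reconstructed in the usual way; correctness follows from the standard exchange argument between configurations and actual solutions restricted to $G_t$, together with the structural step. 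The running time is (number of nodes)$\,\times\,$(table size)$^{O(1)}\,\times\,$(transition cost) $=n^{O(k^2\log^3 n/\eps^2)}$ for unit demand; for the splittable variant we cannot compress the load classes as aggressively, and tracking loads up to $Q$ at precision $\mathrm{poly}(\eps/\log n)$ replaces $\log^3 n$ by $\log^{2c+3}n$ in the exponent when $Q=n^{O(\log^c n)}$, exactly as in Theorem~\ref{thm:tree}.

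The main obstacle is the first step in the treewidth setting, i.e.\ the interaction of \emph{connectivity} with \emph{capacity}. Unlike in a tree, where a (simplified) tour crosses each edge at most twice, here a single tour can cross a bag $\Theta(n)$ times, so the DP only ``sees'' a multiset of segment types and must nonetheless be able to reassemble segments into genuine capacity-$Q$ tours each containing $r$. Making a count-based state suffice requires (i) the glue-at-forget mechanism above, which is sound precisely because each tour was forced to meet $r$ exactly once, so each tour corresponds to exactly one eventually-closed segment; and (ii) a proof that a near-optimal solution can be perturbed so that, \emph{simultaneously at every one of the $O(n)$ bags}, the crossing segments realize only polylogarithmically many distinct rounded loads — without this, a per-bag vector would need $Q^{\Omega(k)}$ coordinates and the bound would collapse. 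Establishing (ii) with only a global $(1+\eps)$ loss is the crux, and it is where essentially all the work of the proof lies.
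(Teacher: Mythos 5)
Your overall strategy --- a structure theorem forcing the segments crossing each bag to take only polylogarithmically many rounded loads, followed by a DP whose state counts segments per endpoint pair per load class --- is the same as the paper's, and your state-space accounting ($O(k^2)$ endpoint pairs times polylogarithmically many coordinates, giving $n^{O(k^2\log^3 n/\eps^2)}$ states) matches. But there is a genuine gap exactly at the step you flag as the crux and then dispose of with ``proved the same way'': the rerouting argument of Theorem~\ref{thm:tree} is \emph{not} depth-independent. It proceeds level by level down the host tree; the orphant tokens created at each level are covered by extra tours sampled with probability $\eps$ and assigned uniformly to one of the $h$ levels they visit, so a bucket must contain roughly $\Omega(h\log n/\eps)$ partial tours before the Chernoff bound applies, and the number of groups (hence of distinct rounded loads) per big bucket must be on the order of $h/\eps$ for the sampled tours to outnumber the last group. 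In the tree case this is precisely why the paper first proves the height-reduction theorem (Theorem~\ref{thm:height-red}); in the treewidth case the paper instead takes a Bodlaender--Hagerup decomposition that is binary, has width $3k+2$, and --- crucially --- has depth $O(\log n)$. A generic nice tree decomposition, which is what you construct, can have depth $\Theta(n)$, in which case both the big-bucket threshold and the per-bucket group count become polynomial, the DP state space becomes exponential, and the extra-tour cost analysis fails. You need either to invoke a balanced (logarithmic-depth) decomposition or to prove a depth-reduction lemma for tree decompositions; neither appears in your write-up.

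A second, smaller omission: the swap step of the structure theorem replaces the bottom part of a tour at a bag with a smaller bottom part of another tour, and for this to preserve connectivity the replacement must enter and exit the bag through the same vertices. The grouping and swapping must therefore be carried out separately for each (entry, exit) pair, which is why the paper's buckets are indexed by $(s,b_i,x,z)$; your DP state records endpoint pairs but your structural argument must as well. Relatedly, you should justify why a near-optimal solution can be assumed to cross each $C_s$ a bounded number of times per tour (the paper combines at most one partial tour per child bag with a path of length $O(k)$ through the bag, and pays $Q^{O(k)}(3k)^{O(k)}$ per transition for this). These points are fixable, but as written the proof of the key lemma is absent rather than merely sketched.
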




As a consequence of this and using earlier results of embedding of graphs of bounded doubling dimensions or bounded highway dimensions into graphs of low treewidth we obtain approximation schemes for CVRP on those graph classes.

\begin{theorem}\label{thm:DD}
For any $\eps > 0$ and fixed $D > 0$, there is a an algorithm that, given an instance of the splittable CVRP with 
capacity $Q = n^{\log^c n}$ on a graph of doubling dimension $D$, finds a $(1 + \eps)$-approximate solution in time $n^{O(D^D \log^{2c + D + 3}n/\eps^{D+2})}$.
\end{theorem}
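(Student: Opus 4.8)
The plan is to reduce the problem on doubling metrics to the bounded-treewidth case and invoke Theorem \ref{thm:treewidth}. The standard tool here is the probabilistic embedding of low-doubling metrics into graphs of low treewidth with small expected distortion: for a metric of doubling dimension $D$ on $n$ points and any parameter $\delta > 0$, there is a randomized construction (following Talwar-type hierarchical decompositions, as refined by Bartal--Gottlieb--Krauthgamer and the sparse-cover / padded-decomposition machinery) that produces a graph $H$ of treewidth $(D/\delta)^{O(D)}\log n$ — more precisely polynomial in $\log n$ with the dependence on $D$ in the exponent — such that distances in $H$ dominate those in $G$ and $\Ex{d_H(u,v)} \le (1+\delta)\, d_G(u,v)$ for every pair $u,v$. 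First I would state this embedding lemma precisely with the treewidth bound $O(D^{D}\log^{D+1} n / \delta^{D})$ (or whatever exact form the downstream calculation needs), citing the relevant source.

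Next I would argue that CVRP cost is preserved under such an embedding. Because every tour in any solution is a closed walk on at most $Q$ clients, its cost is a sum of pairwise distances along the tour; by linearity of expectation, the expected cost in $H$ of a fixed optimal solution $\mathrm{SOL}^*_G$ for $G$ is at most $(1+\delta)\,\cost_G(\mathrm{SOL}^*_G) = (1+\delta)\OPT_G$. Since $d_H$ dominates $d_G$, any solution for $H$ projects back to a solution for $G$ of no greater cost (we keep the same assignment of clients to tours; the capacity constraint is untouched since it only involves demands, not distances). Running the QPTAS of Theorem \ref{thm:treewidth} on $H$ with accuracy $\delta$ yields a solution of cost at most $(1+\delta)\OPT_H$; taking expectations and projecting back gives a solution for $G$ of expected cost at most $(1+\delta)^2\OPT_G \le (1+\eps)\OPT_G$ for $\delta = \Theta(\eps)$. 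To obtain a deterministic guarantee (or at least a Las Vegas one) I would either repeat $O(\log n)$ times and take the best — using Markov's inequality so that each run is within $(1+\eps)$ with constant probability — or derandomize the decomposition, which is known to be possible for these hierarchical partitions at a $\poly(n)$ overhead.

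Finally I would do the bookkeeping on the running time. Theorem \ref{thm:treewidth} on a graph of treewidth $k$ with capacity $Q = n^{O(\log^c n)}$ runs in time $n^{O(k^2 \log^{2c+3} n / \delta^2)}$; substituting $k = O(D^{D}\log^{D+1} n/\delta^{D})$ and $\delta = \Theta(\eps)$ gives an exponent that is $D^{O(D)} \log^{2(D+1)} n \cdot \log^{2c+3} n / \eps^{2D+2}$ up to constants, which I would then massage to match the stated bound $n^{O(D^{D}\log^{2c+D+3} n/\eps^{D+2})}$ — this requires being slightly more careful about the exact treewidth/distortion trade-off in the embedding than the crude bound above (one wants treewidth roughly $D^{O(D)}\log n$ with only a $\poly(1/\eps)$, not $(1/\eps)^{D}$, blow-up, or to balance the two exponents of $\log n$), so the main obstacle is invoking the sharpest available embedding and tracking the polylog and $\eps$ exponents so they collapse to the claimed form. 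The conceptual content is entirely in the embedding; everything else is linearity of expectation plus substitution.
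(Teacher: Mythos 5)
Your proposal follows exactly the paper's route: embed the doubling metric into a bounded-treewidth host graph via Talwar's probabilistic embedding (Lemma \ref{lem:doubling-embed}, giving treewidth $k \le 2^{O(D)}\lceil(4D\log\Delta/\eps)^D\rceil$), run the algorithm of Theorem \ref{thm:treewidth} as a blackbox, and lift the solution back using the expected distortion $1+\eps$ together with the fact that host distances dominate original distances. The exponent bookkeeping you flag as delicate is treated no more carefully in the paper, which simply substitutes this treewidth bound into the $n^{O(k^2\log^{2c+3}n/\eps^2)}$ runtime and asserts the stated bound.
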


As an immediate corollary, this implies an approximation scheme for CVRP on Euclidean metrics on $\mathbb{R}^2$ in time $n^{O(\log^{10}n/\eps^{9})}$
which improves on the run time of $n^{\log^{O(1/\epsilon)}n}$ of QPTAS of \cite{Das-Mathieu}. 

\begin{theorem}\label{thm:HD}
For any $\eps > 0, \lambda > 0$ and $D > 0$, there is a an algorithm that, given a graph with highway dimension $D$ with violation $\lambda$ as an instance of the splittable CVRP with capacity $Q = n^{\log^c n}$, finds a solution whose cost is at most $(1 + \eps)$ times the optimum in time $n^{O( \log^{2c + 3 + \log^2(\frac{D}{\eps \lambda})\cdot \frac{1}{\lambda}}n/\eps^2)}$.
\end{theorem}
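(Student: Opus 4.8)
The plan is to reduce the highway-dimension case to the bounded-treewidth case (Theorem~\ref{thm:treewidth}) via a known metric embedding, losing only a $(1+\eps)$ factor in the cost and paying a corresponding blow-up in the treewidth parameter. The key external ingredient is the result of Feldmann, Fung, K\"onemann, and Post that a graph of highway dimension $D$ (with violation $\lambda$) embeds, with $(1+\eps)$ distortion in expectation, into a graph of treewidth $k = (\log n)^{O(\log^2(D/(\eps\lambda))\cdot 1/\lambda)}$ — more precisely, there is a distribution over such low-treewidth graphs dominating the original metric whose expected distortion is $1+\eps$. Since CVRP cost is linear in the pairwise distances used by the tours, an optimal CVRP solution on the original graph maps to a solution of expected cost at most $(1+\eps)\OPT$ on the sampled tree-decomposable graph, and conversely any solution on the embedded graph pulls back to a solution on $G$ of no greater cost.

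\textbf{Steps, in order.} First, invoke the embedding theorem to obtain (a polynomial-time sample from) a graph $G'$ of treewidth $k = (\log n)^{O(\log^2(D/(\eps\lambda))/\lambda)}$ with $\dist_G(u,v) \le \dist_{G'}(u,v)$ for all $u,v$ and $\Ex{\dist_{G'}(u,v)} \le (1+\eps)\dist_G(u,v)$. Second, transfer the instance: keep the same depot $r$, demands $d(v)$, and capacity $Q$; note $Q = n^{\log^c n}$ is preserved. Third, run the splittable-CVRP algorithm of Theorem~\ref{thm:treewidth} on $G'$, which produces a $(1+\eps)$-approximate solution on $G'$ in time $n^{O(k^2 \log^{2c+3} n/\eps^2)}$. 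Fourth, pull the returned tours back to $G$: each tour visits the same multiset of clients and, since distances only shrank, its cost in $G$ is no larger. Fifth, do the expectation bookkeeping: $\Ex{\cost_{G'}(\text{tours})} \le (1+\eps)\Ex{\OPT_{G'}} \le (1+\eps)^2 \OPT_G$, and rescale $\eps$ by a constant to get $(1+\eps)$; a standard Markov-inequality argument (or repeating $O(1/\eps \cdot \log(1/\delta))$ times and keeping the best, evaluating costs in $G$) converts the expected guarantee into a high-probability one, which can be derandomized if the embedding distribution has small support or via the method of conditional expectations. Finally, substitute the value of $k$ into the running-time bound: $k^2 = (\log n)^{O(\log^2(D/(\eps\lambda))/\lambda)}$, and since $(\log n)^{t} = n^{t \log\log n / \log n} = n^{o(t)}$, the exponent collapses to $O(\log^{2c+3+\log^2(D/(\eps\lambda))/\lambda} n/\eps^2)$, matching the claimed $n^{O(\log^{2c+3+\log^2(\frac{D}{\eps\lambda})\cdot\frac{1}{\lambda}}n/\eps^2)}$.

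\textbf{Main obstacle.} The only real subtlety is the randomization: Theorem~\ref{thm:treewidth} is a deterministic approximation scheme, but the highway-to-treewidth embedding is randomized (a distribution of dominating graphs), so the composed algorithm natively gives only a bound on expected cost. Turning this into a genuine $(1+\eps)$-approximation requires care — the clean fix is that since the embedded distances dominate the true ones, $\cost_G$ of the pulled-back solution is a nonnegative random variable with expectation $\le (1+O(\eps))\OPT_G$, so by Markov a single sample is within $(1+O(\eps))\OPT_G$ with constant probability, and $O(\log(1/\delta))$ independent repetitions (taking the cheapest solution measured in $G$) succeed with probability $1-\delta$; alternatively one cites the standard fact that these particular embeddings can be derandomized. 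A second, minor point is checking that the embedding preserves the regime $Q = n^{\log^c n}$ and that client demands and the depot carry over unchanged, which they do since the vertex set is (essentially) preserved by the embedding construction. Everything else is a direct substitution into Theorem~\ref{thm:treewidth}.
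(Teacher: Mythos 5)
Your proposal matches the paper's proof: both reduce to Theorem \ref{thm:treewidth} via the Feldmann et al.\ embedding into treewidth $(\log\Delta)^{O(\log^2(D/(\eps\lambda))/\lambda)}$, pull the solution back to $G$, and substitute $k$ into the runtime $n^{O(k^2\log^{2c+3}n/\eps^2)}$. Your extra care about converting the expected distortion guarantee into a high-probability one (via Markov and repetition) is a point the paper glosses over, but it is the same argument, not a different route.
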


\subsection{Related Works}
CVRP generalizes the classic TSP problem (with $Q=n$).
For general metrics, Haimovich et al. \cite{Haimovich-Kan} considered a simple heuristic, called tour partitioning, which starts from a TSP tour and then splits the tour into tours of size at most $Q$ (by making back-and-forth trips to $r$) and showed that it is a $(1 + (1 - 1/Q)\alpha)$-approximation for splittable CVRP, where $\alpha$ is the approximation ratio for TSP. Essentially the same algorithm implies a $(2 + (1 - 2/Q)\alpha)$-approximation for unsplittable CVRP \cite{ALTINKEMER1987149}. These stood as the best known bounds until recently, when Blauth et al. \cite{Vygen} showed that given a TSP approximation $\alpha$, there is an $\eps > 0$ such that there is an $(\alpha + 2 \cdot (1 - \eps))$-approximation algorithm for CVRP. For $\alpha = 3/2$, they showed $\eps > 1/3000$. They also showed a $(\alpha + 1 - \eps)$-approximation algorithm for unit demand CVRP and splittable CVRP. 

For the case of trees, Labbé et al. \cite{Labbe-Mercure} showed splittable CVRP is NP-hard and Golden et al. \cite{Golden-Wong} showed unsplittable version is APX-hard and hard to approximate better than 1.5. For splittable CVRP (again on trees), Hamaguchi et al. \cite{Hamaguchi-Katoh} defined a lower bound for the cost of the optimal solution and gave a 1.5 approximation with respect to the lower bound. Asano et al. \cite{stoc/AsanoKTT97} improved the approximation to $(\sqrt{41} - 1)/4$ with respect to the same lower bound and also showed the existence of instances whose optimal cost is exactly 4/3 times the lower bound. Becker \cite{Becker18} gave a 4/3-approximation with respect to the  lower bound. Becker and Paul \cite{Becker-Paul-Bricriteria} showed a $(1, 1+ \eps)$-bicriteria polynomial-time approximation scheme for splittable CVRP in trees, i.e. a PTAS but the capacity of every tour is up to $(1+\eps)Q$.

Das and Mathieu \cite{Das-Mathieu} gave a quasi-polynomial-time approximation scheme (QPTAS) for CVRP in the Euclidean plane ($\mathbb{R}^2$). A PTAS for when $Q$ is $O(\log n/\log \log n)$ or $Q$ is $\Omega(n)$ was shown by Asano et al. \cite{stoc/AsanoKTT97}. A PTAS for Euclidean plane $\mathbb{R}^2$ for all moderately large values of $Q \le 2^{\log^\delta n}$, where $\delta = \delta(\eps)$, was shown by Adamaszek et al \cite{AdamaszekCL09}, building on the work of Das and Mathieu \cite{Das-Mathieu}, and using it as a subroutine. For high dimensional Euclidean spaces $\mathbb{R}^d$, Khachay et al. \cite{Khachay-PTAS} showed a PTAS when $Q$ is $O(\log^{1/d}n)$. For graphs of bounded doubling dimension, Khachay et al. \cite{Khachay-moderatenumer} gave a QPTAS when the number of tours is $\polylog(n)$ and Khachay et al.  \cite{Khachay-moderatecapacity} gave a QPTAS when $Q$ is $\polylog(n)$.

The following results are all for when $Q$ is a fixed. CVRP is APX-hard in general metrics and is polynomial-time solvable on trees. There exists a PTAS for CVRP in the Euclidean plane ($\mathbb{R}^2$) (again for when $Q$ is fixed) as shown by Khachay et al. \cite{Khachay-PTAS}. A PTAS for planar graphs was shown by Becker et al. \cite{PlanarPTAS-Klein} and  a QPTAS for planar and bounded-genus graphs was shown by Becker et al. \cite{QPTAS-Planar-boundedgenus}.  A PTAS for graphs of bounded highway dimension and an exact algorithm for graphs with treewidth with running time $O(n^{\text{tw}Q})$ was shown by Becker et al \cite{Becker-boundedhighway}.  Cohen-Addad et al. \cite{Klein-minorfree} showed an efficient PTAS for graphs of bounded-treewidth, an efficient PTAS for bounded highway dimension, an efficient PTAS for bounded genus metrics and a QPTAS for minor-free metrics. Again, note that these results are
all under the assumption that $Q$ is fixed.

So aside from the QPTAS of \cite{Das-Mathieu} for $\mathbb{R}^2$ and subsequent slight generalization of \cite{AdamaszekCL09} no approximation scheme is known for CVRP on any non-trivial metrics for arbitrary values of $Q$ (even for trees). Standard ways of
extending a dynamic programs for Euclidean metrics to bounded doubling metrics do not seem to work to extend the results of
\cite{Das-Mathieu} to doubling metrics in quasi-polynomial time.

\subsection{Overview of our technique}
We start by presenting a QPTAS for CVRP on trees and then extend the technique to graphs of bounded
treewidth.
Our main technique to design approximation scheme for CVRP is to show the existence of a near optimum solution where the sizes of the partial tours going past any node of the tree can be partitioned into
only poly-logarithmic many classes. This will allow one to use dynamic programming to find a low cost solution. A simple rounding of tour sizes to some threshold values (e.g. powers of $(1+\eps)$) only works (with some care) to achieve a
bi-criteria approximation as any under estimation of tour sizes may result in tours that are violating the capacities. To achieve a true approximation (without capacity violation) we show how we can break the tours of an optimum solution into "top" and "bottom" parts (at any node $v$) and then swap the bottom parts of tours with the bottom parts of other tours which are smaller, and then "round them up" to the nearest value from a set of poly-logarithmic threshold values. This swapping creates enough room to do the "round up" without violating the capacities. However, this will cause a small fraction of the vertices to become "not covered", we call them  orphant nodes. We will show how we can randomly choose some tours of the optimum and add them back to the solution (at a small extra cost) and use these extra tours (after some modifications) to cover the orphant nodes. There are many details along the way. For instance, we treat the demand of each node as a token to be picked up by a tour. To ensure partial
tour sizes are always from a small (i.e. poly-logarithmic) size set, we add extra tokens over the nodes. 
Also, for our QPTAS to work we need to bound the height of the tree. We show how we can reduce the height of the tree to poly-logarithmic at a small loss using a height reduction lemma that might prove useful for other vehicle routing problems.

The technique of QPTAS for trees then can be extended to graphs of bounded treewidth and also graphs of bounded doubling dimension; prove the existence of a similar near optimum solution and find one using dynamic program. Or one can use the known results for embedding of graphs of bounded doubling dimension into graphs of small treewidth.

\section{Preliminaries}
Recall that an instance $\calI$ to CVRP is a graph $G = (V,E)$, where $w(e)$ is the cost or weight of edge $e \in E$ and $Q$ is the capacity of the vehicle. Each tour $\calT$ is a walk over some nodes of $G$. We say $\cal T$ "covers" node $v$ if it serves the demand at node $v$. For the unit demand CVRP, it is easier to think of the demand of each node $v$ as being a token on $v$ that must be picked up by a tour. We can generalize this and assume each node $v$ can have multiple tokens and the total number of tokens a tour can pick is most $Q$ (possibly
from the same or different locations). Note that each tour might visit vertices without picking any token there. 
The goal is to find a collection of tours of minimum total cost such that each token is picked up (or say covered) by some tour. 
We use $\OPT(G)$ or simply $\OPT$ to refer to an optimum solution of $G$, and $\opt$ to denote the value of it. Fix an optimal solution $\OPT$. For any edge $e$ let $f(e)$ denote the number of tours travelling edge $e$ in $\OPT$;
so $\opt=\sum_e w(e)\cdot f(e)$.

 First we show the demand of each node is bounded by a function of $Q$. And then,
using standard scaling and rounding and at a small loss, we show we can assume the edge weights are polynomially bounded (in $n$).   Given an instance for splittable CVRP with $n$ nodes and capacity $Q$, it is possible that the demand $d(v) > Q$ for some node $v$. From the work of Adamaszek et al \cite{AdamaszekCL09}, we will show how we can assume that the demand at each node $v$ satisfies $1 \le d(v) < nQ$. Adamaszek et al \cite{AdamaszekCL09} defined a \emph{trivial} tour to be a tour which picks up tokens from a single node in $T$ and a tour is \emph{non-trivial} if the tour picks up tokens from at least two nodes in $T$. They defined a \emph{cycle} to be a set of tours $t_1,\ldots, t_m (m \ge 2)$ and a set of nodes $\ell_1, \ell_2, \ldots, \ell_m, \ell_{m+1} = \ell_1$ such that each tour $t_i$ covers locations $\ell_i$ and $\ell_{i+1}$ and the origin is not considered as a node in $\ell_1, \ldots, \ell_m$.  They showed in Lemma 1 of \cite{AdamaszekCL09} that there is an optimal solution in which there are no cycles. Since there are no 2-cycles, there are no two tours which cover the same pair of nodes. So there is an optimal solution such that there are at most $n$ non-trivial tours (as argued in \cite{AdamaszekCL09}). So putting aside trivial tours (each picking up $Q$ tokens at a node), we can assume we have a total of at most $nQ$ tokens and in particular each node has at most this many tokens. Without loss of generality, we assume we have removed all trivial tours and so there is a total of at most $nQ$ demands.
  
  We can also assume there is at most one tour in $\OPT$ covering at most $Q/2$ demand. If there are at least two tours $\calT_1$ and $\calT_2$ covering less than $Q/2$ demand, they can be merged into a single tour at no additional cost. Since the total demand is at most $nQ$, the total number of tours in the optimal solution is at most $nQ/(Q/2) = 2n$. 
  
Now we scale edge weights to be polynomially bounded.
Observe that each tour in $\OPT$ traverses each edge $e$ at most once in each direction, so at most twice. 
Suppose we have guessed the largest edge weight that belongs to $\OPT$ (by enumerating over all possible such guesses) and have removed any edge with weight larger. Let $W=\max_{e\in E} w(e)$ be
the largest (guessed) edge in $\OPT$.
Suppose we build instance $\calI'$ by rounding up the weight of each edge $e$ to be maximum of
$w(e)$ and $\eps W/4n^3$. Since there are a total of at most $2n$ tours in $\OPT$ and each edge
is traversed at most twice by each tour, and there are at most $n^2$ edges,
the cost of solution $\OPT$ in $\calI'$ is at most
$\opt+4n\cdot n^2\cdot\frac{\eps W}{4n^3} \leq (1+\eps )\opt$. Note that the ratio of maximum to minimum edge weight in $\calI'$ is $4n^3/\eps$, but the edge weights are not necessarily integer.
Now suppose we scale the edge weights so that the minimum edge weight is 1 and the maximum edge weight is $4n^3/\eps$ and
then scale them all by $1/\eps$, and then round each one up to the nearest integer.
Note that by this rounding to the nearest integer, 
the cost of each edge is increased by a factor of at most $1+\eps$, so the cost of an optimum solution in
the new instance is at most $(1+\eps)(1+\eps)=(1+O(\eps))$ factor larger than before rounding while the edge weights are all polynomially bounded integers.
So from now on we assume we have this property for the given instance at a small loss.

We will use the following two simplified version of the Chernoff Bound \cite{Mitzenmacher} in our analysis. 
\begin{lemma}[\bf Chernoff bound]
Let $Y = \sum_{i = 1}^n Y_i$ where $Y_i = 1$ with probability $p_i$ and 0 with probability $1- p_i$, and all $Y_i$'s are independent. With $\mu = \Ex{Y}$,  $\Prob{Y > 2\mu} \le e^{-\mu/3}$
and $\Prob{Y < \frac{\mu}{2}} \le e^{-\mu/8}.$
\end{lemma}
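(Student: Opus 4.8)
The plan is to use the standard exponential-moment (Bernstein--Chernoff) method: bound the moment generating function $\Ex{e^{tY}}$, apply Markov's inequality to the nonnegative variable $e^{tY}$, and optimize the free parameter $t$, finally substituting the numerical constants to obtain the stated clean forms.

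For the upper tail, I would fix $t>0$. By independence, $\Ex{e^{tY}}=\prod_{i=1}^n \Ex{e^{tY_i}}=\prod_{i=1}^n\bigl(1+p_i(e^t-1)\bigr)$, and since $1+x\le e^x$ for all real $x$, each factor is at most $e^{p_i(e^t-1)}$, so $\Ex{e^{tY}}\le e^{\mu(e^t-1)}$. Markov's inequality then gives, for any $a$, $\Prob{Y\ge a}=\Prob{e^{tY}\ge e^{ta}}\le e^{-ta}\Ex{e^{tY}}\le e^{\mu(e^t-1)-ta}$. Setting $a=2\mu$ and choosing $t=\ln 2$ — the unconstrained minimizer of $\mu(e^t-1)-2\mu t$ — yields $\Prob{Y\ge 2\mu}\le e^{\mu(1-\ln 4)}=(e/4)^{\mu}$. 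Since $1-\ln 4=-0.386\ldots\le-1/3$, we conclude $\Prob{Y>2\mu}\le\Prob{Y\ge 2\mu}\le e^{-\mu/3}$.

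For the lower tail, the argument is symmetric with a negative parameter; writing $t=-s$ with $s>0$, the same termwise estimate gives $\Ex{e^{-sY}}\le e^{\mu(e^{-s}-1)}$, and Markov's inequality on $e^{-sY}$ gives $\Prob{Y\le a}=\Prob{e^{-sY}\ge e^{-sa}}\le e^{\mu(e^{-s}-1)+sa}$. Taking $a=\mu/2$ and $s=\ln 2$ (the minimizer of $\mu(e^{-s}-1)+s\mu/2$) gives $\Prob{Y\le \mu/2}\le e^{\mu(\ln 2-1)/2}=\bigl(\sqrt{2/e}\bigr)^{\mu}$. Since $\tfrac12(\ln 2-1)=-0.153\ldots\le-1/8$, this yields $\Prob{Y<\mu/2}\le e^{-\mu/8}$, as claimed.

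There is essentially no real obstacle here; the only points requiring mild care are the termwise bound $\Ex{e^{tY_i}}=1+p_i(e^t-1)\le e^{p_i(e^t-1)}$ (valid for every real $t$), the use of independence to factor the moment generating function, and the verification of the two numerical inequalities $1-\ln 4\le-1/3$ and $\tfrac12(\ln 2-1)\le-1/8$ that convert the optimized bounds $(e/4)^\mu$ and $(\sqrt{2/e})^\mu$ into the stated forms $e^{-\mu/3}$ and $e^{-\mu/8}$. Since $t=\pm\ln 2$ are exactly the stationary points of the relevant exponents, no further optimization is needed.
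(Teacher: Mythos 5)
The paper states this lemma without proof, simply citing \cite{Mitzenmacher}, and your argument is exactly the standard moment-generating-function (Chernoff) derivation found in that reference, so it is the same approach and it is correct. Your computations check out: the factorized bound $\prod_i(1+p_i(e^t-1))\le e^{\mu(e^t-1)}$, the optimizing choices $t=\pm\ln 2$, and the numerical inequalities $1-\ln 4\le -\tfrac13$ and $\tfrac12(\ln 2-1)\le -\tfrac18$ all hold, yielding the stated bounds $e^{-\mu/3}$ and $e^{-\mu/8}$.
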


\section{QPTAS for CVRP on Trees}
In this section we prove Theorem \ref{thm:tree}.
We will first prove a structure theorem which describes structural properties of a near-optimal solution. We will leverage these structural properties and use dynamic programming to compute a near-optimal solution. 

\subsection{Structure Theorem}

Our goal in this section is to show the existence of a near optimum solution (i.e. one with cost $(1 + O(\eps))\opt$) with
certain properties which makes it easy to find one using dynamic programming.
More specifically we show we can modify the instance $\calI$ to instance $\calI'$ on the same tree $T$ where each node has $\geq 1$
tokens (so possibly more than 1) and change $\OPT$ to a solution $\OPT'$ on $\calI'$ where cost of $\OPT'$ is at most
$(1+O(\eps))\opt$. Clearly the tours of $\OPT'$ form a capacity respecting solution of $\calI$ as well (of no more cost).

A starting point in our structure theorem is to show that given input tree $T$, for any $\epsilon>0$, we can build another tree $T'$ of height $O(\log^2 n/\epsilon)$ such that the cost of an
optimum solution in $T'$ is within $1+\epsilon$ factor of the optimum solution to $T$. We can lift a near-optimum solution to $T'$ into a near-optimum solution of $T$. We will show the following in Subsection \ref{sec:heightreduction}

\begin{theorem}\label{thm:height-red}
Given a tree $T$ as an instance of CVRP and for any fixed $\epsilon>0$, one can build a tree $T'$ with height $\delta\log^2 n/\epsilon$, for some fixed $\delta>0$, such that $\opt(T')\leq \opt(T)\leq (1+\epsilon)\opt(T')$.
\end{theorem}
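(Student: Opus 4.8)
The plan is to repeatedly contract "light" subtrees hanging off long paths, paying for the contraction by doubling (or $O(1/\epsilon)$-multiplying) a small number of carefully chosen edges whose total weight is an $\epsilon$-fraction of $\opt$. First I would root $T$ at the depot $r$ and recall the standard lower bound $\opt(T) \geq 2\sum_{e} w(e)$ (every edge is traversed at least twice, once in each direction, since every subtree contains at least one token that must be picked up and returned to $r$). This gives us a budget: any modification whose extra cost is $O(\epsilon)\sum_e w(e)$ is affordable. The core idea is a \emph{spine decomposition}: decompose the edges of $T$ into a laminar family of vertical paths (heavy paths, in the sense of Sleator–Tarjan) so that any root-to-leaf path crosses only $O(\log n)$ of them. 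Then, within each such path, I would identify "break points" that split it into $O(\log n/\epsilon)$ segments, each of total weight at most an $\epsilon$-fraction of the weight above it; the classical argument shows there are only $O(\log n /\epsilon)$ such segments per path because weights grow geometrically between consecutive break points and edge weights are polynomially bounded (which we arranged in the Preliminaries). Combining, every root-to-leaf path has at most $O(\log^2 n/\epsilon)$ "marked" edges.

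Next I would construct $T'$ by \emph{short-cutting}: contract each maximal unmarked path segment into a single edge whose weight equals the total weight of the segment, and attach the side-subtrees hanging off that segment directly to the top endpoint of the contracted edge (i.e., move them up to the nearest marked ancestor), \emph{without} changing their internal structure. This is where the height bound comes from: the resulting tree has depth $O(\log^2 n/\epsilon)$ because only marked edges contribute to depth. The key accounting step is to bound the increase in optimum cost. A tour in $\OPT(T)$ is lifted to $T'$ by the obvious map; the only extra cost incurred is when a tour enters a moved side-subtree — it must now travel from the marked ancestor instead of from the original attachment point, an extra round-trip along the (now-contracted) segment. But by the geometric-growth property of break points, the weight of each such segment is at most $\epsilon$ times the weight on the path from $r$ down to it, and a charging argument (each tour that visits a side-subtree already pays at least twice the weight of the path from $r$ to that subtree's root) shows the total extra cost is $O(\epsilon)\opt(T)$. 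Conversely, any solution in $T'$ maps back to $T$ at no extra cost since $T$ has \emph{more} structure — contracting only merged edges, so $\opt(T) \leq \opt(T')$ is actually the easy direction (edge weights along contracted segments are preserved in total, and expanding a contracted edge back into a path of the same total weight does not increase any tour's cost). Hence $\opt(T') \leq \opt(T) \leq (1+\epsilon)\opt(T')$ after adjusting $\delta$.

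The main obstacle I anticipate is the charging argument for the extra cost, specifically ensuring that the side-subtrees moved up to a common marked ancestor do not collectively overload that accounting — i.e., that we are not double-charging the same "path weight from $r$" to many different tours or many different moved subtrees. The fix is to charge against $f(e)$, the number of tours crossing each edge $e$: the extra cost contributed near a segment $S$ is at most (weight of $S$) times (number of tours entering subtrees moved across $S$), and the latter is at most $f(e)$ for the edge $e$ just below $S$, so summing over all segments the total extra cost is at most $\sum_e (\epsilon \cdot \text{weight from } r \text{ to } e)\cdot f(e)$, which telescopes into $O(\epsilon)\sum_e w(e) f(e) = O(\epsilon)\opt$. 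A secondary subtlety is that contracting segments could in principle create a node of very high degree (all side-subtrees of a long segment piling onto one vertex), but this does not affect the cost or the height bound, only the branching factor, which is harmless for the subsequent dynamic program; if desired one can re-expand into a binary tree with zero-weight edges at no cost. Finally I would need to double-check that the polynomial bound on edge weights (ratio $\mathrm{poly}(n)$) is what makes the number of geometric break points $O(\log n/\epsilon)$ rather than unbounded — this is the one place the Preliminaries' scaling step is essential.
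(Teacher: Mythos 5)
Your construction is essentially the paper's: a heavy-path (D-path) decomposition into $O(\log n)$ levels, geometrically spaced break points giving $O(\log n/\eps)$ segments per path (using the polynomially bounded weights), and "up-pushes" that reattach side-subtrees to the nearest anchor above. The height bound and the decomposition are fine. The problems are in the cost accounting.

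First, you have the two inequalities swapped. Moving a side-subtree up to the anchor $a_i$ makes it \emph{cheaper} to reach in $T'$, so lifting a solution from $T$ to $T'$ costs at most as much and $\opt(T')\le\opt(T)$ is the easy direction. The direction that needs the $\eps$-argument is lifting a solution from $T'$ back to $T$: there the tour must pay the extra round-trip from $a_i$ down the original segment to the old attachment point. Your text attributes the "extra round-trip along the (now-contracted) segment" to the $T\to T'$ lift and declares the $T'\to T$ lift free; both claims are backwards. This is not merely a labelling slip, because it determines which map must be analysed carefully.

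Second, and more seriously, the charging argument you give for the $(1+\eps)$ factor does not work. You bound the extra cost segment by segment, getting $\sum_S \eps\, W(r,e_S) f(e_S)$, and assert this "telescopes" to $O(\eps)\sum_e w(e)f(e)$. It does not: $W(r,e_S)f(e_S)$ is bounded by $\sum_{e'\preceq e_S} w(e')f(e')$ over all ancestor edges, and each edge $e'$ is charged once for \emph{every} segment below it, so the sum can exceed $\opt$ by a $\mathrm{poly}(n)$ factor (already a unit-weight path shows $\sum_e W(r,e)f(e)=\Theta(n)\sum_e w(e)f(e)$). The missing idea is that, for a fixed tour and a fixed path $P$ of the decomposition, only the \emph{deepest} segment the tour reaches on $P$ contributes any extra cost: for every higher segment the lifted tour traverses the full segment in $T$ anyway (to reach the deeper anchor), so it passes the original attachment points for free. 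The extra cost on $P$ is therefore at most $2\eps$ times the weight of $P$ from its top down to the deepest anchor reached, which the tour already pays; summing over the edge-disjoint paths the tour uses gives extra cost at most $O(\eps)$ times that tour's cost, and summing over tours gives $\opt(T)\le(1+O(\eps))\opt(T')$. Without this per-tour, deepest-segment-only observation your bound loses at least a $\log n/\eps$ factor.
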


So for the rest of this section we assume our input tree has height $O(\log^2 n/\eps)$ at a loss of (yet another) $1+\eps$ in 
approximation ratio.

\subsubsection{Overview of the ideas}
Let us give a high level idea of the Structure theorem. In order to do that it is helpful to start from a simpler task of developing a bi-criteria approximation scheme\footnote{Note that \cite{Becker-Paul-Bricriteria} already presents a bicriteria PTAS for CVRP on trees.
We present a simple bi-criteria QPTAS here as it is our starting point towards a true approximation scheme.}

Let $\calT$ be a tour in $\OPT$ and $v$ be a node in $T$. 
The \textbf{coverage} of $\calT$ with respect to $v$ is the number of tokens picked by $\calT$ in the subtree $T_v$.

Suppose a tour $\calT$ visits node $v$. We refer to the subtour of $\calT$ in $T_v$ (subtree rooted at $v$) as a partial tour.


{\bf A Bicriteria QPTAS:} For simplicity, assume $T$ is binary (this is not crucial in the design of the DP). 
A subproblem would be based on a node $v\in T$ and the structure of partial tours going into $T_v$ to pick up tokens in $T_v$
at minimum cost.
In other words, if one looks at the sections of tours of an optimum solution that cover tokens of $T_v$, what are the capacity profiles of those sections? For a vector $\tvec$ with $Q$ entries, where $\tvec_i$ (for each $1 \le i \le Q)$ is the number of partial tours going down $T_v$ which pick $i$ tokens (or their capacity for that portion is $i$), entry $\abold[v, \tvec]$ would store the minimum cost of covering $T_v$ with (partial) tours whose capacity profile is given by $\tvec$. It is not hard to fill this table's entries using a simple recursion based on the entries of children of $v$. So one can solve the CVRP problem "exactly" in time $O(n^{Q+1})$. We can reduce the time complexity by storing "approximate" sizes of the partial tours for each $T_v$.
So let us "round" the capacities of the tours into $O(\log Q/\epsilon)$  buckets, where bucket $i$ represents capacities that are in
$[(1+\eps)^{i-1},(1+\eps)^i)$. More precisely, consider threshold-sizes $S = \{\sigma_1, \ldots, \sigma_\tau \}$ where: for $1\leq i\leq 1/\epsilon$, $\sigma_i=i$, and for each value $i>1/\epsilon$: $\sigma_i = \sigma_{i-1}(1 + \eps)$ and $\sigma_\tau = Q$. Note 
that $|S|=O(\log Q/\epsilon)=O(\log n/\eps)$. Suppose we allow each tour to pick up to $(1+\epsilon)Q$
tokens. If it was the case that each partial tour for $T_v$ (i.e. part of a tour that enters/exits $T_v$) has a capacity that is also threshold-size (this may not be true!) then the DP table entries
would be based on vectors $\tvec$ of size $O(\log n/\epsilon)$, and the run time would be quasi-polynomial. One has to note that
for each subproblem of the optimum at a node $v$ with children $u,w$, even if the tour sizes going down $T_v$
were of threshold-sizes, the partial tours at $T_u$ and $T_w$ do not necessarily satisfy this property.

To extend this to a proper bicriteria $(1+\epsilon)$-approximation we can define the thresholds based on powers of 
$1+\epsilon'$ where $\epsilon'=\frac{\epsilon^2}{\log^2 n}$ instead: let 
$S = \{\sigma_1, \ldots, \sigma_\tau \}$ where $\sigma_i=i$ for $1\leq i\leq 1/\epsilon'$, and 
for $i>1/\epsilon'$ we have $\sigma_i = \sigma_{i-1}(1 + \eps')$, and $\sigma_\tau = Q$. So now 
$|S|=O(\log^2 n\cdot\log Q/\epsilon)=O(\log^3 n/\epsilon^2)$ when $Q = \poly(n)$. For each vector $\tvec$ of size $\tau$, where $0\leq t_i\leq n$
is the number of partial tours with coverage/capacity $\sigma_i$, let $A[v,\tvec]$ store the minimum cost of a collection of
(partial) tours covering all the tokens in $T_v$ whose capacity profile is $\tvec$, i.e. the number of
tours of size in $[\sigma_i,\sigma_{i+1})$ is $\tvec_i$. To compute the solution for $A[v,\tvec]$, 
given all the solutions for its two children $u,w$ we can do the following: consider two partial solutions,  
$A[u,\tvec_u]$ and $A[w,\tvec_w]$. One can combine some partial tours of $A[u,\tvec_u]$ with some partial tours of $A[w,\tvec_w]$,
i.e. if ${\cal T}_u$  is a (partial) tour of class $i$ for $T_u$ and ${\cal T}_w$ is a partial tour of class $j$ for $T_w$
then either these two tours are in fact part of the same tour for $T_v$, or not. In the former case, the partial tour
for $T_v$ obtained by the combination of the two tours will have cost $w({\cal T}_u)+w({\cal T}_w)+2w(vu)+2w(vw)$ and capacity
$t_i+t_j$ (or possibly $t_i+t_j+1$ if this tour is to cover $v$ as well). 
In the latter case, each of ${\cal T}_u$ and ${\cal T}_w$ extend (by adding edges $vu$ and $vw$, respectively) into
partial tours for $T_v$ of weights $w({\cal  T}_u)+2w(vu)$ and $w({\cal  T}_w)+2w(vw)$ (respectively)  and capacities $t_i$
and $t_j$ (or perhaps $t_i+1$ or $t_j+1$ if one of them is to cover $v$ as well). In the former case, since $t_i+t_j$ is not
a threshold-size, we can round it (down) to the nearest threshold-size. We say partial solutions for $T_v$, $T_u$ and $T_w$
are consistent if one can obtain the partial solution for $T_v$ by combining the solutions for $T_v$ and $T_w$. Given
$A[v,\tvec]$, we consider all possible subproblems $A[u,\tvec_u]$ and $A[w,\tvec_w]$ that are consistent and take the minimum cost among
all possible ways to combine them to compute $A[v,\tvec]$. 
Note that whenever we combine two solutions, we might be rounding
the partial tour sizes down to a threshold-size, so we "under-estimate" the actual tour size by a factor
of $1+\epsilon'$ in each subproblem calculation. Since the height of the tree is $h=O(\log^2 n/\epsilon)$, the actual error in the tour sizes computed at the root is at most $(1+\epsilon')^h=(1+O(\epsilon))$, so each tour will have size at most $(1+O(\epsilon))Q$. The time to compute each entry $A[v,\tvec]$ can be upper bounded
by $n^{O(\log^3 n/\epsilon^2)}$ and since there are $n^{O(\log^3 n/\epsilon^2)}$ subproblems, 
the total running time of the algorithm will be $n^{O(\log^3 n/\epsilon^2)}$. We can handle the setting where the tree is not binary  (i.e. each node $v$ has more than two children) by doing an inner DP, like a knapsack problem over children of $v$
(we skip the details here as we will explain the details for the actual QPTAS instead). 

\textbf{Going from a Bicriteria to a true QPTAS:}
Our main tool to obtain a true approximation scheme for CVRP in trees is to show the existence of a near-optimum solution where
the partial solutions for each $T_v$ have sizes that can be grouped into polyogarithmic many buckets as in the case of bi-criteria solution. 
Roughly speaking, starting from an optimum solution $\OPT$, we follow a bottom-up scheme and modify $\OPT$ by changing the solution at each $T_v$: 
at each node $v$, we change the structure of the tours going down $T_v$ (by adding a few extra tours from the depot) and
also adding some extra tokens at $v$ so that the
partial tours that visit $T_v$ all have a size from one of polyogarithmic many possible sizes (buckets)
while increasing the number and the cost of the tours by a small factor. We do this by duplicating some of the tours
that visit $T_v$ while changing parts of them that go down in $T_v$ and adding some extra tokens at $v$:
each tour still picks up at most a total of $Q$ tokens and the size (i.e. the number of tokens picked)
for each partial tour in the subtree $T_v$ is one of $O(\log^4 n/\eps^2)$ many possible values, while the total cost
of the solution is at most $(1+O(\eps))\opt$.

Suppose $T$ has height $h$ (where $h=\delta\log^2 n/\epsilon$). 
Let $V_\ell$ (for $1\leq \ell \leq h$) be the set of vertices at level
$\ell$ of the tree where $V_1=\{r\}$ and for each $\ell\geq 2$, $V_\ell$ are those vertices whose parent is in level $\ell-1$. For every tour $\calT$ and every level $\ell$, the top part of $\calT$ w.r.t. $\ell$ (denoted by $\calT_\ell^{top}$), is the part of $\calT$ induced by the vertices in $V_1\cup\ldots\cup V_{\ell-1}$ and the bottom part of $\calT$ are the partial tours of $\calT$ in the subtrees rooted at a vertex in $V_\ell$. Note that if we replace each partial tour of
the bottom part of a tour $\calT$ with a partial tour of a smaller capacity, the tour remains a capacity respecting tour.
Consider a node $v$ (which is at some level $\ell$) and suppose we have $n_v$ partial tours covering $T_v$. Let the $n_v$ tours in increasing order of their coverage be $t_{1}, \ldots, t_{n_v}$. Let $|t_i|$ be the coverage of tour $t_i$ (so $|t_i| \le |t_{i+1}|$). 
For a $g$ (to be specified later), we add enough empty tours to the beginning of this list so that the number of tours is a multiple of $g$. Then, we will put these tours into groups $G^v_{1}, \ldots, G^v_{g}$ of equal sizes by placing
the $i$'th $ n_v/g $ partial tours into $G^v_i$.
Let $h^{v,max}_i$ ($h^{v,min}_i$) refer to the maximum (minimum) size of the tours in $G^v_{i}$. This grouping is similar to the
grouping in the asymptotic PTAS for the classic bin-packing problem.
Note that $h^{v,max}_{i} \le h^{v,min}_{i+1}$. 

Consider a mapping $f$ where it maps each partial tour in $G^v_i$ to one in $G^v_{i-1}$ in the same order, i.e. the largest partial tour in $G^v_{i}$ is mapped to the largest in $G^v_{i-1}$, the 2nd largest to the 2nd largest and so on, for $i>1$ (suppose $f(.)$ maps all the tours of $G^v_1$ to empty tours). 
Now suppose we modify $\OPT$ to  $\OPT'$ in the following way: for each tour $\calT$ that has a partial
tour $t\in G^v_i$, replace the bottom part of $\calT$ at $v$ from $t$ to $f(t)$ (which is in $G^v_{i-1}$).
Note that by this change, the size of any tour like $\calT$ can only decrease.
Also, if instead of $f(t)$ we had replaced $t$ with a partial
tour of size $h^{v,max}_{i-1}$, it would still form a capacity respecting solution with the rest of $\calT$, 
because $h^{v,max}_{i-1}\leq h^{v,min}_i\leq |t|$. The only problem is that those tokens in $T_v$ that were picked
by the partial tours in $G^v_g$ are not covered by any tours; we call these {\em orphant} tokens. 
For now, assume that we add a few extra tours to $\OPT$ at low cost such that they cover all the orphant tokens of $T_v$.
If we have done this change for all vertices $v\in V_\ell$, then for every  tour like $\calT$,
the partial tours of $\calT$ going down each $T_v$ (for $v\in V_\ell$) are replaced with partial tours from a group 
one index smaller.
This means that, after these changes, 
for each tour $\calT$ and its (new) partial tour $t\in G^v_i$, if we add $h^{v,max}_i-|t|$ extra tokens at $v$ to be picked up by $t$ then each partial tour has size exactly the same as the maximum size of its group without violating the capacities.
This helps us store a compact "sketch" 
for partial solutions
at each node $v$ with the property that the partial solution can be extended to a near optimum one.

How to handle the case of orphant tokens (those picked by the tours in the the last groups $G^v_g$ before the swap)? 
We will show that if $n_v$ is sufficiently large (at least polylogarithmic) then if we sample a small fraction of the tours of the optimum at random and add two copies of them (as extra tours), they
can be used to cover the orphant tokens. So overall, we show how one can modify $\OPT$ by adding some extra tours to it
at a cost of at most $\eps\cdot\opt$ such that: each node $v$ has $\geq 1$ tokens and 
the sketch of the partial tours at each node $v$ is compact (only polyogarithmic many possible sizes) while the dropped tokens overall can be covered by the extra tours.

\subsubsection{Changing $\OPT$ to a near optimum structured solution}
We will show how to modify the optimal solution $\OPT$ to a near-optimum solution $\OPT'$ for a new instance $\calI'$
which has $\geq 1$ token at each node with certain properties. 
We start from $\ell=h$ and let $\OPT'=\OPT_\ell=\OPT$ and for decreasing values of $\ell$, we will show how
to modify $\OPT_{\ell+1}$ to
obtain $\OPT_{\ell}$. To obtain $\OPT_{\ell}$ from $\OPT_{\ell+1}$ we keep the partial tours at levels $\geq \ell$ the same as
$\OPT_{\ell+1}$ but we change the top parts of the tours and how the top parts can be matched to the partial tours at 
level $\ell$ so that together they form capacity respecting solutions (tours of capacity at most $Q$) at low cost.

First, we assume that $\OPT$ has at least $d\log n$ many tours for some sufficiently large $d$. Otherwise,
if there are at most $D=d\log n$ many tours in $\OPT$ we can do a simple DP to compute $\OPT$:
for each node $v$, we have a sub problem $A[v,T^v_1,\ldots,T^v_D]$ which stores the minimum cost solution
if $T^v_i$ is the number of vertices the $i$'th tour is covering in the subtree $T_v$. It is easy to fill this table in
time $O(n^D)$ having computed the solutions for its children.

\begin{definition}
Let \textbf{threshold values} be $\{\sigma_1, \ldots, \sigma_\tau \}$ where $\sigma_i=i$ for $1\leq i\leq \lceil 1/\epsilon\rceil$,
 and for $i>\lceil 1/\epsilon\rceil$ we have $\sigma_i = \lceil\sigma_{i-1}(1 + \eps)\rceil$, and $\sigma_\tau = Q$. 
So $\tau=O(\log Q/\epsilon)$.
\end{definition}

We consider the vertices of $T$ level by level, starting from nodes in level $V_{\ell=h-1}$
and going up, modifying the solution $\OPT_{\ell+1}$ to obtain $\OPT_{\ell}$. 

\begin{definition}
For a node $v$, the $i$-th \text{bucket}, $b_i$, contains the number of tours of $\OPT_\ell$ having coverage between $[\sigma_i, \sigma_{i+1})$ tokens in $T_v$ where $\sigma_i$ is the $i$-th threshold value. We will denote a node and bucket by a pair $(v,b_i)$. Let $n_{v,i}$ be the number of tours in bucket $b_i$ of $v$. 
\end{definition}

\begin{definition}
A bucket $b$ is \textbf{small} if the number of tours in $b$ is at most $\alpha \log^3 n / \eps^2$ and is \textbf{big} otherwise,
for a constant $\alpha \ge \max\{1, 12\delta\}$.
\end{definition}

Note that for every node $v$ and bucket $b_i$ and for any two partial tours in $b_i$, the ratio of their size (coverage)
is at most $(1 + \eps)$. We will use this fact crucially later on.
While giving the high level idea earlier in this section, we mentioned that we can cover the orphant 
tokens at low cost by using a few extra tours at low cost. For this to work, we need to assume that the ratio of the maximum size tour to the minimum size tour 
in all groups $G^v_1,\ldots,G^v_g$ is at most $(1 + \eps)$. To have this property, 
we need to do the grouping described for each vertex-bucket pair $(v,b_i)$ that is big. 

For each $v \in V_\ell$, let $(v,b_i)$ be a vertex-bucket pair. If $b_i$ is a small bucket, we do not modify the partial tours in it. 
If $b_i$ is a big bucket, we create groups $G^v_{i,1}, \ldots, G^v_{i,g}$ of equal sizes 
(by adding null/empty tours if needed to $G^v_{i,1}$ to have equal size groups), for $g = (2\delta \log n)/\eps^2 $; so 
|$G^v_{i,j}|=\lceil n_{v,i}/g\rceil$. We also consider a mapping $f$ (as before) which maps (in the same order) the 
tours $t\in G^v_{i,j}$ to the tours in $G^v_{i,j-1}$ for all $1<j\leq g$.  We assume the mapping maps tours of $G^v_{i,1}$ to 
empty tours. 
Let the size of the smallest (largest) partial tour in $G^v_{i,j}$ be $h^{v,min}_{i,j}$ ($h^{v,max}_{i,j}$). Note that
$h^{v,max}_{i,j-1}\leq h^{v,min}_{i,j}$.
Consider the set ${\bf T}_\ell$ of all the tours $\calT$ in $\OPT_\ell$ that visit a vertex in one of the lower levels $V_{\geq \ell}$. Consider an arbitrary such tour $\calT$ that has a partial tour $t$ in a big vertex/bucket pair $(v,b_i)$, 
suppose $t$ belongs to group $G^v_{i,j}$. We replace $t$ with $f(t)$ in $\calT$.
Note that for $\calT$, the partial tour at $T_v$ now has a size between 
$h^{v,min}_{i,j-1}$ and $h^{v,max}_{i,j-1}$. Now, add some extra tokens at $v$ to be picked up by $\calT$ so that
the size of the partial tour of $\calT$ at $T_v$ is exactly $h^{v,max}_{i,j-1}$; note that
 since $h^{v,max}_{i,j-1}\leq |t|$, the new partial tour at $v$ can pick up the extra tokens without violating the capacity 
of $\calT$. 
If we make this change for all tours $\calT\in {\bf T}_\ell$, each partial tour of them at level $\ell$ that was in a group
$j<g$ of a big vertex/bucket pair $(v,i)$ is replaced with a smaller partial tour from group $j-1$ of the same big vertex/bucket
pair; after adding extra tokens at $v$ (if needed) the size is the maximum size from group $j-1$. 
All other partial tours (from small vertex/bucket pairs) remain unchanged. Also, the total cost of the tours has not increased (in fact some now have partial tours that are empty). However, the tokens that were picked by
partial tours from $G^v_{i,g}$ for a big vertex/bucket pair $(v,b_i)$ are now orphant. We describe how to cover them with some new tours. 

One important observation is that when we make these changes, for any partial tours at vertices at lower levels ($V_{>\ell}$)
their size remains the same. It is only the tour sizes going down a vertex at level $\ell$ that we are adjusting (by adding extra
tokens). All other lower level partial tours remain unchanged (only their top parts may get swapped). This property holds inductively
as we go up the tree and ensure that the lower level partial tours have one of polylogarithmic many sizes.
More precisely, as we go up levels to compute  $\OPT_\ell$, for any vertex $v'\in V_{\ell'}$ (where $\ell'>\ell$) and any partial
tour $\calT'$ visiting $T_{v'}$, either $|\calT'|$ belongs to a small vertex bucket pair $(v',b_{i'})$ (and so has one of
$O(\log^3 n/\eps)$ many possible values) or if it belongs to a big
vertex bucket pair $(v',b_{i'})$ then its size is equal to $h^{v',max}_{i',j'}$ for some group $j'$ and hence one of
$O((\log Q\log n)/\eps^2)$ possible values.

To handle (cover) orphant nodes, we are going to (randomly) select a subset of tours of $\OPT$ as "extra tours" and add them to $\OPT'$ and modify them
such that they cover all the tokens that are now orphant (i.e. those that were covered by partial tours of $G^v_{i,g}$ for all big vertex/bucket pairs at level $\ell$).

Suppose we select each tour $\calT$ of $\OPT$ with probability $\eps$. We make two copies of the extra tour and we designate both extra copies to one of the levels $V_\ell$ that it visits with equal probability. We call these the extra tours. 

\begin{lemma}\label{lem:extra-cost}
The cost of extra tours selected is at most $4\eps\cdot\opt$ w.h.p.
\end{lemma}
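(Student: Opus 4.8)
The plan is to bound the random total cost $X$ of the selected extra tours by first computing its expectation exactly and then establishing concentration. Each tour $\calT\in\OPT$ is selected independently with probability $\eps$, and a selected tour contributes two copies of itself (the level to which the two copies are assigned is irrelevant for the cost), so $X=2\sum_{\calT\in\OPT}w(\calT)Z_\calT$, where $w(\calT)$ is the cost of $\calT$ and $Z_\calT\in\{0,1\}$ is its selection indicator. By linearity, $\Ex{X}=2\eps\sum_{\calT\in\OPT}w(\calT)=2\eps\,\opt$, so the lemma is exactly the assertion that $X\le 2\,\Ex{X}$ with high probability.

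The concentration step needs care, because the Chernoff bound stated in the preliminaries is for sums of Bernoulli variables whereas the costs $w(\calT)$ may span many orders of magnitude; in the extreme, a single tour of cost $\Theta(\opt)$ is selected with the non‑negligible probability $\eps$, which would by itself break the bound. To rule this out I would, before invoking the random selection, guess the $K=\Theta(\log n/\eps)$ most expensive tours of $\OPT$ and route them deterministically, using the exact dynamic program already described for handling $O(\log n)$ tours (this only multiplies the running time by $n^{O(\log n/\eps)}$, well within the target); the random selection is then applied only to the remaining tours. After removing the $K$ costliest tours, every remaining tour has cost at most $\opt/K=O\!\left(\tfrac{\eps}{\log n}\,\opt\right)$, since the top $K$ tours already account for at least $K$ times the $K$-th largest cost. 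Thus the random part of $X$ becomes a sum of independent, uniformly bounded, non‑negative variables with expectation at most $2\eps\,\opt$ and maximum term $R=O\!\left(\tfrac{\eps}{\log n}\,\opt\right)$.

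Finally I would apply a Bernstein-type tail inequality (equivalently, the Chernoff bound for independent $[0,1]$-valued variables, which has the same proof as the version in the preliminaries) to this sum: with maximum term $R$ and total mean at most $2\eps\,\opt$, the probability that it exceeds its mean by an additive $\eps\,\opt$ is at most $\exp\!\left(-\Omega(\eps\,\opt/R)\right)=\exp(-\Omega(\log n))=n^{-\Omega(1)}$, where the constant hidden in $K$ is chosen large enough. Hence $X\le 2\eps\,\opt+\eps\,\opt\le 4\eps\,\opt$ with high probability, and a union bound over the remaining high‑probability events of the construction (in particular that the sampled extra tours can cover all orphant tokens) still leaves them all holding simultaneously. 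The one genuine obstacle is precisely the lopsided‑weights issue above; I should note that if one prefers not to guess the costliest tours, Markov's inequality already yields $X\le 4\eps\,\opt=2\,\Ex{X}$ with probability at least $1/2$, which is enough for an existential statement such as a structure theorem.
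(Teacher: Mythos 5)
Your approach is genuinely different from the paper's, and the comparison is instructive. The paper never decomposes the cost tour by tour; it decomposes it edge by edge, writing $\opt=\sum_e 2w(e)f(e)$ and observing that an extra tour only ever needs to traverse an edge $e$ lying on the path to a big vertex/bucket pair, so it can be shortcut past any edge with $f(e)<\alpha\log^3 n/\eps^2$. On the surviving edges the sampled count $f'(e)$ is a sum of at least $\alpha\log^3 n/\eps^2$ independent Bernoullis with mean $\eps f(e)\ge 6\log n$, so the stated Chernoff bound gives $f'(e)\le 2\eps f(e)$ for every edge simultaneously w.h.p., and the $4\eps\cdot\opt$ bound follows by summing over edges. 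This sidesteps entirely the lopsided-weights obstacle that you correctly identify as the crux of the per-tour decomposition: a single tour of cost $\Theta(\opt)$ is harmless in the paper's accounting because its private edges carry few tours and are therefore never used by extra tours.

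In your primary (Bernstein) route there is a genuine gap. ``Guessing the $K=\Theta(\log n/\eps)$ most expensive tours of $\OPT$ and routing them deterministically'' is not an operation the algorithm can perform: a tour is an arbitrary multiset of up to $Q$ vertices, so there is no quasi-polynomial enumeration of candidate tours to guess from, and the exact DP for the case of $O(\log n)$ tours computes an entire optimum with few tours --- it does not compose with the bucketed DP to fix $K$ designated tours while approximating the rest. Moreover, excluding the heaviest tours from sampling would have to be re-threaded through Lemmas \ref{lem:extra} and \ref{lem:extra2} (the counting there assumes every tour in a big bucket is sampled with probability $\eps$; this survives since $K\ll\alpha\log^3 n/\eps^2$, but you do not say so). Your Markov fallback, by contrast, is sound and suffices for the downstream use, since the structure theorem is existential and constant success probability is enough; note however that it proves a weaker statement than the lemma as written (probability $1/2$ rather than ``w.h.p.''), so as a proof of this particular lemma it falls short. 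The cleanest repair along your lines would be to keep the per-tour expectation computation but then regroup the cost by edges as the paper does, or simply to adopt the paper's shortcutting observation, which is what actually licenses the concentration.
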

\begin{proof}
Recall that $f(e)$ denotes the number of tours passing through $e$ in $\OPT$. The contribution of edge $e$ to the optimal solution is $2\cdot w(e)\cdot f(e)$ and we can write $\opt = \sum_{e \in E} 2 \cdot w(e) \cdot f(e)$. Let $e$ be the parent edge of a node in $v \in V_\ell$. 
Suppose an extra tour is designated to level $\ell$, we will only use it to cover orphant tokens from big buckets from nodes in $V_\ell$. A node $v$ would use an extra tour to cover orphant tokens only if one of $v$'s buckets is a big bucket. From now on, we will assume the extra tours only pass through an edge $e$ if $f(e) \ge \alpha \log^3 n/\eps^2$ (we can shortcut it otherwise).

For an edge $e$, let $f'(e)$ denote the number of sampled tours passing through $e$ and since we use two copies of each sampled tour,  $2f'(e)$ is the number of extra tours passing through $e$ in $\OPT'$. We can write $\opt' = \sum_{e \in E} 2 \cdot w(e) \cdot (f(e) + 2f'(e))$ and the cost of extra tours is $\sum_{e \in E} 2 \cdot w(e) \cdot 2f'(e)$. While modifying $\OPT$ to $\OPT'$, each tour in the optimal solution is sampled with probability $\eps$. Let $e$ be an edge with $f(e)$ tours $\calT_{e,1}, \ldots, \calT_{e,f(e)}$ passing through it. Let $Y_{e,i}$ be a random variable which is 1 if tour $\calT_{e,i}$ is sampled and $0$ otherwise. 
$$\Ex{Y_{e,i}} = \Prob{\calT_{e,i} \text{ is sampled}} = \eps.$$
Let $f'(e) = Y_e = \sum_{i = 1}^{f(e)} Y_{e,i}$. By linearity of expectations, we have 
$$\Ex{f'(e)} = \Ex{Y_e} = \sum_{i = 1}^{f(e)} \Ex{Y_{e,i}} = \sum_{i =1}^{f(e)} \eps = \eps \cdot f(e).$$
Our goal is to show $\Prob{Y_e > 2\Ex{Y_e}}$ is very low. Using Chernoff bound with $\mu =\Ex{Y_e}=  \eps \cdot f(e) \ge \alpha \log^3 n /\eps \ge 6 \log n$.
\begin{equation*}
    \begin{split}
        \Prob{Y_e > 2\Ex{Y_e}} & \le e^{-(2\log n)} = \frac{1}{n^2} 
    \end{split}
\end{equation*}
The above concentration bound holds for a single edge $e$. Using the union bound, we can show this hold with high probability over all edges, 
$$\sum_{e \in E} \Prob{Y_e > 2\Ex{Y_e}} \le \frac{1}{n}.$$
We showed $f'(e) \le 2\eps \cdot f(e)$ with high probability. Hence, with high probability, the cost of the extra tours is at most 
$$\sum_{e \in E} 2 \cdot w(e) \cdot 2f'(e) \le \sum_{e \in E} 2 \cdot w(e) \cdot 4\eps \cdot f(e) = 4\eps \sum_{e \in E}2 \cdot w(e) \cdot f(e) = 4\eps \cdot \opt.$$
\end{proof}

Therefore, we can assume that the cost of all the extra tours added  is at most $4\eps\cdot\opt$.
Let $X_\ell$ be the set of extra tours designated to level $\ell$. 
We assume we add $X_\ell$ when we are building $\OPT_\ell$
(it is only for the sake of analysis). For each $v\in V_\ell$ and vertex/bucket pair $(v,b_i)$, let $X_{v,i}$ be those in $X_\ell$ whose partial tour in $T_v$ has
a size in bucket $b_i$. Each extra tour in $X_\ell$
will not be picking any of the tokens in levels $V_{<\ell}$ (as they will be covered by the tours already in $\OPT_\ell$); 
they are used to cover the orphant tokens created by partial tours
of $G^v_{i,g}$ for each big vertex/bucket pair $(v,b_i)$ with $v\in V_\ell$; as described below. 


\begin{lemma}\label{lem:extra}
For each level $V_\ell$, each vertex $v\in V_\ell$ and big vertex/bucket pair $(v,b_i)$,
w.h.p. $|X_{v,i}|\geq \frac{\eps^2}{\delta\log^2 n}\cdot n_{v,i}$.
\end{lemma}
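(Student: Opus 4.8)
The plan is to fix a single triple (a level $V_\ell$, a vertex $v\in V_\ell$, and a big vertex/bucket pair $(v,b_i)$), prove the stated inequality for it with failure probability at most roughly $n^{-3/2}$, and then take a union bound over the at most $n\cdot\tau=n\cdot\polylog(n)$ such triples. First I would set up the right indicator variables. Let $\calT_1,\dots,\calT_{n_{v,i}}$ be the tours of $\OPT$ counted in bucket $b_i$ of $v$ (i.e. those whose partial tour in $T_v$ has coverage in $[\sigma_i,\sigma_{i+1})$). For each $j$ let $Z_j$ be the indicator of the event that $\calT_j$ is selected by the sampling and that its two copies are then designated to level $\ell$. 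Since $v\in V_\ell$ and $\calT_j$ has a partial tour in $T_v$, the tour $\calT_j$ visits level $\ell$; and since the tree has height $h=\delta\log^2 n/\eps$ it visits at most $h$ distinct levels, so conditioned on $\calT_j$ being selected the probability its copies are designated to $\ell$ is at least $1/h=\eps/(\delta\log^2 n)$. Hence $\Prob{Z_j=1}\ge \eps\cdot\frac{\eps}{\delta\log^2 n}=\frac{\eps^2}{\delta\log^2 n}$. The selection decisions and the (conditional) designation choices involve only the independent randomness of the individual tours, so $Z_1,\dots,Z_{n_{v,i}}$ are mutually independent. Finally, whenever $Z_j=1$ both copies of $\calT_j$ lie in $X_\ell$ and each has the same partial tour in $T_v$ as $\calT_j$, hence in bucket $b_i$, so they each contribute to $X_{v,i}$; therefore $|X_{v,i}|\ge 2\sum_{j=1}^{n_{v,i}}Z_j$.

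Next I would apply the Chernoff lower tail from the preliminaries. Set $\mu=\Ex{\sum_j Z_j}=\sum_j\Prob{Z_j=1}\ge \frac{\eps^2}{\delta\log^2 n}\cdot n_{v,i}$. Because $(v,b_i)$ is big, $n_{v,i}> \alpha\log^3 n/\eps^2$, and since $\alpha\ge 12\delta$ this gives $\mu> \alpha\log n/\delta\ge 12\log n$. The Chernoff bound then yields $\Prob{\sum_j Z_j<\mu/2}\le e^{-\mu/8}\le n^{-3/2}$, so with probability at least $1-n^{-3/2}$ we have $\sum_j Z_j\ge \mu/2\ge \frac{\eps^2}{2\delta\log^2 n}\cdot n_{v,i}$, whence $|X_{v,i}|\ge 2\sum_j Z_j\ge \frac{\eps^2}{\delta\log^2 n}\cdot n_{v,i}$, which is exactly the claimed bound. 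A union bound over the at most $n$ choices of $v$ and at most $\tau=O(\log Q/\eps)$ buckets per vertex (so $\poly(n)$ triples in total, since $Q$ is quasi-polynomially bounded) then shows the inequality holds simultaneously for every big vertex/bucket pair with high probability.

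I do not expect a serious obstacle here; the proof is a clean second-moment/concentration argument once the bookkeeping is right. The one place that needs care — and the reason the threshold in the definition of a \emph{big} bucket is set to $\Theta(\log^3 n/\eps^2)$ — is that the expected number of designated extra tours, $\mu\approx \frac{\eps^2}{\delta\log^2 n}n_{v,i}$, must be $\Omega(\log n)$ for the Chernoff lower tail $e^{-\mu/8}$ to survive the union bound over all $\poly(n)$ vertex/bucket pairs; for small buckets no such concentration is available, which is precisely why small and big buckets are handled separately (small buckets are simply left untouched). A secondary subtlety worth flagging is the factor of two: it is essential that each sampled tour is \emph{duplicated}, so that a sampled-and-designated tour of bucket $(v,b_i)$ contributes two elements to $X_{v,i}$ — this is what promotes the $\mu/2$ guaranteed by concentration to the full $\frac{\eps^2}{\delta\log^2 n}\,n_{v,i}$ required by the lemma.
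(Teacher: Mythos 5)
Your proposal is correct and follows essentially the same route as the paper's proof: decompose $|X_{v,i}|$ as twice a sum of independent indicators (sampled \emph{and} designated to level $\ell$), lower-bound each indicator's mean by $\eps\cdot\frac{1}{h}=\frac{\eps^2}{\delta\log^2 n}$, use the bigness threshold together with $\alpha\ge 12\delta$ to get $\mu=\Omega(\log n)$, apply the Chernoff lower tail, and union bound over all vertex/bucket pairs. The only cosmetic difference is that you apply Chernoff to the count of sampled tours rather than to $|X_{v,i}|=2Y_{v,i}$ directly (giving $n^{-3/2}$ instead of the paper's $n^{-3}$ per pair), which still survives the union bound.
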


\begin{proof}
Suppose $(v,b_i)$ is a big vertex/bucket pair at some level $V_\ell$. Let $p_1, \ldots, p_{n_{v,i}}$ be the partial tours in vertex/bucket pair $(v,b_i)$. Let the tour in $\OPT$ corresponding to $p_i$ be $\calT$. Two copies of tour $p_i$ are assigned to $b_i$ if both of the following events are true: 
\begin{itemize}
    \item Let $A_i$ be the event where tour $\calT$ is sampled as an extra tour. Since each tour is sampled with probability $\eps$, we have $\Prob{A_i} = \eps$. 
    \item Let $B_i$ be the event where tour $\calT$ is assigned to level $\ell$. There are $h = \delta \log^2 n/\eps $ many levels and since $\calT$ (if sampled) is assigned to any one of its levels, $\Prob{B_i} \ge 1/h \ge \eps/(\delta \log^2 n)$. 
\end{itemize}
Let $Y_i$ be a random variable which is 1 if $p_i$ is an extra tour in $(v,b_i)$ and 0 otherwise. 
$$\Ex{Y_i} = \Prob{Y_i = 1} = \Prob{A_i \land B_i}  = \Prob{A_i} \cdot \Prob{B_i} \ge \eps^2/ (\delta \log^2 n).$$
Let $Y_{v,i} = \sum_{i = 1}^{n_{v,i}} Y_i$ be the random variable keeping track of the number of sampled tours in $(v,b_i)$. The number of extra tours, $|X_{v,i}| = 2Y_{v,i}$ since we add two copies of a sampled tour to $X_{v,i}$. By linearity of expectation, we have 
$$\Ex{|X_{v,i}|} = 2\Ex{Y_{v,i}} = 2\sum_{i = 1}^{n_{v,i}} \Ex{Y_i} \ge \frac{2\eps^2}{\delta \log^2 n} \cdot n_{v,i}.$$
We want to show that $|X_{v,i}| \ge \frac{\Ex{|X_{v,i}|}}{2} \ge \frac{\eps^2}{\delta \log^2 n} \cdot n_{v,i} $ with high probability over all vertex-bucket pairs. 

Using  Chernoff Bound with $\mu = \Ex{|X_{v,i}|} \ge \frac{2\eps^2}{\delta \log^2 n} \cdot n_{v,i} \ge 24 \log n$ since $n_{v,i} \ge \alpha \log^3n /\eps^2$ and $\alpha \ge 12\delta$. 
\begin{equation*}
    \begin{split}
        \Prob{|X_{v,i}| < \frac{\Ex{|X_{v,i}|}}{2}} & \le e^{-(3\log n)} = \frac{1}{n^3}
    \end{split}
\end{equation*}
Note that the above equation only shows the concentration bound for a single vertex/bucket pair. There are $n$ nodes and each node has up to $\tau = \log n /\eps$ buckets, so the total number of vertex/bucket pairs is at most $n \log n/\eps$. Suppose we do a union bound over all buckets, we get
$$\sum_{\text{all } (v,b_i) \text{ pairs}} \Prob{|X_{v,i}| < \frac{\Ex{|X_{v,i}|}}{2}} \le \frac{1}{n}.$$
We showed that for each vertex/bucket pair $v,b_i$, $|X_{v,i}|\geq \frac{\eps^2}{\delta\log^2 n}n_{v,i}\geq \alpha\log n/(2\delta)$ holds with high probability. 
\end{proof}

\begin{lemma}\label{lem:extra2}
Consider all $v\in V_\ell$, big vertex/bucket pairs $(v,b_i)$ and partial tours in $G^v_{i,g}$.
We can modify the tours in $X_{v,i}$ (without increasing the cost) and adding some extra tokens at $v$ (if needed) so that:
\begin{enumerate}
\item The tokens picked up by partial tours in $G^v_{i,g}$ are covered by some tour in $X_{v,i}$, and
\item The new partial tours that pick up the orphant tokens in $G^v_{i,g}$ 
have size exactly $h^{v,max}_{i,g}$ and all tours still have size at most $Q$.
\item For each (new) partial tour of $X_{v,i}$ and every level $\ell'>\ell$, the size of partial tours of $X_{v,i}$ at a
vertex at level $\ell'$ is also one of $O(\log Q\log^3 n/\eps^3)$ many sizes.
\end{enumerate}
\end{lemma}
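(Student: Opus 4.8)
The plan is to handle each big vertex/bucket pair $(v,b_i)$ on its own and to spend the pool $X_{v,i}$ of extra tours that Lemma~\ref{lem:extra} makes available. Note first that all partial tours in $G^v_{i,g}$ have coverage in $[h^{v,min}_{i,g},h^{v,max}_{i,g}]$, and these two endpoints differ by at most a factor $1+\eps$ because $G^v_{i,g}$ sits inside one bucket. So the first step is to \emph{normalize} the orphant pieces: a piece $p\in G^v_{i,g}$ covers a token set $S_p\subseteq T_v$ with $|S_p|\le h^{v,max}_{i,g}$ along a $v$-rooted subtree $F_p$ of cost $w(p)$; add $h^{v,max}_{i,g}-|S_p|$ fresh tokens \emph{at $v$} so that the padded piece has size exactly $h^{v,max}_{i,g}$ and still lives on $F_p$ (the padding touches only the count at $v$).

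For the second step, pack the padded pieces of $(v,b_i)$ into the tours of $X_{v,i}$, at most $\lfloor Q/h^{v,max}_{i,g}\rfloor$ of them per tour; since each tour of $X_{v,i}$ is a copy of an $\OPT$-tour that already reaches $T_v$, it gets to $v$ for free, and we re-route the portion of this copy inside $T_v$ so that it now traverses the subtrees $F_p$ of its assigned pieces. Because all these $F_p$ share the root $v$, the re-routed portion costs at most $\sum_p w(p)$; the assignment then has to be chosen so that the total cost of $X_{v,i}$ does not grow, which should follow by charging the orphant pieces --- a $1/g$ fraction of each big bucket --- against the sampled tours, in the same amortized manner as Lemma~\ref{lem:extra-cost}. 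This already gives property~1, and the padding gives the ``size exactly $h^{v,max}_{i,g}$'' part of property~2.

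Property~3 then falls out of the construction: below $v$, the new partial tour of a tour in $X_{v,i}$ at $T_v$ is a disjoint union of exact copies of the original subtrees $F_p$, $p\in G^v_{i,g}$, and since padding was added only at $v$, for any level $\ell'>\ell$ and any vertex $u$ at level $\ell'$ inside $T_v$ the restriction of the new partial tour to $T_u$ is identical to the restriction of some $p\in G^v_{i,g}$ to $T_u$. By the structure already enforced on $\OPT_\ell$ at levels above $\ell$, each such restriction has one of the $O(\log Q\,\log^3 n/\eps^3)$ allowed sizes, which is exactly what property~3 asks.

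The heart of the argument --- and the part I expect to be hardest --- is the ``$\le Q$'' clause of property~2: each extra tour must absorb about $|G^v_{i,g}|/|X_{v,i}|=\Theta(\log n)$ padded pieces, so one needs $\log n\cdot h^{v,max}_{i,g}=O(Q)$. This is immediate once $h^{v,max}_{i,g}=O(Q/\log n)$, and for a big bucket it should follow from a token count on $T_v$: a big bucket has more than $\alpha\log^3 n/\eps^2$ tours, each covering at least $\sigma_i$ distinct tokens of $T_v$, which forces $\sigma_i$ (hence $h^{v,max}_{i,g}$) to be small relative to $Q$. Buckets whose threshold is nevertheless comparable to $Q$ --- there are only $O(\log\log n/\eps)$ such thresholds --- have to be handled by a separate, more direct argument. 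Pinning down this case split, together with making precise the cost charge of the second step, is where the real work lies.
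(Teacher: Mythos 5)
Your normalization step and your argument for property~3 match the paper's. The gap is exactly where you place it—the ``$\le Q$'' clause of property~2—and your proposed route to close it does not work. Packing $\lfloor Q/h^{v,max}_{i,g}\rfloor$ padded pieces into each tour of $X_{v,i}$ forces each extra tour to absorb roughly $|G^v_{i,g}|/|X_{v,i}|=\Theta(\log n)$ pieces, so you need $h^{v,max}_{i,g}=O(Q/\log n)$; but no token count gives you that. A big bucket only tells you that $n_{v,i}\cdot\sigma_i$ is at most the number of tokens in $T_v$, which can be as large as $n$ (or $nQ$ in the splittable case), and this places no upper bound on $\sigma_i$ relative to $Q$: a bucket with $\sigma_i=Q/2$ can contain far more than $\alpha\log^3 n/\eps^2$ tours. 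So the ``separate, more direct argument'' you defer to is the whole problem for the large thresholds. You also never address the fact that one sampled tour lies in the pools $X_{v,i}$ of several vertices $v\in V_\ell$ simultaneously, so independent per-pool assignments can overload a single extra tour across different $v$'s.

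The idea you are missing—and the reason the lemma samples whole tours and keeps \emph{two} copies of each—is a matching rather than a bin-packing by capacity. Since $|X_{v,i}|\ge 2|G^v_{i,g}|$ w.h.p., the set $Y_{v,i}$ of sampled tours with a partial tour in bucket $(v,b_i)$ satisfies $|Y_{v,i}|\ge|G^v_{i,g}|$, so each orphant piece of $G^v_{i,g}$ can be injectively assigned to a \emph{sampled partial tour $p_j$ in the same bucket}. The piece $r_j$ assigned to $p_j$ then satisfies $|r_j|\le(1+\eps)|p_j|$, so the total load a sampled tour $\calT$ receives over all its level-$\ell$ partial tours (across all vertices of $V_\ell$ it visits) is at most $(1+\eps)\sum_j|p_j|\le(1+\eps)|\calT|\le(1+\eps)Q$, independent of how large the individual $h^{v,max}_{i,g}$ are; the two copies $\calT_1,\calT_2$ split this load greedily, each staying within $Q$. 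This simultaneously disposes of the large-$\sigma_i$ buckets and of the cross-vertex coordination, neither of which your scheme handles. (Your cost accounting is also left as a ``should follow''; in the paper it is immediate because the extra tours inherit, wholesale, partial tours whose edges were already paid for in $\OPT$, while discarding their own level-$\ell$ bottoms only decreases cost.)
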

\begin{proof}
Our goal is to use the extra tours in $X_{v,i}$ to cover tokens picked up by partial tours of $G^v_{i,g}$ and we want each extra tour in $X_{v,i}$ to cover exactly $h^{v,max}_{i,g}$ tokens. The tours in the last group, $G^v_{i,g}$, cover $\sum_{t \in G^v_{i,g}} |t|$ many tokens. Since we want each tour in $X_{v,i}$ to cover $h^{v,max}_{i,g}$ tokens, we will add $\sum_{t \in G^v_{i,g}} (h^{v,max}_{i,g} - |t|)$ extra tokens at $v$ for each vertex/bucket pair $(v,b_i)$ so that there are $h^{v,max}_{i,g}$ tokens for each partial tour in $G^v_{i,g}$.  From now on, we will assume each partial tour in a last group $G^v_{i,g}$ covers $h^{v,max}_{i,g}$ tokens.

We know $|G^v_{i,g}| = n_{v,i}/g = \frac{\eps^2 }{2\delta \log n} \cdot n_{v,i}$. Using Lemma \ref{lem:extra}, we know with high probability that $|X_{v,i}| \ge \frac{\eps^2 }{\delta\log^2 n} \cdot n_{v,i} =2 |G^v_{i,g}|$, so $|X_{v,i}|/|G^v_{i,g}| \ge 2$. Recall $\OPT'$ includes tours in $\OPT$ plus the extra tours in $\OPT$ that were sampled.  Let $Y_{v,i}$ denote the number of tours in vertex/bucket pair $(v,b_i)$ that were sampled, so $|X_{v,i}| = 2|Y_{v,i}|$ since we made two extra copies of each sampled tour and $|Y_{v,i}|\ge |G^v_{i,g}|$ with high probability. We will start by creating a one-to-one mapping $s : G^v_{i,g} \rightarrow Y_{v,i}$ which maps each tour in $G^v_{i,g}$ to a sampled tour in $Y_{v,i}$. We know such a one-to-one mapping exists since $|Y_{v,i}|\ge |G^v_{i,g}|$. 

Let $\calT$ be a sampled tour in $Y_{v,i}$ with two extra copies of it, $\calT_1$ and $\calT_2$ in $X_{v,i}$. Let the partial tours of $\calT$ at the bottom part in $V_\ell$ be $p_1, \ldots, p_m$. We know $|\calT| \ge \sum_{i = 1}^m |p_i|$. Since $s$ is one-to-one, one partial tour from $r_k \in G^v_{i,g}$ maps to $p_j$ or no tour maps to $p_j$. If no tour maps to $p_j$, we consider the load assigned to $p_j$ to be zero. If $s(r_k) = p_j$ where  $r_k \in G^v_{i,g}$, since we added extra tokens to make each partial tour $r_k \in G^v_{i,g}$ have $h^{v,max}_{i,g}$ tokens, the load assigned to $p_j$ would be $h^{v,max}_{i,g}$. 

Suppose we think of $r_1, \ldots, r_m$ as items and $\calT_1$ and $\calT_2$ as bins of size $Q$. We know each $r_i$ fits into a bin of size $Q$. Recall that for the tour $r_j$ assigned to $p_j$, we know $|r_j| \le (1 + \eps)|p_j|$ since both $r_j$ and $p_j$ are in the same group  $G^v_{i,g}$. We might not be able to fit all items $r_1, \ldots, r_m$ into a bin of size $Q$ because $\sum_{i = 1}^m|r_i| \le (1 + \eps)\sum_{i = 1}^m |p_i| \le (1 + \eps)|\calT| \le (1 + \eps)Q$. However, if we used two bins of size $Q$, we can pack the items into both bins without exceeding the capacity of either bin such that each item $r_i$ is completely in one bin. Since $\calT_1$ and $\calT_2$ are not assigned to any lower level, they have not been used to cover any tokens so far in our algorithm and they both have unused capacity $Q$. Using the bin packing analogy, we could split $r_1, \ldots, r_m$ between $\calT_1$ and $\calT_2$. We could assign $r_1, \ldots, r_j$ (for the maximum $j$) to $\calT_1$ such that $\sum_{i = 1}^j |r_i| \le Q$ and the rest, $r_{j+1}, \ldots, r_m$ to $\calT_2$. Since $\sum_{i = 1}^m |r_i| \le (1 + \eps)Q$, we can ensure we can distribute the tokens in $r_i$'s amongst $\calT_1$ and $\calT_2$ such that both $\calT_1$ and $\calT_2$ cover at most $Q$ tokens. Although there are two copies of each partial tour $p_i$ in $X_{v,i}$, according to our approach, we are using at most one of them (their coverage would be zero if they are not used). If the coverage of one of the extra partial tours is non-zero, we also showed that if it picks up tokens from a partial tour in $G^v_{i,g}$, it would pick up exactly $h^{v,\max}_{i,g}$ tokens, proving the 2nd property of the  Lemma. 

Also, note that for each partial tour $r_k\in G^v_{i,g}$ and for each level $\ell'>\ell$ if $r_k$ visits a vertex $v'\in V_{\ell'}$,
then the partial tour of $r_k$ at $T_{v'}$ already satisfies the properties that: either its size belongs to a small vertex-bucket
pair $(v',b_i)$ (so has one of $O(\log^3 n/\eps)$ many possible values) or if it belongs to a big
vertex bucket pair $(v',b_{i'})$ then its size is equal to $h^{v',max}_{i',j'}$ for some group $j'$ and hence one of
$O((\log Q\log n)/\eps^2)$ possible values.
This implies that for the extra tours of $X_{v,i}$, after we reassign partial
tours of $G^v_{i,g}$ to them (to cover the orphant nodes), each will have a size exactly equal to $h^{v,max}_{i,g}$ at level $\ell$
and at lower levels $V_{>\ell}$ they already have one of the $O(\log Q\log^3 n/\eps^3)$ many possible sizes. This establishes the 3rd property of the lemma.
\end{proof} 

Therefore, using Lemma \ref{lem:extra2}, all the tokens of $T_v$ remain covered by partial tours;
those partial tours in $G^v_{i,j}$ (for $1\leq j<g$) are tied to the top parts of the tours from group $G^v_{i,j+1}$ and the
partial tours of $G^v_{i,g}$ will be tied to extra tours designated to level $\ell$. We also add extra tokens at $v$ to be
picked up by the partial tours of $T_v$ so that each partial tour has a size exactly equal to the maximum size of a group.
All in all, the extra cost paid to build $\OPT_\ell$ (from $\OPT_{\ell+1}$) is for the extra tours designated to level $\ell$.

\begin{theorem}\label{lem:struct1} \textbf{(Structure Theorem)}
Let $\opt$ be the cost of the optimal solution to instance $\calI$. We can build an instance $\calI'$ on the same tree $T$
such that each node has $\geq 1$ tokens and there exists a near-optimal solution $\OPT'$ for $\calI'$
having cost $(1 + 4\eps)\opt$ w.h.p with the following property.
The partial tours going down subtree $T_v$ for every node $v$ in $\OPT'$ has one of $O((\log Q \log^3 n)/\eps^3)$ possible sizes.
More specifically, suppose $(v,b_i)$ is a bucket pair for $\OPT'$. Then either:
\begin{itemize}
\item $b_i$ is a small bucket and hence there are at most $\alpha \log^3 n/\eps^2$ many partial tours of $T_v$ whose size is in bucket $b_i$, or
\item $b_i$ is a big bucket; in this case there are $g = (2\delta \log n)/\eps^2 $ many group sizes in $b_i$: 
$\sigma_i \leq h^{v,max}_{i,1}\leq \ldots\leq h^{v,max}_{i,g}<\sigma_{i+1}$ and every tour of bucket $i$ has one of these sizes. 
\end{itemize}
\end{theorem}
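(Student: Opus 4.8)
The plan is to prove the Structure Theorem by downward induction on the level $\ell$, building a chain $\OPT=\OPT_h,\OPT_{h-1},\ldots,\OPT_1=\OPT'$ in which the passage from $\OPT_{\ell+1}$ to $\OPT_\ell$ is exactly the grouping-and-remapping procedure set up above: for each $v\in V_\ell$ and each \emph{big} vertex/bucket pair $(v,b_i)$ we split the partial tours of $\OPT_{\ell+1}$ entering $T_v$ into equal groups $G^v_{i,1},\ldots,G^v_{i,g}$, apply the order-preserving map $f$ to push every bottom partial tour $t\in G^v_{i,j}$ down to $f(t)\in G^v_{i,j-1}$, insert extra tokens at $v$ so every partial tour is raised to the maximum size of its new group, and re-cover the orphant tokens of $G^v_{i,g}$ by the extra tours $X_{v,i}$ via Lemma~\ref{lem:extra2}; small vertex/bucket pairs are left untouched. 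First I would fix once and for all the random sampling of extra tours (each tour of $\OPT$ chosen with probability $\eps$, then two copies made and both designated to one uniformly random level the tour visits) and then condition on the high-probability event --- obtained by a union bound over Lemmas~\ref{lem:extra-cost} and~\ref{lem:extra} --- that the total cost of the extra tours is at most $4\eps\cdot\opt$ \emph{and} that $|X_{v,i}|\ge\frac{\eps^2}{\delta\log^2 n}\,n_{v,i}$ holds simultaneously for every big vertex/bucket pair. (If $|\OPT|\le d\log n$ then no bucket is big and the statement is vacuous; that regime is instead handled by the direct $O(n^{d\log n})$-time dynamic program noted above.)

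The core of the argument is an inductive invariant for $\OPT_\ell$: it is a capacity-respecting feasible solution of the instance $\calI_\ell$ obtained from $\calI$ by inserting the extra tokens used so far (all at vertices of levels $\ge\ell$); every token of $\calI_\ell$ is covered; and for every $v'\in V_{\ell'}$ with $\ell'\ge\ell$ and every partial tour of $\OPT_\ell$ entering $T_{v'}$, the size of that partial tour either lies in a small bucket of $v'$ or equals one of the group maxima $h^{v',max}_{i',j'}$ of a big bucket of $v'$ --- hence one of $O(\log Q\log^3 n/\eps^3)$ possibilities in all. I would verify that the step $\OPT_{\ell+1}\to\OPT_\ell$ preserves all three parts. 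Capacity: replacing $t\in G^v_{i,j}$ by $f(t)\in G^v_{i,j-1}$ only shrinks the owning tour, since $h^{v,max}_{i,j-1}\le h^{v,min}_{i,j}\le|t|$, and the re-inserted tokens raise it only to $h^{v,max}_{i,j-1}\le|t|$. Coverage: the partial tours of $G^v_{i,j}$ with $j<g$ are re-tied to the top parts of the tours that formerly owned $G^v_{i,j+1}$, while the orphant tokens of $G^v_{i,g}$ are re-covered by $X_{v,i}$ through Lemma~\ref{lem:extra2}(1). Size: the only partial-tour sizes that change are those going into $T_v$ for $v\in V_\ell$, and each such size becomes the relevant group maximum $h^{v,max}_{i,j}\in[\sigma_i,\sigma_{i+1})$ (or stays inside its small bucket), so no bucket membership is disturbed; every partial tour at a level below $\ell$ is carried over physically intact with only top parts rewired; and the freshly added tours of $X_{v,i}$ satisfy the size invariant at levels $>\ell$ by Lemma~\ref{lem:extra2}(3) and have size exactly $h^{v,max}_{i,g}$ at level $\ell$ by Lemma~\ref{lem:extra2}(2).

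The cost bound is then a telescoping estimate: remapping and token insertion never enlarge any tour, so the only extra cost incurred at step $\ell$ is that of the extra tours $X_\ell$ designated to level $\ell$, whence $\cost(\OPT_1)\le\cost(\OPT)+\sum_{\ell=1}^{h}\cost(X_\ell)\le\opt+4\eps\cdot\opt=(1+4\eps)\opt$ on the conditioned event. Applying the invariant at $\ell=1$ (where $V_1=\{r\}$, so $\ell'$ ranges over all levels and $v'$ over all nodes) gives precisely the claimed small/big dichotomy at every node, and since $\calI'=\calI_1$ arises from an instance with $d(v)\ge1$ everywhere by only \emph{adding} tokens, each node of $\calI'$ still carries at least one token. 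The main obstacle I anticipate is the size invariant: one must confirm that the cascade of remappings at level $\ell$ leaves untouched the sizes already locked in at the lower levels, and that the two copies of each sampled tour carry enough slack --- this is exactly where the ratio $|X_{v,i}|/|G^v_{i,g}|\ge2$ from Lemma~\ref{lem:extra} and the two-bin packing argument inside Lemma~\ref{lem:extra2} are indispensable, since $\sum_{r\in G^v_{i,g}}|r|$ may be as large as $(1+\eps)$ times the combined pre-sampling load and yet must still fit into the two capacity-$Q$ copies.
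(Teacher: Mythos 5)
Your proposal is correct and follows essentially the same route as the paper's own proof: the same level-by-level modification $\OPT_h,\ldots,\OPT_1$, the same grouping and order-preserving remapping within big buckets, the same token-padding to group maxima, the same use of the sampled extra tours (via Lemmas~\ref{lem:extra-cost}, \ref{lem:extra}, and \ref{lem:extra2}) to cover orphant tokens, and the same telescoping cost bound. Your writeup is somewhat more explicit than the paper about conditioning on the joint high-probability event and about the inductive size invariant at lower levels, but these are presentational refinements rather than a different argument.
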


\begin{proof}
We will show how to modify $\OPT$ to a near-optimal solution $\OPT'$. We start from $\ell = h$ and let $ \OPT_\ell = \OPT$. For decreasing values of $\ell$ we show, for each $\ell$ how to modify $\OPT_{l+1}$ to obtain $\OPT_\ell$. We do this in the following manner: we do not modify partial tours in small buckets. However, for tours in big buckets, in each vertex/bucket pair $(v,b_i)$ in level $\ell-1$, we place them into $g$ groups $G_1^v, \ldots, G^v_g$ of equal sizes by placing the $i$'th $n_v/g$ partial tours into $G_i^v$. We have a mapping $f$ from each partial tour in $G^v_{i-1}$ to one in $G^v_{i}$ for $i \in \{2, \ldots, g\}$. We modify $\OPT_\ell$ to $\OPT_{\ell+1}$ in the following way: for each tour $\calT$ that has a partial tour $t \in G_i^v$, replace the bottom part of $\calT$ at $v$ from $t$ to $f(t)$ (which is in $G^v_{i-1}$). For each tour $t \in G^v_{i-1}$, we will add $h^{v, \max}_{i-1} - |t|$ many extra tokens at $v$. Note that by this change, the size of any tour such as $\calT$ can only decrease and we are not violating feasibility of the tour because $h^{v, \max}_{i-1} \le h_i^{v, \min}$. However, the tokens in $T_v$ picked up by the partial tours in $G^v_{i,g}$ are not covered by any tours. We can use Lemma \ref{lem:extra2} to show how we can use extra tours to cover the partial tours in $G^v_{i,g}$ such that the new partial tours have size exactly $h^{v,\max}_{i,g}$.  

We will inductively repeat this for levels $\ell-2, \ell-3, \ldots, 1$ and obtain $\OPT_1 = \OPT'$. Note that by adding extra tokens $h^{v, \max}_{i-1} - |t|$ for a tour $t \in  G^v_{i-1}$, we are enforcing that the coverage of each tour is the maximum size of tours in its group. In a big bucket, there are $g = (2\delta \log n)/\eps^2$ many group sizes, so there are $O(\log n/\eps^2)$ possible sizes for tours in big buckets at a node. In a small bucket, there can be at most $\alpha \log^3 n/\eps^2$ many tours and since there are $\tau = O(\log Q /\eps)$ many buckets, there can be at most $O((\log Q \log^3 n)/\eps^3)$ many tour sizes covering $T_v$.  

Using Lemma \ref{lem:extra-cost}, we know the cost of the extra tours is at most $4\eps \cdot \opt$ with high probability, so the cost of $\opt' 
\le (1 + 4\eps) \opt$.
\end{proof}


\subsection{Dynamic Program}

In this section we complete the proof of Theorem \ref{thm:tree}. We will describe how we can compute a solution of cost at most $(1 + 4\eps)\opt$ using dynamic programming and based on the existence of a near-optimum solution guaranteed using the structure theorem.
For each vertex/bucket pair, we do not know if the bucket is small or big, so we will consider subproblems corresponding to both possibilities. Informally, we will have a vector $\nvec \in [n]^{\tau}$ where if $i < 1/\eps$, $n_i$ keeps track of the exact number of tours of size $i$ and for $i \ge 1/\eps$, $\nvec_i$ keeps track of the number of tours in bucket $b_i$, or tours covering between $[\sigma_i, \sigma_{i+1})$ tokens. Let $o_v$ denote the total number of tokens to be picked up across all nodes in the subtree $T_v$. Since each node has at least one token, $o_v \ge |V(T_v)|$. We will keep track of three other pieces of information conditioned on whether $b_i$ is a small or big bucket. If $b_i$ is a small bucket, we will store all the tour sizes exactly. Since the number of tours in a small bucket is at most $\gamma = \alpha \log^3n /\eps^2$, we will use a vector $\tvec^i \in [n]^\gamma$ to represent the tours of a small bucket where $\tvec^i_j$ represents the size of $j$-th tour in bucket $b_i$. Suppose $b_i$ is a big bucket, there are $g = (2\delta \log n)/\eps^2 $ many tour sizes in the bucket corresponding to $n^g$ possibilities. For each big bucket $b_i$ at node $v$, we need to keep track of the following information, 
\begin{itemize}
    \item $\hvec^i_v \in [n]^g$ is a vector where $\hvec^i_{v,j} = h^{v, \max}_{i,j}$, which is the size of the maximum tour in group $j$ of bucket $i$ at node $v$. 
    \item $\lvec^i_v \in [n]^g$ is a vector where $\lvec^i_{v,j}$ denotes the number of partial tours covering $h^{v, \max}_{i,j}$ tokens which lies in group $j$ of bucket $i$ at node $v$. 
\end{itemize}
Let $\yvec_v$ denote a configuration of tours across all buckets of $v$. 
$$\yvec_v = [o_v, \nvec_v, (\tvec_v^1, \hvec_v^1, \lvec_v^1), (\tvec_v^2, \hvec_v^2, \lvec_v^2), \ldots, (\tvec_v^\tau, \hvec_v^\tau, \lvec_v^\tau)].$$
Note that a bucket $b_i$ is either small or big and cannot be both, hence given $(\tvec^i_v, \hvec^i_v, \lvec^i_v)$, it cannot be the case that $\tvec^i_v \neq \vec{0}, \hvec^i_v \neq \vec{0}$ and $\lvec^i_v \neq \vec{0}$. 
The subproblem $\abold[v, \yvec]$ is supposed to be the minimum cost collection of partial tours going down $T_v$ (to cover the tokens in $T_v$) and the cost of using the parent edge of $v$ having tour profile corresponding to $\yvec$. Our dynamic program heavily relies on the properties of the near-optimal solution in the structure theorem.  Let $v$ be a node. We will compute $A[\cdot, \cdot]$ in a bottom-up manner, computing $\abold[v, \yvec_v]$ after we have computed the entries for the children of $v$. 

The final answer is obtained by looking at the various entries of $\abold[r, \cdot ]$ and taking the smallest one. First, we argue why this will correspond to a solution of cost no more than $\opt'$. We will compute our solution in a bottom-up manner. 

For the base case, we consider leaf nodes. A leaf node $v$ with parent edge $e$ could have $o_v \ge 1$ tokens at $v$. We will set $\abold[v,\yvec_v] = 2 \cdot w(e) \cdot m_v$ where $m_v$ is the number of tours in $\yvec_v$ if the total sum of tokens picked up by the partial tours in $\yvec_v$ is exactly $o_v$. Recall that $f(e)$ is the load on (i.e. number of tours using) edge $e$.
From our structure theorem, we know there exists a near optimum solution such that each partial tour 
of $T_v$ has one of $O((\log Q \log^3 n)/\eps^3)$ tour sizes and for each small bucket, there are at most $\alpha \log^3n /\eps^2$ partial tours in it. For every big bucket, there are $g = (2 \delta \log n)/\eps^2$ many group sizes and every tour of bucket $i$ has one of these sizes. The base case follows directly from the structure theorem. 

To compute cell $\abold[v, \yvec_v]$, we would need to use another auxiliary table $\bbold$. Suppose $v$ has $k$ children $u_1, \ldots, u_k$ and assume we have already calculated $\abold[u_j, \yvec]$ for every $1 \le j \le k$ and for all vectors $\yvec$. Then we define a cell in our auxiliary table $\bbold[v,\yvec_v', j]$ for each $1 \le j \le k$ where $\bbold[v,\yvec_v', j]$ is the minimum cost of covering $T_{u_1} \cup \ldots \cup T_{u_j}$ where $\yvec_v'$ is the tour profile for the union of subtrees $T_{u_1} \cup \ldots \cup T_{u_j}$. In other words, $\bbold[v, \yvec_v', j]$ is what $\abold[v, \yvec_v]$ is supposed to capture when restricted only to the first $j$ children of $v$. We will set $\abold[v, \yvec_v] = \bbold[v, \yvec_v', k] + 2 \cdot w(e) \cdot m_v$ where $m_v$ is the number of different tours in $\yvec_v'$. We will assume the parent edge of the depot has weight 0. Suppose $T_{u_i}$ has $o_i$ tokens, then the number of tokens in $T_v$ is at least $1 + \sum_{i = 1}^k o_i$. To compute entries of $\bbold[v, \cdot, \cdot]$, we use both $\abold$ and $\bbold$ entries for smaller subproblems of $v$ in the following way:  

\textbf{Case 1:} j = 1: This is the case when we restrict the coverage to only the first child of $v$, $u_1$. 
\begin{equation*}
    \begin{split}
        \bbold[v,\yvec_v',1] & = \mi{\yvec'} \left\{ \abold[u_1, \yvec'] \right\} \\
    \end{split}
\end{equation*}
We will find the minimum cost configurations $\yvec'$ such that $\yvec_v'$ and $\yvec'$ are consistent with each other. We say $\yvec_v'$ and $\yvec'$ are consistent if a tour in $\yvec_v'$ either only covers tokens at $v$ and does not visit any node below $v$ or $\yvec_v'$ consists of a tour from $\yvec'$ plus zero or more extra tokens picked up at $v$. Moreover, every tour in $\yvec'$ is part of some tour in $\yvec_v'$. 

\textbf{Case 2:} $2 \le j \le k$. We will assume we have computed $\bbold[v, \yvec', j - 1]$ and $\abold[u_{j}, \yvec'']$ and we have 
$$\bbold[v, \yvec_v', j] = \mi{\yvec', \yvec''} \{\bbold[v, \yvec', j - 1] + \abold[u_{j}, \yvec''] \}.$$
There are four possibilities for each partial tour $t_v$ at node $v$ going down $T_v$ covering tokens for subtrees rooted at children $u_1, \ldots, u_k$ . 
\begin{itemize}
    \item $t_v$ could be a tour that only picks up tokens at $v$ and does not pick up tokens from subtrees $T_{u_1} \cup \ldots \cup T_{u_{j}}$.
    \item $t_v$ could be a tour that picks up tokens at $v$ and picks up tokens only from subtrees $T_{u_1} \cup \ldots \cup T_{u_{j-1}}$. 
    \item $t_v$ could be a tour that picks up tokens at $v$ and picks up tokens only from subtree $T_{u_{j}}$.
    \item $t_v$ could be a tour that picks up tokens at $v$ and picks up tokens from subtrees $T_{u_1} \cup \ldots \cup T_{u_{j}}$. 
\end{itemize}
We would find the minimum cost over all configurations $\yvec_v', \yvec'$ and $\yvec''$ as long as $\yvec_v', \yvec'$ and $\yvec''$ are consistent. We say tours $\yvec_v',\yvec'$ and $\yvec''$ are consistent if there is a way to combine partial tours from $\yvec'$ and $\yvec''$ to form a partial tour in $\yvec_v'$ while also picking up extra tokens at node $v$. We will define consistency more rigorously in the next section. 

\subsection{Checking Consistency}
In our dynamic program, for the inner DP, we are given three vector $\yvec_v', \yvec', \yvec''$ where $v$ is a node having children $u_1, \ldots, u_{j}$. $\yvec'$ represents the configuration of tours in $T_{u_1} \cup \ldots \cup T_{j-1}$ and $\yvec''$ represents the configuration of tours covering $T_{u_{j}}$. For the case of checking consistency for case 1, we will assume $\yvec'' = \vec{0}$. Suppose we are given $o_v$ (for node $v$), $o_u$ for children $u_1,\ldots,u_{j-1}$, and
$o_w$ for $u_j$, we can infer that there are $o'_v = o_v - o_u - o_w$ extra tokens that need to be picked at $v$. $o'_v$ tokens need to be distributed amongst tours in $\yvec_v$. 
There are four possibilities for each tour $t_v$ in $\yvec_v'$. 
\begin{itemize}
    \item $t_v$ could be a tour that picks up extra tokens at $v$ and picks up tokens only from subtrees $T_{u_1} \cup \ldots \cup T_{u_{j-1}}$. 
    \item $t_v$ could be a tour that picks up extra tokens at $v$ and picks up tokens only from subtree $T_{u_{j}}$.
    \item $t_v$ could be a tour that picks up extra tokens at $v$ and picks up tokens from subtrees $T_{u_1} \cup \ldots \cup T_{u_{j}}$. 
\end{itemize}
For simplicity, we will refer to a tour picking up tokens in $T_{u_1} \cup \ldots \cup T_{u_{j-1}}$ to be $t_u$ and a tour picking up tokens from $T_{u_{j}}$ to be $t_w$.  
\begin{definition}
We say configurations $\yvec_v', \yvec'$ and $\yvec''$ are \textbf{consistent} if the following holds: 
\begin{itemize}
    \item Every tour in $\yvec'$ maps to some tour in $\yvec_v'$. 
    \item Every tour in $\yvec''$ maps to some tour in $\yvec_v'$. 
    \item Every tour in $\yvec_v'$ has at most two tours mapping to it and both tours cannot be from $\yvec'$ or $\yvec''$. 
    \item Suppose only one tour ($t_u$) maps to a tour $t_v$ in $\yvec_v'$. The number of extra tokens picked up by tour $t_v$ at $v$ is $|t_v| - |t_u|$. 
    \item Suppose $t_v$, a tour in $\yvec_v'$ has two tours: $t_u$ in $\yvec'$ and $t_w$ in $\yvec''$ mapped to it, then the number of extra tokens picked up by tour $t_v$ at $v$ is $|t_v| - |t_u| - |t_w|$. 
    \item The extra tokens at $v$, $o'_v = o_v - o_u - o_w$, are picked up by the tours in $\yvec_v'$. 
\end{itemize}
\end{definition}
Consistency ensures that we can patch up tours from subproblems and combine them into new tours in a correct manner while also picking up extra tokens at $v$. Now we will describe how we can compute consistency. Let $\zvec$ be a vector containing a subset of information contained in $\yvec$. 
$$\zvec_v = [\nvec_v, (\tvec_v^1, \hvec_v^1, \lvec_v^1), (\tvec_v^2, \hvec_v^2, \lvec_v^2), \ldots, (\tvec_v^\tau, \hvec_v^\tau, \lvec_v^\tau)].$$
From now on, we will choose to not write $\nvec_v$ explicitly since we can figure out the entries of the vector from $\lvec$. Suppose $|t_v|$ is the length of a tour in $\zvec_v'$. Let $\zvec_v' - t_v$ refer to the configuration $\zvec_v'$ having one less tour of size $|t_v|$. 
Let $\cbold[o'_v, \zvec_v', \zvec', \zvec''] = $ True if it is consistent and False otherwise. For the base case, $\cbold[0, \vec{0}, \vec{0}, \vec{0}] = $True. For the recurrence, we will look at all possible ways of combining $\zvec'$ and $\zvec''$ into $\zvec_v'$ while also picking up extra tokens $o'_v$. Note that $t_v$ is always non-zero, but both or one of $t_u$ or $t_w$ could be zero. 
$$\cbold[o'_v, \zvec_v', \zvec', \zvec''] = \underset{\substack{t_v, t_u, t_w \\ |t_v| = |t_u| + |t_w| + o_c}}{\bigvee} \cbold[o'_v - o_c, \zvec_v' - t_v, \zvec' - t_u, \zvec'' - t_w].$$

\subsection{Time Complexity}
We will work bottom-up and assume we have already pre-computed our consistency table. Computing $\bbold[\cdot, \cdot, \cdot]$ requires looking at previously computed $\bbold[\cdot, \cdot, \cdot]$ and  $\abold[\cdot,\cdot]$. Given $\yvec_v', \yvec'$ and $\yvec''$ which are all consistent, computing the cost of $\yvec_v'$ using $\yvec'$ and $\yvec''$ takes $O(1)$ time. Each $\yvec_v'$ consists of 
\begin{enumerate}
    \item $\nvec$ has $n^{O(\log n / \eps)}$ possibilities. 
    \item Each $\tvec^i$ has $n^{O(\log^3 n /\eps^2)}$ possibilities since there are $O(\log^3 n /\eps)$ tours in a small bucket. 
    \item Each $\hvec$ and $\lvec$ have $n^{O(g)}$ possibilities. Recall that $g = (2\delta \log n)/\eps^2 $, so each $\hvec$ and $\lvec$ have $n^{O(\log n /\eps^2)}$ possibilities. 
    \item Each triple $(\tvec^i, \hvec^i, \lvec^i)$ has $n^{O(\log^3 n /\eps^2)}$ possibilities. 
    \item $(\tvec^1, \hvec^1, \lvec^1), (\tvec^2, \hvec^2, \lvec^2), \ldots, (\tvec^\tau, \hvec^\tau, \lvec^\tau)$ have $n^{O(\tau \log^3 n /\eps^2)} = n^{O((\log Q\log^3 n)/\eps^3)}$ possibilities since $\tau = O(\log Q/\eps)$. 
\end{enumerate}
In total, each $\yvec_v'$ has $n^{O((\log Q\log^3 n)/\eps^3)}$ possibilities. For each $\yvec_v'$, we will have $n^{O((\log Q\log^3 n)/\eps^3)}$ possibilities for $\yvec_u$ and $\yvec_w$. Since there are $n^{O((\log Q\log^3 n)/\eps^3)}$ possibilities for $\yvec_v'$, the cost of computing the DP entries for a single node $v$ would be $n^{O((\log Q\log^3 n)/\eps^3)}$ and since there are $n$ nodes in the tree, the total time of computing the DP table assuming the consistency table is precomputed is $n^{O((\log Q\log^3 n)/\eps^3)}$. 

Before we compute our DP, we will first compute the consistency table $\cbold[\cdot, \cdot, \cdot, \cdot]$. Similar to our DP table, each entry of the consistency table has $n^{O((\log Q\log^3 n)/\eps^3)}$ possibilities. Assuming we have already precomputed smaller entries of $\cbold$ , there are $n^{O((\log Q\log^3 n)/\eps^3)}$ ways of picking $t_v, t_u$ and $t_w$. For a fixed $\yvec_v, \yvec_u, \yvec_w$ and $o'_v$, computing $\cbold[o'_v, \zvec_v', \zvec', \zvec'']$ takes $n^{O((\log Q\log^3 n)/\eps^3)}$ time. Since there are only $n^{O((\log Q\log^3 n)/\eps^3)}$ possibilities for $\zvec_v', \zvec'$ and $\zvec''$, the cost of computing all entries of the consistency table is $n^{O((\log Q\log^3 n)/\eps^3)}$. 

The time for computing both the DP table and consistency table is $n^{O((\log Q\log^3 n)/\eps^3)}$, so the total time taken by our algorithm is $n^{O((\log Q\log^3 n)/\eps^3)}$. For the unit demand case, since $Q \le n$, the runtime of our algorithm is $n^{O(\log^4 n/\eps^3)}$.

\subsection{Extension to Splittable CVRP}
We can extend our algorithm for unit demand CVRP in trees and show how we can get a QPTAS for splittable CVRP as long as the demands are quasi-polynomially bounded (Corollary \ref{cor:tree}). In our algorithm for unit demand CVRP, we viewed the demand of each node as a token placed at the node. For splittable CVRP, we could assume each node has $1 \le d(v) < nQ$ tokens and we can use the same structure theorem as before by modifying tours such that there are at most $O((\log Q \log^3 n)/\eps^3)$ different tour sizes for partial tours at a node. We can use the same DP to compute the solution. Each $\yvec_v$ consists of 
\begin{enumerate}
    \item $\nvec$ has $(nQ)^{O(\log n / \eps)}$ possibilities. 
    \item Each $\tvec^i$ has $(nQ)^{O(\log^3 n /\eps^2)}$ possibilities since there are $O(\log^3 n /\eps)$ tours in a small bucket. 
    \item Each $\hvec$ and $\lvec$ have $(nQ)^{O(g)}$ possibilities. Recall that $g = (2\delta \log n)/\eps^2 $, so each $\hvec$ and $\lvec$ have $(nQ)^{O(\log n /\eps^2)}$ possibilities. 
    \item Each triple $(\tvec^i, \hvec^i, \lvec^i)$ has $(nQ)^{O(\log^3 n /\eps^2)}$ possibilities. 
    \item $(\tvec^1, \hvec^1, \lvec^1), (\tvec^2, \hvec^2, \lvec^2), \ldots, (\tvec^\tau, \hvec^\tau, \lvec^\tau)$ have $(nQ)^{O(\tau \log^3 n /\eps^2)} = (nQ)^{O((\log Q\log^3 n)/\eps^3)}$ possibilities since $\tau = O(\log Q/\eps)$. 
\end{enumerate}
Similar to the analysis of the runtime of the unit demand case, the time complexity of computing the entries of DP tables $\abold, \bbold$, and the consistency table $\cbold$ is, $(nQ)^{O((\log Q\log^3 n)/\eps^3)}$. Suppose $Q = n^{O(\log^c n)}$, then the runtime of our algorithm is $n^{O(\log^{2c+4} n/\eps^3)}$.

\subsection{Height reduction}\label{sec:heightreduction}
In this section, we will prove Theorem \ref{thm:height-red}.
The first goal is to decompose the edge set of the tree $T$ into edge-disjoint paths. We will do so using the following lemma, similar to Lemma 5 from Cygan et al. \cite{DBLP:conf/esa/CyganGLPS12} to obtain such a decomposition in polynomial-time for a different problem.
\begin{lemma}\label{lem:height-de}
There exists a decomposition of the edge set of $T$ into edge-disjoint paths which can be grouped into $s = O(\log n)$ collections (called levels) $L_1, \ldots, L_s$ such that the following hold: 
\begin{enumerate}
    \item A root-to-leaf path $P$ in $T$ can be written as $P = Q_0Q_1 \ldots Q_s$ where $Q_i$ is either a path in $L_i$ or it is empty. 
    \item $P$ would use a path from a lower level $L_i$ before using a path from a higher level, $L_j$ where $i < j$. 
\end{enumerate}
\end{lemma}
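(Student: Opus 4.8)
The plan is to build the path decomposition recursively by repeatedly peeling off a "heavy path" from the root, in the style of the classical heavy-path (or heavy-leaf) decomposition, but organized so that the number of distinct ``levels'' is $O(\log n)$ rather than the depth of the recursion being $O(\log n)$ per root-leaf path in the worst case. Concretely, for the current subtree rooted at some vertex $x$, I would pick the child $c$ of $x$ whose subtree $T_c$ has the largest number of vertices (breaking ties arbitrarily), extend the path currently being grown into $c$, and recurse; when the path stops (either at a leaf, or — and this is the key choice — whenever it would enter a subtree containing more than half the vertices of the subtree it just left is \emph{not} the stopping rule; rather, I stop a path exactly at branch points and let each maximal heavy chain be one path). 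The paths in $L_1$ are the heavy chains hanging directly off the root component; removing all of them from $T$ leaves a forest, each component of which has at most half as many vertices as its parent component (because at every vertex we routed the path into the largest child, so every vertex \emph{not} on its parent's heavy chain sits in a subtree of size at most half the size of the component it was cut from). We then recurse on this forest to produce $L_2, L_3, \ldots$, and since the component sizes halve at each round, after $s = O(\log n)$ rounds everything is exhausted.

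With this construction, property~1 is essentially immediate: fix a root-to-leaf path $P$. As $P$ descends, it travels along a heavy chain of $L_1$ until it first leaves that chain at some branch vertex; from that vertex on it lies inside a level-$2$ component, where it travels along a heavy chain of $L_2$, and so on. So $P$ decomposes as $Q_0 Q_1 \cdots Q_s$ where $Q_i$ is a (possibly empty — empty when $P$ exits a component at the very vertex where it entered) subpath of a single path in $L_i$, and the indices are encountered in increasing order, which is exactly property~2. The reason $Q_i$ is contained in a \emph{single} path of $L_i$ and not several is that once $P$ leaves a level-$i$ heavy chain it has entered a strictly deeper level and never returns to level $i$.

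The main obstacle — and the step I would spend the most care on — is getting the bookkeeping of ``levels'' exactly right so that $s = O(\log n)$, and in particular making precise the claim that the maximal component size drops by a constant factor from one level to the next. The clean way is an invariant: define $N_i$ to be the maximum, over components $C$ remaining before round $i$ is carved out, of $|V(C)|$; then I would argue that for any component $C$ present at round $i+1$, its ``parent'' component $C'$ at round $i$ satisfies $|V(C)| \le |V(C')|/2$, because $C$ is obtained from $C'$ by deleting a heavy chain of $C'$, and every vertex not on that chain was routed away from the chain precisely because its own subtree (within $C'$) had size at most half of $C'$ — so $C$, being contained in one such subtree, has at most $|V(C')|/2$ vertices. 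Hence $N_{i+1} \le N_i / 2$, $N_1 \le n$, and after $s = \lceil \log_2 n \rceil + 1$ rounds every component is a single vertex (or empty), giving the $O(\log n)$ bound. A secondary technical point to verify is that the paths produced across all levels are genuinely edge-disjoint and together cover $E(T)$: edge-disjointness holds because each edge is placed into a path in exactly the round in which the deeper endpoint's component is first cut, and coverage holds because the rounds continue until no edges remain. I would also remark that, exactly as in Lemma~5 of Cygan et al.~\cite{DBLP:conf/esa/CyganGLPS12}, the whole decomposition is computed in polynomial time by a single bottom-up pass computing subtree sizes followed by a top-down pass extracting heavy chains.
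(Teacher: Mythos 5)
Your construction is the same as the paper's: the paper builds exactly this heavy-path (``D-path'') decomposition, assigns level $j$ to the chains peeled off at recursion depth $j$, and bounds $s$ by $O(\log n)$ via the same halving argument (every component created by deleting a heavy chain lies in a light child's subtree, hence has at most half the vertices of its parent component). The only blemish is your garbled parenthetical about the stopping rule --- a maximal heavy chain runs from the root of its component all the way to a leaf, passing \emph{through} branch points, and your later halving argument implicitly (and correctly) relies on this --- but the rest of the write-up makes the intended construction clear and matches the paper's proof.
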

\begin{proof}
Given a tree $T$, a D-path of $T$ is a root-to-leaf path $P= v_1v_2\ldots v_k$ such that $v_{i+1}$ is the child of $v_i$ with the largest number of nodes in the tree rooted at $T_{v_{i+1}}$. If there are multiple children with the same number of descendants, break ties arbitrarily. Let $P$ be a D-path. All the nodes in D-path $P$ receive label $1$. Let $T_1, \ldots, T_c$ be the set of trees obtained from $T - P$. Let $P_i$ be the D-path for $T_i$. We will label all nodes in $P_i$ to be 2. We will repeat this process recursively by finding D-paths for trees resulting from $T_i - P_i$ and labelling every node in the D-path with value corresponding to the depth of recursion. Each step involves finding a D-path, labelling the nodes of the path, deleting the path and recursively repeating the process for the resulting trees (with the value of the label increased by 1). Nodes of D-paths of trees at depth $\ell$ in the recursion receive labels $\ell$.  We will terminate this process when all nodes have been labelled. Let $L_j$ denote the collection of all D-paths whose nodes received the label $j$ (see figure \ref{fig:decomp}).

Note that after the first step, the trees $T_1, \ldots, T_c$ satisfy the property that $|V(T_i)| \le \frac{|V(T)|}{2}$ i.e., each tree is at most half of the original tree. This is because we pick the child with the largest number of nodes in the subtree rooted at it. After each step, the size of the new components formed is at most half the size of the previous component, hence we would use at most $\log n$ labels to label all nodes in the tree.
\end{proof} 

 The following is an example of such labelling where each color represents a level. 
\begin{figure}[H]
\centering
    \subfloat[\centering A tree before labelling.]{{\includegraphics[scale=0.42]{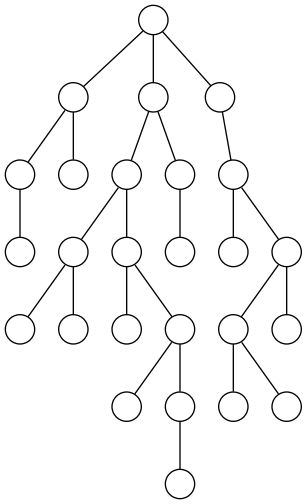} }}%
    \qquad
    \subfloat[\centering Blue edges are level 1, red edges are level 2 and green edges are level 3.]{{\includegraphics[scale=0.35]{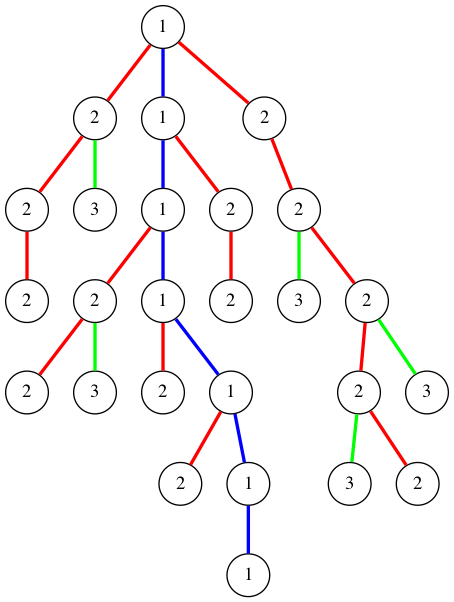} }}%
    \caption{An example of a tree before and after applying labels to nodes}%
    \label{fig:decomp}%
\centering
\end{figure}

\subsubsection{Creating a new tree}
Given a tree $T$, we can use Lemma \ref{lem:height-de} to decompose the tree into edge-disjoint paths. Next, we describe an algorithm to modify the tree recursively into a low height tree. The first step is to look at all the paths in $L_1$. $L_1$ is a special case since there is only one path in $L_1$ which goes from the depot to a leaf node. All the other levels $L_i$ could have multiple disjoint paths. Let $P$ be the path in $L_1$ and let $l(P)$ be the number of edges in path $P$. If $l(P) \le \frac{\delta\log n}{\eps}$ for
a $\delta>0$ to be specified, then we are done for $L_1$.

 However, if $l(P) > \frac{\delta\log n}{\eps}$, we will compress the path into a low height one. We will do a sequence of what is called up-pushes. We will pick 
 $s \leq \frac{\delta\log l(P)}{\eps}$ points to be \textbf{anchor points}. Let us call the anchor points $a_1, \ldots, a_s$ where $a_1$ is the anchor point closest to the root and $a_s$ is closest to the leaf. We will later show how to find these anchor points. 

 \begin{figure}[H]
\caption{A tree before an up-push (left) and after (right)  with reduced height. The blue edge connecting $a_i$ and $a_{i+1}$ has weight $w = w_p + w_q + w_s + w_t$}\label{fig:upush}
\footnotesize
\begin{tikzpicture}[scale=0.35,>=latex',line join=bevel,]
  \pgfsetlinewidth{1bp}
\pgfsetcolor{black}
  \draw [] (262.5bp,463.82bp) .. controls (262.5bp,452.76bp) and (262.5bp,438.4bp)  .. (262.5bp,427.08bp);
  \draw [] (255.35bp,389.29bp) .. controls (248.6bp,373.03bp) and (238.53bp,348.78bp)  .. (231.74bp,332.44bp);
  \definecolor{strokecol}{rgb}{0.0,0.0,0.0};
  \pgfsetstrokecolor{strokecol}
  \draw (249.5bp,362.5bp) node {$\ \ \ w_p$};
  \pgfsetcolor{red}
  \draw [] (269.65bp,389.29bp) .. controls (276.57bp,372.62bp) and (286.99bp,347.53bp)  .. (293.77bp,331.21bp);
  \definecolor{strokecol}{rgb}{0.0,0.0,0.0};
  \pgfsetstrokecolor{strokecol}
  \draw (286.5bp,362.5bp) node {$\ \ \ w_1$};
  \draw [] (212.62bp,298.16bp) .. controls (198.35bp,280.31bp) and (174.53bp,250.54bp)  .. (160.29bp,232.74bp);
  \draw (193.5bp,265.5bp) node {$\ \ \ w_q$};
  \pgfsetcolor{red}
  \draw [] (224.5bp,294.23bp) .. controls (224.5bp,278.66bp) and (224.5bp,256.48bp)  .. (224.5bp,240.06bp);
  \definecolor{strokecol}{rgb}{0.0,0.0,0.0};
  \pgfsetstrokecolor{strokecol}
  \draw (226.5bp,265.5bp) node {$\ \ \ w_2$};
  \pgfsetcolor{red}
  \draw [] (238.25bp,299.39bp) .. controls (257.12bp,280.71bp) and (290.73bp,247.46bp)  .. (308.59bp,229.79bp);
  \definecolor{strokecol}{rgb}{0.0,0.0,0.0};
  \pgfsetstrokecolor{strokecol}
  \draw (280.5bp,265.5bp) node {$\ \ w_3$};
  \draw [] (135.81bp,201.98bp) .. controls (119.55bp,184.02bp) and (91.525bp,153.06bp)  .. (75.24bp,135.08bp);
  \draw (113.5bp,168.5bp) node {$\ \ w_s$};
  \pgfsetcolor{red}
  \draw [] (148.5bp,197.23bp) .. controls (148.5bp,181.66bp) and (148.5bp,159.48bp)  .. (148.5bp,143.06bp);
  \definecolor{strokecol}{rgb}{0.0,0.0,0.0};
  \pgfsetstrokecolor{strokecol}
  \draw (150.5bp,168.5bp) node {$\ \ \ w_4$};
  \pgfsetcolor{red}
  \draw [] (163.4bp,203.58bp) .. controls (185.37bp,185.27bp) and (226.19bp,151.26bp)  .. (247.95bp,133.12bp);
  \definecolor{strokecol}{rgb}{0.0,0.0,0.0};
  \pgfsetstrokecolor{strokecol}
  \draw (214.5bp,168.5bp) node {$\ \ w_5$};
  \draw [] (54.811bp,102.02bp) .. controls (47.032bp,84.831bp) and (35.071bp,58.404bp)  .. (27.264bp,41.157bp);
  \draw (46.498bp,71.5bp) node {$\ \ w_t$};
  \pgfsetcolor{red}
  \draw [] (70.186bp,102.02bp) .. controls (78.013bp,84.724bp) and (90.075bp,58.076bp)  .. (97.877bp,40.837bp);
  \definecolor{strokecol}{rgb}{0.0,0.0,0.0};
  \pgfsetstrokecolor{strokecol}
  \draw (89.498bp,71.5bp) node {$\ \ w_6$};
\begin{scope}
  \definecolor{strokecol}{rgb}{0.0,0.0,0.0};
  \pgfsetstrokecolor{strokecol}
  \draw (262.5bp,482.0bp) ellipse (18.0bp and 18.0bp);
\end{scope}
\begin{scope}
  \definecolor{strokecol}{rgb}{0.0,0.0,0.0};
  \pgfsetstrokecolor{strokecol}
  \draw (262.5bp,407.5bp) ellipse (19.5bp and 19.5bp);
  \draw (262.5bp,407.5bp) node {$a_i$};
\end{scope}
\begin{scope}
  \definecolor{strokecol}{rgb}{0.0,0.0,0.0};
  \pgfsetstrokecolor{strokecol}
  \draw (224.5bp,314.0bp) ellipse (19.5bp and 19.5bp);
  \draw (224.5bp,314.0bp) node {$p$};
\end{scope}
\begin{scope}
  \definecolor{strokecol}{rgb}{1.0,0.0,0.0};
  \pgfsetstrokecolor{strokecol}
  \draw (300.5bp,337.0bp) -- (261.51bp,302.5bp) -- (339.49bp,302.5bp) -- cycle;
  \definecolor{strokecol}{rgb}{0.0,0.0,0.0};
  \pgfsetstrokecolor{strokecol}
  \draw (300.5bp,314.0bp) node {$T_1$};
\end{scope}
\begin{scope}
  \definecolor{strokecol}{rgb}{0.0,0.0,0.0};
  \pgfsetstrokecolor{strokecol}
  \draw (148.5bp,217.0bp) ellipse (19.5bp and 19.5bp);
  \draw (148.5bp,217.0bp) node {$s$};
\end{scope}
\begin{scope}
  \definecolor{strokecol}{rgb}{1.0,0.0,0.0};
  \pgfsetstrokecolor{strokecol}
  \draw (224.5bp,240.0bp) -- (185.51bp,205.5bp) -- (263.49bp,205.5bp) -- cycle;
  \definecolor{strokecol}{rgb}{0.0,0.0,0.0};
  \pgfsetstrokecolor{strokecol}
  \draw (224.5bp,217.0bp) node {$T_2$};
\end{scope}
\begin{scope}
  \definecolor{strokecol}{rgb}{1.0,0.0,0.0};
  \pgfsetstrokecolor{strokecol}
  \draw (320.5bp,240.0bp) -- (281.51bp,205.5bp) -- (359.49bp,205.5bp) -- cycle;
  \definecolor{strokecol}{rgb}{0.0,0.0,0.0};
  \pgfsetstrokecolor{strokecol}
  \draw (320.5bp,217.0bp) node {$T_3$};
\end{scope}
\begin{scope}
  \definecolor{strokecol}{rgb}{0.0,0.0,0.0};
  \pgfsetstrokecolor{strokecol}
  \draw (62.5bp,120.0bp) ellipse (19.5bp and 19.5bp);
  \draw (62.498bp,120.0bp) node {$t$};
\end{scope}
\begin{scope}
  \definecolor{strokecol}{rgb}{1.0,0.0,0.0};
  \pgfsetstrokecolor{strokecol}
  \draw (148.5bp,143.0bp) -- (100.31bp,108.5bp) -- (196.69bp,108.5bp) -- cycle;
  \definecolor{strokecol}{rgb}{0.0,0.0,0.0};
  \pgfsetstrokecolor{strokecol}
  \draw (148.5bp,120.0bp) node {$T_4$};
\end{scope}
\begin{scope}
  \definecolor{strokecol}{rgb}{1.0,0.0,0.0};
  \pgfsetstrokecolor{strokecol}
  \draw (262.5bp,143.0bp) -- (214.31bp,108.5bp) -- (310.69bp,108.5bp) -- cycle;
  \definecolor{strokecol}{rgb}{0.0,0.0,0.0};
  \pgfsetstrokecolor{strokecol}
  \draw (262.5bp,120.0bp) node {$T_5$};
\end{scope}
\begin{scope}
  \definecolor{strokecol}{rgb}{0.0,0.0,0.0};
  \pgfsetstrokecolor{strokecol}
  \draw (19.5bp,23.0bp) ellipse (19.5bp and 19.5bp);
  \draw (19.498bp,23.0bp) node {$a_{i+1}$};
\end{scope}
\begin{scope}
  \definecolor{strokecol}{rgb}{1.0,0.0,0.0};
  \pgfsetstrokecolor{strokecol}
  \draw (105.5bp,46.0bp) -- (57.31bp,11.5bp) -- (153.69bp,11.5bp) -- cycle;
  \definecolor{strokecol}{rgb}{0.0,0.0,0.0};
  \pgfsetstrokecolor{strokecol}
  \draw (105.5bp,23.0bp) node {$T_6$};
\end{scope}
\end{tikzpicture}
\begin{tikzpicture}[scale=0.35,>=latex',line join=bevel]
  \pgfsetlinewidth{1bp}
\pgfsetcolor{black}
  \draw [] (314.5bp,172.82bp) .. controls (314.5bp,161.76bp) and (314.5bp,147.4bp)  .. (314.5bp,136.08bp);
  \draw [] (294.9bp,113.73bp) .. controls (249.12bp,108.9bp) and (132.63bp,92.596bp)  .. (48.498bp,46.0bp) .. controls (43.523bp,43.244bp) and (38.461bp,39.707bp)  .. (33.989bp,36.279bp);
  \definecolor{strokecol}{rgb}{0.0,0.0,0.0};
  \pgfsetstrokecolor{strokecol}
  \draw (126.5bp,71.5bp) node {$0$};
  \draw [] (295.36bp,111.65bp) .. controls (257.37bp,103.49bp) and (170.49bp,82.158bp)  .. (105.5bp,46.0bp) .. controls (100.53bp,43.235bp) and (95.468bp,39.695bp)  .. (90.996bp,36.267bp);
  \draw (179.5bp,71.5bp) node {$0$};
  \draw [] (296.54bp,108.78bp) .. controls (267.43bp,97.575bp) and (208.57bp,73.549bp)  .. (162.5bp,46.0bp) .. controls (157.62bp,43.081bp) and (152.58bp,39.495bp)  .. (148.11bp,36.073bp);
  \draw (225.5bp,71.5bp) node {$0$};
  \pgfsetcolor{blue}
  \draw [] (298.73bp,104.37bp) .. controls (279.59bp,90.8bp) and (246.42bp,67.065bp)  .. (218.5bp,46.0bp) .. controls (214.25bp,42.798bp) and (209.7bp,39.265bp)  .. (205.53bp,35.984bp);
  \definecolor{strokecol}{rgb}{0.0,0.0,0.0};
  \pgfsetstrokecolor{strokecol}
  \draw (265.5bp,71.5bp) node {$w$};
  \pgfsetcolor{red}
  \draw [] (305.69bp,98.716bp) .. controls (296.76bp,81.691bp) and (283.06bp,55.574bp)  .. (274.45bp,39.157bp);
  \definecolor{strokecol}{rgb}{0.0,0.0,0.0};
  \pgfsetstrokecolor{strokecol}
  \draw (297.5bp,71.5bp) node {$\ w_1$};
  \pgfsetcolor{red}
  \draw [] (323.3bp,98.716bp) .. controls (332.23bp,81.691bp) and (345.93bp,55.574bp)  .. (354.55bp,39.157bp);
  \definecolor{strokecol}{rgb}{0.0,0.0,0.0};
  \pgfsetstrokecolor{strokecol}
  \draw (343.5bp,71.5bp) node {$\ \ \ w_2$};
  \pgfsetcolor{red}
  \draw [] (330.74bp,105.18bp) .. controls (358.98bp,87.233bp) and (416.73bp,50.543bp)  .. (443.83bp,33.319bp);
  \definecolor{strokecol}{rgb}{0.0,0.0,0.0};
  \pgfsetstrokecolor{strokecol}
  \draw (397.5bp,71.5bp) node {$w_3$};
  \pgfsetcolor{red}
  \draw [] (332.54bp,109.11bp) .. controls (366.62bp,97.012bp) and (442.75bp,69.838bp)  .. (506.5bp,46.0bp) .. controls (518.82bp,41.393bp) and (532.58bp,36.088bp)  .. (543.49bp,31.841bp);
  \definecolor{strokecol}{rgb}{0.0,0.0,0.0};
  \pgfsetstrokecolor{strokecol}
  \draw (456.5bp,71.5bp) node {$\ \ \ w_4$};
  \pgfsetcolor{red}
  \draw [] (333.73bp,112.14bp) .. controls (382.35bp,103.42bp) and (513.93bp,78.48bp)  .. (620.5bp,46.0bp) .. controls (633.28bp,42.103bp) and (647.29bp,36.713bp)  .. (658.22bp,32.243bp);
  \definecolor{strokecol}{rgb}{0.0,0.0,0.0};
  \pgfsetstrokecolor{strokecol}
  \draw (556.5bp,71.5bp) node {$\ w_5$};
  \pgfsetcolor{red}
  \draw [] (334.11bp,113.87bp) .. controls (394.69bp,108.49bp) and (583.87bp,89.143bp)  .. (734.5bp,46.0bp) .. controls (747.35bp,42.32bp) and (761.36bp,36.933bp)  .. (772.27bp,32.411bp);
  \definecolor{strokecol}{rgb}{0.0,0.0,0.0};
  \pgfsetstrokecolor{strokecol}
  \draw (658.5bp,71.5bp) node {$\ w_6$};
\begin{scope}
  \definecolor{strokecol}{rgb}{0.0,0.0,0.0};
  \pgfsetstrokecolor{strokecol}
  \draw (314.5bp,191.0bp) ellipse (18.0bp and 18.0bp);
\end{scope}
\begin{scope}
  \definecolor{strokecol}{rgb}{0.0,0.0,0.0};
  \pgfsetstrokecolor{strokecol}
  \draw (314.5bp,116.5bp) ellipse (19.5bp and 19.5bp);
  \draw (314.5bp,116.5bp) node {$a_i$};
\end{scope}
\begin{scope}
  \definecolor{strokecol}{rgb}{0.0,0.0,0.0};
  \pgfsetstrokecolor{strokecol}
  \draw (19.5bp,23.0bp) ellipse (19.5bp and 19.5bp);
  \draw (19.498bp,23.0bp) node {$p$};
\end{scope}
\begin{scope}
  \definecolor{strokecol}{rgb}{0.0,0.0,0.0};
  \pgfsetstrokecolor{strokecol}
  \draw (76.5bp,23.0bp) ellipse (19.5bp and 19.5bp);
  \draw (76.498bp,23.0bp) node {$s$};
\end{scope}
\begin{scope}
  \definecolor{strokecol}{rgb}{0.0,0.0,0.0};
  \pgfsetstrokecolor{strokecol}
  \draw (133.5bp,23.0bp) ellipse (19.5bp and 19.5bp);
  \draw (133.5bp,23.0bp) node {$t$};
\end{scope}
\begin{scope}
  \definecolor{strokecol}{rgb}{0.0,0.0,0.0};
  \pgfsetstrokecolor{strokecol}
  \draw (190.5bp,23.0bp) ellipse (19.5bp and 19.5bp);
  \draw (190.5bp,23.0bp) node {$a_{i+1}$};
\end{scope}
\begin{scope}
  \definecolor{strokecol}{rgb}{1.0,0.0,0.0};
  \pgfsetstrokecolor{strokecol}
  \draw (266.5bp,46.0bp) -- (227.51bp,11.5bp) -- (305.49bp,11.5bp) -- cycle;
  \definecolor{strokecol}{rgb}{0.0,0.0,0.0};
  \pgfsetstrokecolor{strokecol}
  \draw (266.5bp,23.0bp) node {$T_1$};
\end{scope}
\begin{scope}
  \definecolor{strokecol}{rgb}{1.0,0.0,0.0};
  \pgfsetstrokecolor{strokecol}
  \draw (362.5bp,46.0bp) -- (323.51bp,11.5bp) -- (401.49bp,11.5bp) -- cycle;
  \definecolor{strokecol}{rgb}{0.0,0.0,0.0};
  \pgfsetstrokecolor{strokecol}
  \draw (362.5bp,23.0bp) node {$T_2$};
\end{scope}
\begin{scope}
  \definecolor{strokecol}{rgb}{1.0,0.0,0.0};
  \pgfsetstrokecolor{strokecol}
  \draw (458.5bp,46.0bp) -- (419.51bp,11.5bp) -- (497.49bp,11.5bp) -- cycle;
  \definecolor{strokecol}{rgb}{0.0,0.0,0.0};
  \pgfsetstrokecolor{strokecol}
  \draw (458.5bp,23.0bp) node {$T_3$};
\end{scope}
\begin{scope}
  \definecolor{strokecol}{rgb}{1.0,0.0,0.0};
  \pgfsetstrokecolor{strokecol}
  \draw (563.5bp,46.0bp) -- (515.31bp,11.5bp) -- (611.69bp,11.5bp) -- cycle;
  \definecolor{strokecol}{rgb}{0.0,0.0,0.0};
  \pgfsetstrokecolor{strokecol}
  \draw (563.5bp,23.0bp) node {$T_4$};
\end{scope}
\begin{scope}
  \definecolor{strokecol}{rgb}{1.0,0.0,0.0};
  \pgfsetstrokecolor{strokecol}
  \draw (677.5bp,46.0bp) -- (629.31bp,11.5bp) -- (725.69bp,11.5bp) -- cycle;
  \definecolor{strokecol}{rgb}{0.0,0.0,0.0};
  \pgfsetstrokecolor{strokecol}
  \draw (677.5bp,23.0bp) node {$T_5$};
\end{scope}
\begin{scope}
  \definecolor{strokecol}{rgb}{1.0,0.0,0.0};
  \pgfsetstrokecolor{strokecol}
  \draw (791.5bp,46.0bp) -- (743.31bp,11.5bp) -- (839.69bp,11.5bp) -- cycle;
  \definecolor{strokecol}{rgb}{0.0,0.0,0.0};
  \pgfsetstrokecolor{strokecol}
  \draw (791.5bp,23.0bp) node {$T_6$};
\end{scope}
\end{tikzpicture}
\centering
\end{figure}
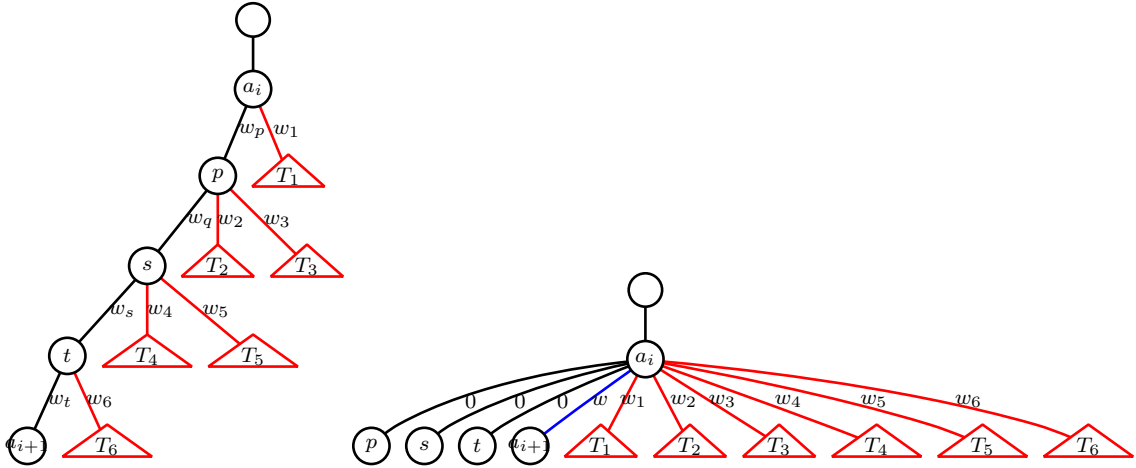

Each up-push acts on nodes in $P$ between two consecutive anchor points $a_i, a_{i+1}$ of the path $P$. During an up-push, we take all nodes in $P$ that lie between $a_i$ and $a_{i+1}$, which we will call $P'$, and make each node in $P'$ a child of $a_i$ with the edge connecting them to $a_i$ having weight 0. Suppose there is a child subtree $T_j$, which is a child of a node in $P'$ with edge connection cost $w_j$, the subtree $T_j$ will become a child of $a_i$ with the edge connecting them having cost $w_j$ (see Figure \ref{fig:upush}). 
Once we have completed up-pushes for all paths in $L_1$, we will find anchor points and perform up-pushes for each path in $L_2$. We will repeat this for paths in $L_i$ after our algorithm has finished up-pushes for paths in $L_{i-1}$.  

We will now describe how we can find the anchor points. We will first describe what we would like to achieve from anchor points. We want the cost associated with a path in $L_i$ for some tour to differ by at most $O(\eps)$ in our new tree compared to the the original tree. Suppose $P$ is a path
in $L_i$ and a tour $t$ is travelling $P$ down to node $u$ which is between
$a_i$ and $a_{i+1}$. Then the cost of the portion of the tour from root 
of $P$ to $a_i$ is the same in the original tree and the new tree; however
the cost to travel from $a_i$ to $u$ is zero. We would like this cost in the original tree to be a small factor of the cost from the root of $P$ to $a_i$.

Our algorithm works as follows from top to bottom. For any path $P$ in $L_i$, we will set the top node of the path to be $a_1$ and its child in $P$ to be $a_2$. Our goal is to pick $a_i$ and $a_{i+1}$ for $i > 2$ such that $w(a_i, a_{i+1}) > \eps \cdot w(a_1, a_i)$ and $w(a_i, v) \le \eps \cdot w(a_1, a_i)$ where $v$ is the last vertex on $a_i, a_{i+1}$ path before $a_{i+1}$. If there is no $a_{i+1}$ such that $w(a_i,a_{i+1}) > \eps \cdot w(a_1, a_i)$, then we set the last node of $P$ to be $a_{i+1}$. So, we pick $a_{i+1}$ to be the farthest vertex from $a_i$ in $P$ such that $w(a_i,v) \le \eps \cdot w(a_1, a_i)$ where $v$ is the last node before $a_{i+1}$. This in turn would imply that $w(a_1, a_{i+1}) > (1 + \eps) w(a_1, a_i)$, except if $a_{i+1}$ is the last node of the path. Hence,  $w(a_1, a_i) > (1 + \eps)^{i-2} w(a_1,a_2) > (1 + \eps)^{i-2}$. Since edge weights are at most $2n^3/\epsilon^2$, the number of anchor points are 
at most $\frac{\delta\log n}{\eps}$ for some constant $\delta>0$

\subsubsection{Analysis}
In the last section, we showed that every path in some level $L_i$ can be made to have at most $O\br{\frac{\log n}{\eps}}$ nodes.
\begin{lemma}
The height of the new tree is $O\br{\frac{\log^2 n}{\eps}}$. 
\end{lemma}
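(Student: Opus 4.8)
The plan is to combine the two facts already available. By Lemma~\ref{lem:height-de}, the edges of $T$ are partitioned into $s=O(\log n)$ levels $L_1,\dots,L_s$ in such a way that every root-to-leaf path of $T$ meets the levels in increasing order and contains at most one sub-path from each $L_i$. By the anchor-point analysis of the previous subsection, after all up-pushes are carried out the restructured version of any single path lying in one level has at most $\delta\log n/\eps$ nodes: its spine is the chain of anchor points $a_1,\dots,a_{s'}$ with $s'\le\delta\log n/\eps$, while every non-anchor node of the path, and every subtree that used to hang off it, is re-hung directly onto one of these anchor points. The desired bound is then essentially the product $O(\log n)\cdot O(\log n/\eps)$, but one has to check that the restructuring keeps the decomposition into one sub-path per level intact.

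First, I would fix an arbitrary node $v$ of the new tree $T'$ and follow the path from the root to $v$. After the up-pushes for $L_1$ (which is a single root-to-leaf path), this path begins by descending the $L_1$-spine and possibly ends at an anchor point or at a non-anchor leaf-child of an anchor, using at most $\delta\log n/\eps+1$ edges; if it continues beyond, it does so by crossing one zero-weight edge into a subtree that was re-hung onto some anchor point, and every node of that subtree carries a level label at least $2$. The structural point I would then verify is that the later up-pushes (for $L_2,\dots,L_s$) operate entirely inside such re-hung subtrees and never disturb a spine already built for a lower level: an up-push in $L_i$ moves only nodes of a path in $L_i$ and the subtrees hanging off that path, all of which have label exceeding $i$. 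Consequently the root-to-$v$ path of $T'$ decomposes into consecutive pieces, each contained in the restructured version of a single level-path, with the levels strictly increasing, exactly as in Lemma~\ref{lem:height-de}.

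I would close the argument with a short induction on the number of distinct levels present: a tree all of whose node labels lie in $\{i,\dots,s\}$ has, after its up-push construction, height at most $(s-i+1)(\delta\log n/\eps+2)$, where the additive $+2$ absorbs the optional non-anchor leaf at the bottom of a spine and the single transition edge into the next re-hung subtree. The base case is a lone restructured path; the inductive step peels off the spine of the lowest remaining level and applies the hypothesis to each re-hung subtree (all of whose labels lie in $\{i+1,\dots,s\}$). Taking $i=1$ yields height at most $s(\delta\log n/\eps+2)=O(\log^2 n/\eps)$ for $T'$, since $s=O(\log n)$.

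The step I expect to require the most care — the only genuine obstacle — is the bookkeeping in the middle paragraph: arguing rigorously that up-pushes performed for higher levels never lengthen the portion of a root-to-leaf path lying in an already-processed level, and that a maximal piece of the $T'$-path inside one restructured level-path really has $O(\log n/\eps)$ length; in particular one should note that once a non-anchor node of a path is re-hung onto an anchor point it becomes a leaf of that restructured component, so it cannot contribute further depth. Once that is in place, summing $O(\log n/\eps)$ over the $O(\log n)$ levels gives the claim.
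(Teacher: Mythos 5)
Your proposal is correct and follows essentially the same route as the paper: each root-to-leaf path decomposes into at most one sub-path per level (Lemma \ref{lem:height-de}), each restructured level-path contributes at most $O(\log n/\eps)$ to the depth after up-pushes, and multiplying the $O(\log n)$ levels by $O(\log n/\eps)$ gives the bound. The additional bookkeeping you flag (that later up-pushes act only inside re-hung subtrees of strictly higher label and so never lengthen an already-processed spine) is a point the paper leaves implicit, and your induction makes it rigorous without changing the substance of the argument.
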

\begin{proof}
In our algorithm, we first decomposed the tree into a set of edge-disjoint paths. The decomposition guarantees that one would first visit a lower level node in any root-toeaf path before visiting one with a higher level. Since there are at most $O(\log n)$ different levels, any root-to-leaf path will be a disjoint union of paths from levels $L_1, \ldots, L_s$ and there can be at most one path from each level. Since the height of a path in any level, $L_i$ is at most $O\br{\frac{\log n}{\eps}}$, and there are at most $O(\log n)$ different levels, the maximum height in our new tree is at most $O\br{\frac{\log^2 n}{\eps}}$
\end{proof} 

Suppose we take a path $P$ at some level $L_c$. Let us fix a tour in an optimal solution and let the farthest point in $P$ the tour travels to be between anchor points [$a_i$,$a_{i+1}$). We use [$a_i$,$a_{i+1}$) denote that the tour crosses $a_i$ but will not cross $a_{i+1}$. Let $T$ be the original tree and let $T'$ be the new tree with reduced height. A tour in the optimal solution for $T'$ can visit nodes lying between $a_i$ and $a_{i+i}$ at no additional cost after visiting $a_i$. Suppose the cost of traversing the edges of $P$ in $T'$ is denoted by $d$, then the cost of traversing the edges of $P$ in $T$ is going to be at most  $(1 + O(\eps))d$. This is because the cost of the edges between $a_i$ and the vertex before $a_{i+1}$ sum to at most $O(\eps)w(r,a_i)$. Hence, the additional cost to cover them in $T$ is only going to be at most an $\eps$ fraction more. 
\begin{lemma}
Let $T$ be the original tree, $T'$ be the new tree, $\opt$ be the cost of the optimal set of tours covering $T$ and $\opt'$ be the cost of the optimal set of tours covering $T'$. Then, 
$$\opt' \le \opt \le (1 + \eps)\opt'.$$
\end{lemma}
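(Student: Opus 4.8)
The plan is to prove the two inequalities separately; the left one is the easy direction and the right one carries all the content.

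For $\opt' \le \opt$, I would show that each up-push is \emph{distance non-increasing} on the (common) vertex set, i.e. $\dist_{T'}(x,y) \le \dist_T(x,y)$ for all vertices $x,y$. An up-push on the segment between consecutive anchors $a_i,a_{i+1}$ of a path $P$ replaces the $P$-subpath from $a_i$ to $a_{i+1}$ by a single edge of weight $\dist_T(a_i,a_{i+1})$ (so that anchor-to-anchor distance is preserved exactly), makes every intermediate node of that subpath a child of $a_i$ via a $0$-weight edge, and re-attaches each hanging subtree $T_j$ to $a_i$ via an edge of its original weight $w_j$. A short case analysis of how a shortest $x$--$y$ path in $T$ interacts with the collapsed segment (internal to it, crossing it, turning off into a hanging subtree, or disjoint from it) shows the corresponding path in $T'$ is never longer; composing over all up-pushes gives $\dist_{T'}\le \dist_T$. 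Then any feasible tour collection for $T$ is feasible for $T'$ at no larger cost: reroute each tour through $T'$ along shortest paths between the consecutive nodes it visits; it covers the same tokens (hence respects capacities) at cost $\sum \dist_{T'}(\cdot,\cdot)\le \sum \dist_T(\cdot,\cdot)$. Thus $\opt'\le\opt$.

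For $\opt \le (1+\eps)\opt'$, I would take an optimal solution of $T'$, assume w.l.o.g.\ that every tour is the DFS walk of the minimal subtree spanning the nodes it covers (this only decreases cost and makes ``one excursion per subtree'' legitimate), and \emph{lift} each tour to $T$ one decomposition-path-structure at a time. Fix a path $P$ with anchors $a_1,\dots,a_s$ ($a_1$ being the unique entry point of $P$'s region from above) and a tour $t$ entering $P$; let $a_{i(t)}$ be the deepest anchor of $P$ that $t$ reaches. The key observation is that the lift costs \emph{nothing extra} on any segment $[a_j,a_{j+1}]$ with $j<i(t)$, which $t$ traverses completely: in $T$ the tour simply walks the original $P$-subpath, whose total weight equals that of the $T'$-anchor-edge it replaced, visiting all intermediate nodes of the segment for free and paying exactly the same $2w_j$ as in $T'$ to dip into any hanging subtree $T_j$. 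Hence extra cost occurs only at the single \emph{farthest} segment $[a_{i(t)},a_{i(t)+1}]$, where in $T'$ the tour dipped from $a_{i(t)}$ into intermediate nodes and hanging subtrees at $0$ (resp.\ $2w_j$) cost, whereas in $T$ it must walk down the $P$-subpath of that segment; one excursion to the deepest node it needs and back costs at most $2\,\dist_T(a_{i(t)},v)$, where $v$ is the last vertex before $a_{i(t)+1}$, and by the anchor-selection rule this is at most $2\eps\cdot\dist_T(a_1,a_{i(t)})$.

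I would then charge this extra cost against cost $t$ provably already pays in the $T'$-solution: reaching $a_{i(t)}$ forces $t$ to traverse (at least twice) all $T'$-anchor-edges of $P$ from $a_1$ to $a_{i(t)}$, of total weight $\dist_T(a_1,a_{i(t)})$, so $t$'s $T'$-cost on the anchor-edges of $P$ is at least $2\,\dist_T(a_1,a_{i(t)})$; thus $t$'s extra cost on $P$ is at most $\eps$ times that. Since the anchor-edge sets of distinct paths are disjoint, summing over all paths $t$ enters and then over all tours bounds the total extra cost by $\eps$ times the total $T'$-cost on anchor-edges, which is at most $\eps\cdot\opt'$; as the base part of each lifted tour has cost equal to the original $T'$-tour's cost and the lift preserves which tour covers which token, the lifted solution is feasible for $T$ with cost at most $\opt'+\eps\cdot\opt'=(1+\eps)\opt'$, giving $\opt\le(1+\eps)\opt'$. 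The main obstacle is exactly this bookkeeping in the second direction --- recognizing that the error is \emph{localized} to the unique deepest segment a tour reaches on each path (so it cannot accumulate a $\log n$ or even a constant factor over the many segments a tour passes through), and that the round-trip anchor-edge cost it is charged to is genuinely disjoint across paths and genuinely incurred by $\OPT'$; the geometric growth $\dist_T(a_1,a_i) > (1+\eps)^{i-2}$ from the anchor rule is what keeps the per-segment bound $2\eps\,\dist_T(a_1,a_i)$ meaningful. The first direction is routine but still needs the small case analysis of how an up-push acts on an arbitrary shortest path.
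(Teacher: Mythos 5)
Your proposal is correct and follows essentially the same route as the paper: the easy direction via the fact that collapsing segments can only shorten tours, and the harder direction by lifting an optimal $T'$-solution and observing that the only extra cost per tour per path is the walk within the deepest segment reached, bounded by $\eps$ times the anchor-path cost the tour already pays. Your write-up is in fact more careful than the paper's own (which just asserts the extra cost is $O(\eps)$ times the cost of $P$); your explicit charging to the round-trip over the anchor edges of each path, and the disjointness of those charges across paths, is exactly the bookkeeping the paper leaves implicit.
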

\begin{proof}
Let us fix an optimal set of tours covering tree $T$ with cost $\opt$. Suppose we pick a tour $t$ and decompose this tour into paths each of which is entirely
within one level $L_i$. Suppose $P$ is a path of $t$ in some level $L_c$. Let the farthest point in $P$ the tour travels to be between anchor points [$a_i$,$a_{i+1}$). 
In our construction, the cost to visit any point lying between the root of $P$ and $a_i$ is the same in both $T$ and $T'$. However, in $T'$, the tour can visit any node lying between $a_i$ and $a_{i+1}$ for free, but the tour would have an additional cost to traverse these edges in tree $T$. Hence, for any path such $P$, the cost of a tour $t$ to traverse edges in $P$ is less in $T'$ compared to $T$. 
Since any tour costs no more in instance $T'$, we have $\opt' \le \opt$.

Conversely, the extra cost of covering points lying between $a_i$ and $a_{i+1}$ in $T$ is at most $O(\eps)$ times the cost of path $P$ (based on the property of anchor points). So the cost of using a path like $P$ is at most an $\eps$ factor more in $T$ compared to $T'$. Thus, the cost of any tour $t$ in $T$ is at most $1+\eps$ times the cost of the same tour in $T'$ and hence $\opt\leq (1+\eps)\opt'$
\end{proof} 

Instead of $T$, we can solve the instance on $T'$ with height $O(\log^2 n/\eps)$ and lift the solution for $T'$ back to a solution for $T$.  We obtain a solution for $T$ with cost at most $(1 + \eps)\opt$.

\section{QPTAS for Bounded Treewidth Graphs}
Given a graph $G = (V,E)$ with treewidth $k$, we will assume we are given a tree decomposition $T = (V',E')$. We will refer to $G$ as the graph and $T$ as the tree. We will refer to vertices in $V$ by \textbf{nodes} and vertices in $V'$ by \textbf{bags}. We will refer to edges in $E$ by \textbf{edges} and edges in $E'$ by \textbf{superedges}.
\begin{definition}
A \textbf{tree decomposition} of a graph $G$ is a pair $(T, \{B_t\}_{t \in V(T)})$, where $T$ is a tree whose every node $t \in V'$ is assigned a vertex subset $B_t \subseteq V(G)$, called a bag, such that the following three conditions hold: 
\begin{enumerate}
    \item $\cup_{t \in V(T)} B_t = V(G)$. In other words, every vertex of $G$ is in at least one bag. 
    \item For every $uv \in E(G)$, there exists a node $t$ of $T$ such that bag $B_t$ contains both $u$ and $v$. 
    \item For every $u \in V(G)$, the set $T_u = \{t \in V(T) : u \in B_t \}$, i.e., the set of nodes whose corresponding bags contain $u$, induces a connected subtree of $T$. 
\end{enumerate}
\end{definition} 
For a bag $s$, let $C_s$ denote the union of nodes in bags below $s$ including $s$. Bag $s$ forms a boundary or border between nodes in $C_s$ and $V(G) \setminus C_s$. We will assume an arbitrary bag containing the depot to be root of the tree decomposition. Let $k$ be the treewidth of our graph $G$. We will assume that following properties hold for our tree decomposition $T$ of $G$ from the work of Boedlander and Hagerup \cite{BodlaenderH95}, 
\begin{itemize}
    \item $T$ is binary. 
    \item $T$ has depth $O(\log n)$. 
    \item The width of $T$ is at most $k'=3k + 2$. 
\end{itemize}
To simplify  notation, by replacing $k'$ with $k$
we will assume $T$ has height $\delta \log n$ for some fixed $\delta > 0$ and each bag has width $k$. From the third property of a tree decomposition, we know that for every $u \in V(G)$, the set $T_u = \{t \in V(T) : u \in X_t \}$ i.e., the set of nodes whose corresponding bags contain $u$, induces a connected subtree of $T$. Since the bags associated with a node $u \in V(G)$ correspond to a subtree in $T$, we will place the demand/tokens of $u$ at the root bag of the tree $T_u$ i.e. the bag containing $u$ closest to the root bag of $T$. Since $T_u$ is a tree, we are guaranteed a unique root bag of $T_u$ exists. We are doing this to ensure that the demand of a client is delivered exactly once. 

Similar to how we showed the existence of a near-optimum solution for trees, we will modify the optimum solution $\OPT$ in a bottom-up manner by modifying the tours covering the set of nodes below bag $s$, $C_s$. For each bag $s$, we change the structure of the partial tours going down $C_s$ (by adding a few extra tours from the depot) and also adding some extra tokens for nodes in bag $s$ so that the partial tours that visit $C_s$ all have a size from one of polyogarithmic many possible sizes (buckets) while increasing the number and the cost of the tours by a small factor. Note that although a node can be in different bags, its initial demand is in one bag and we might add extra tokens to copies of it in other bags. 

Similar to the case of tree, we assume the bags of the tree decomposition are partitioned into levels $V_1,\ldots,V_h$ where $V_1$ is the bag containing the depot and $h$ is the height of $T$.  For every tour $\calT$ and every level $\ell$, we can define the notion of top and bottom part similar to the case of trees. For every $C_s$, a tour $\calT$ enters $C_s$ through bag $s$ using a node $x$ and exists through node $z$ where both $x$ and $z$ have to be in $s$. Note that $x$ and $z$ could be equal if the tour enters and exists $s$ using the same node. For a bag $s$, let $n_{s}^{x,z}$ be the number of partial tours covering nodes in $C_s$ that enter through $x$ and exit through $z$ in $s$. For each bag and entry/exit pair, we will define the notion of a small/big bucket similar to the case of trees. For a big bucket, we will place the $n_{s}^{x,z}$ tours (ordered by increasing size) into groups $G_1^{x,z,s}, \ldots, G_g^{x,z,s}$ of equal sizes. Let $h_{i}^{s,x,z, \max} (h_{i}^{s,x,z, \min}) $ refer to the maximum (minimum) size of the tours in $G^{x,z,s}_i$.

Similar to the case of trees, let $f$ be a mapping from a tour in $G_i^{x,z,s}$ to one in $G_{i-1}^{x,z,s}$. Now suppose we modify $\OPT$ to $\OPT'$ in the following way: for each tour $\calT$ that has a partial tour in $t \in G_i^{x,z,s}$, replace the bottom part of $\calT$ entering through $x$ and exiting through $z$ in $s$ from $t$ to $f(t)$ (which is in $G_{i-1}^{x,z,s}$). The only problem is that those tokens in $C_s$ that were picked up by the partial tours in $G_g^{x,z,s}$ are not covered by any tours and like the case of trees, these are \emph{orphant} tokens. For each tour $\calT$ and its (new) partial tour $t \in G_i^{x,z,s}$, if we add $h_i^{{x,z,s}, \max} - |t|$ extra tokens at $s$ to be picked up by $t$, then each partial tour has size exactly same as the maximum size of its group without violating the capacities. 
Similar to the case of trees, we will show that if $n_{s}^{x,z}$ is sufficiently large (at least polylogarithmic), then if we sample a small fraction of the tours of the optimum at random and add two copies of them (as extra tours), they can be used to cover the orphant tokens. 

\subsection{Changing $\OPT$ to a near-optimum structured solution}
Similar to the structure theorem for trees, we will modify the optimal solution $\OPT$ to a near-optimum solution $\OPT'$ having certain properties. We will start at the last level, and modify partial tours from $\OPT$ at level $\ell$ to obtain $\OPT_{\ell}$. We will then iteratively obtain $\OPT_{\ell -1}$ by modifying partial tours from $\OPT_\ell$ at level $\ell - 1$, and iteratively do this for each level until we obtain $\OPT_1 = \OPT'$. 

\begin{definition}
For a bag $s$, the $i$-th \text{bucket}, $b_i$, entering at $x$ and exiting at $z$ contains the number of tours of $\OPT_\ell$ having coverage between $[\sigma_i, \sigma_{i+1})$ tokens in $C_s$ where $\sigma_i$ is the $i$-th threshold value. We will denote this by a entry/exit-bag-bucket configuration $(s,b_i,x,z)$. Let $n_{s,i}^{x,z}$ be the number of tours in bucket $b_i$ entering through $x$ and exiting through $z$ in bag $s$. 
\end{definition}

\begin{definition}
An entry/exit-bag-bucket configuration $(s,b_i,x,z)$ is \textbf{small} if $n_{s,i}^{x,z}$ is at most $\alpha \log^2 n / \eps$ and is \textbf{big} otherwise,
for a constant $\alpha \ge \max\{1, 20\delta\}$.
\end{definition}

Note that for any bag $s$ and entry/exit-bag-bucket configuration $(s,b_i,x,z)$, if $(s,b_i,x,z)$ is small, we do not modify the partial tours in it. 
However, if $(s,b_i,x,z)$ is a big bucket, we create groups $G_{i,1}^{s,x,z}, \ldots, G_{i,g}^{s,x,z}$ of equal sizes, for $g = (2\delta \log n)/\eps$; so 
$|G_{i,j}^{s,x,z}|=\lceil n_{s,i}^{x,z}/g\rceil$. We also consider a mapping $f$ (as before) which maps (in the same order) the 
tours $t\in G^{s,x,z}_{i,j}$ to the tours in $G^{s,x,z}_{i,j-1}$ for all $1<j\leq g$. 
Consider set ${\bf T}_\ell$ of all the tours $\calT$ in $\OPT_\ell$ that visit a bag in one of the lower levels $V_{\geq \ell}$. Consider an arbitrary such tour $\calT$ that has a partial tour $t$ in a big entry/exit-bag-bucket configuration $(s,b_i,x,z)$, 
suppose $t$ belongs to group $G_{i,j}^{s,x,z}$. We replace $t$ with $f(t)$ in $\calT$.

Now, add some extra tokens at $x$ to be picked up by $\calT$ so that
the size of the partial tour of $\calT$ at $C_s$ is exactly $h_{i,j-1}^{s,x,z, \max}$.
If we make this change for all tours $\calT\in {\bf T}_\ell$, each partial tour of them at level $\ell$ that was in a group
$j<g$ of a big entry/exit-bag-bucket configuration $(s,b_i,x,z)$ is replaced with a smaller partial tour from group $j-1$ of the same big entry/exit-bag-bucket configuration; after adding extra tokens to $x$ at bag $s$ (if needed), the size is the maximum size from group $j-1$.  The tokens that were picked by
partial tours from $G_{i,g}^{s,x,z}$ for a big entry/exit-bag-bucket configuration $(s,b_i,x,z)$ are now orphant.
We are going to (randomly) select a subset of tours of $\OPT$ as "extra tours" and add them to $\OPT'$ and modify them
such that they cover all the tokens that are now orphant (i.e. those that were covered by partial tours of $G_{i,g}^{s,x,z}$ for all big entry/exit-bag-bucket configuration $(s,b_i,x,z)$ at level $\ell$).
Suppose we select each tour $\calT$ of $\OPT$ with probability $\eps$. We make two copies of the \emph{extra tour} and we designate both extra copies to bags at one of the levels $V_\ell$ that it visits with equal probability.

\begin{lemma}\label{lem:cost-btree}
The expected cost of extra tours selected is $2\eps \cdot \opt$. 
\end{lemma}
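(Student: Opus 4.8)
The plan is to follow the setup of the proof of Lemma~\ref{lem:extra-cost}, but since we only need a bound on the \emph{expected} cost (rather than a high-probability bound), the argument collapses to a single application of linearity of expectation, with no Chernoff/union bound required. As there, for an edge $e\in E(G)$ let $f(e)$ be the number of tours of $\OPT$ traversing $e$; each such tour uses $e$ at most twice (once in each direction), so the total contribution of $e$ to $\opt$ is $2\cdot w(e)\cdot f(e)$ and $\opt=\sum_{e\in E}2\cdot w(e)\cdot f(e)$, exactly as in Lemma~\ref{lem:extra-cost}. We may also assume the extra tours shortcut over every edge whose load is below the small/big threshold (this only decreases their cost), and the fact that both copies of a sampled tour are designated to a single level plays no role in the cost accounting.

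First I would fix an edge $e$, enumerate the tours $\calT_{e,1},\ldots,\calT_{e,f(e)}$ of $\OPT$ through $e$, and let $Y_{e,i}$ be the indicator that $\calT_{e,i}$ is selected, so $\Ex{Y_{e,i}}=\eps$. Writing $f'(e)=\sum_{i=1}^{f(e)}Y_{e,i}$ for the number of selected tours through $e$, the number of extra tours through $e$ in $\OPT'$ is $2f'(e)$ (two copies per selected tour), and by linearity of expectation $\Ex{f'(e)}=\eps\cdot f(e)$. Then, summing the contributions over all edges and using linearity of expectation once more, the expected cost of the extra tours is
$$\Ex{\textstyle\sum_{e\in E}2\cdot w(e)\cdot 2f'(e)}=\sum_{e\in E}2\cdot w(e)\cdot 2\,\Ex{f'(e)}=\sum_{e\in E}4\eps\cdot w(e)\cdot f(e)=2\eps\sum_{e\in E}2\cdot w(e)\cdot f(e)=2\eps\cdot\opt,$$
which is the claim.

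There is essentially no obstacle here; the only point worth double-checking is that passing from the tree to the bounded-treewidth setting does not change this bookkeeping — it does not, because the cost of any tour and of its two extra copies is accounted for edge-by-edge exactly as in the tree case, and the level designation only becomes relevant later, when showing that enough extra tours land in each big entry/exit-bag-bucket configuration (the analogue of Lemma~\ref{lem:extra}).
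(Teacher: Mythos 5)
Your proof is correct in substance and follows exactly the paper's route: no concentration bound is needed, and a single application of linearity of expectation (each sampled tour contributes twice its own cost, and is sampled with probability $\eps$) gives the claim. The one bookkeeping point to fix is that in a bounded-treewidth graph, unlike a tree, a tour need not traverse an edge in both directions, so "each tour uses $e$ at most twice" does not yield the identity $\opt=\sum_{e}2\,w(e)\,f(e)$; the paper avoids this by counting directed traversals $f^{+}(e)$ and $f^{-}(e)$ separately, so that $\opt=\sum_{e}w(e)\,(f^{+}(e)+f^{-}(e))$ holds exactly. Replacing your undirected count by this directed count (or simply arguing tour-by-tour, $\Ex{\sum_{\calT\ \text{sampled}}2\,c(\calT)}=2\eps\sum_{\calT}c(\calT)=2\eps\cdot\opt$) repairs the step without changing anything else.
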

\begin{proof}
Suppose $f^+(e)$ and $f^-(e)$ denote the number of tours traveling edge $e$ in each of the two directions.
So the contribution of edge $e$ to the optimal solution is $2\cdot w(e)\cdot (f^+(e) + f^-(e))$; $\opt = \sum_{e \in E} w(e) \cdot (f^+(e) + f^-(e))$. Let $m^{+}(e)$ ($m^{-}(e)$) denote the number of sampled tours from the tours contributing to $f^+(e)$ ($f^-(e)$). Since we used two extra copies for each sampled tour, the number of extra tours for an edge $e$ is $2(m^{+}(e) + m^{-}(e))$. Let $\calT_{e,1}, \ldots, \calT_{e,f^+(e) + f^-(e)}$ be the tours using $e$ in either directions. Like in the case of trees, it is possible for a tour to use edge $e$ in both directions.  Let $Y_{e,i}$ be a random variable which is 1 if tour $\calT_{e,i}$ is sampled and $0$ otherwise. 
$$\Ex{Y_{e,i}} = \Prob{\calT_{e,i} \text{ is sampled}} =  \eps.$$
Let $m^{+}(e) + m^{-}(e) = Y_e = \sum_{i = 1}^{f^+(e) + f^-(e)} Y_{e,i}$. By linearity of expectations, we have 
$$\Ex{m^{+}(e) + m^{-}(e)} = \Ex{Y_e} = \sum_{i = 1}^{f^+(e) + f^-(e)} \Ex{Y_{e,i}} = \sum_{i =1}^{f^+(e) + f^-(e)} \eps = \eps \cdot (f^+(e) + f^-(e)).$$
Summing up the extra cost over all edges, the expected cost of the extra tours is 
$$2 \sum_{e \in E} \Ex{m^{in}(e) + m^{out}(e)} = 2\eps \cdot \sum_{e \in E}(f^+(e) + f^-(e)) = 2\eps \cdot \opt.$$
\end{proof}

Therefore, we can assume that the expected cost of all extra tours added is at most $2 \eps \cdot \opt$. Let $X_\ell$ be the set of extra tours designated to bags in level $\ell$. We assume we add $X_\ell$ when we are building $\OPT_\ell$ (it is only for the sake of analysis). For each bag $s \in V_\ell$ and entry/exit-bag-bucket configuration $(s,b_i,x,z)$, let $X^{s,x,z}_{i}$ be those in $X_\ell$ whose partial tour in $C_s$ has a size in bucket $b_i$. Each extra tour in $X_\ell$ will not be picking any of the tokens in levels $V_{< \ell}$ (as they will be covered by the tours already in $\OPT_\ell)$; they are used to cover the orphant tokens created by partial tours of $G_{i,g}^{s,x,z}$ for each big entry/exit-bag-bucket configuration $(s,b_i,x,z)$ with $s \in V_\ell$; as described below. 

\begin{lemma}\label{lem:extra-btree}
For each level $V_\ell$, each bag $s\in V_\ell$ and big entry/exit-bag-bucket configuration $(s,b_i,x,z)$,
w.h.p. $|X^{s,x,z}_{i}|\geq \frac{\eps^2}{\delta\log n}\cdot n_{s,i}^{x,z}$.
\end{lemma}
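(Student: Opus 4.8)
The plan is to mirror the proof of Lemma~\ref{lem:extra} for trees, with the two changes dictated by the bounded-treewidth setting: the tree decomposition $T$ has height only $h=\delta\log n$ (rather than $\delta\log^2 n/\eps$), and a ``configuration'' is now a bag together with an entry/exit pair and a bucket, $(s,b_i,x,z)$, instead of a vertex/bucket pair.

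First I would fix a level $V_\ell$, a bag $s\in V_\ell$, and a big configuration $(s,b_i,x,z)$, and let $p_1,\dots,p_{n_{s,i}^{x,z}}$ be the partial tours of $\OPT$ entering $C_s$ at $x$, exiting at $z$, with coverage in $[\sigma_i,\sigma_{i+1})$; let $\calT^{(j)}$ be the tour of $\OPT$ from which $p_j$ is cut. For each $j$ let $A_j$ be the event that $\calT^{(j)}$ is sampled as an extra tour (so $\Prob{A_j}=\eps$) and $B_j$ the event that, given it is sampled, both copies of $\calT^{(j)}$ are designated to level $\ell$; since $\calT^{(j)}$ visits at most $h=\delta\log n$ levels and the designated level is chosen uniformly among those it visits, $\Prob{B_j\mid A_j}\ge 1/h\ge 1/(\delta\log n)$, so $\Prob{A_j\cap B_j}=\Prob{A_j}\,\Prob{B_j\mid A_j}\ge \eps/(\delta\log n)$. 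Letting $Y_j$ be the indicator that $\calT^{(j)}$ becomes an extra tour designated to level $\ell$, we have $\Ex{Y_j}=\Prob{A_j\cap B_j}\ge\eps/(\delta\log n)$; since each such tour contributes two copies which (after shortcutting) still cross the border of $C_s$ through $(x,z)$ along $p_j$, we have $|X^{s,x,z}_i|=2\sum_j Y_j$, and by linearity of expectation
\[
\Ex{|X^{s,x,z}_i|}=2\sum_j\Ex{Y_j}\ge\frac{2\eps}{\delta\log n}\cdot n_{s,i}^{x,z}.
\]
Next I would apply the Chernoff lower-tail bound with $\mu=\Ex{|X^{s,x,z}_i|}$: since $(s,b_i,x,z)$ is big we have $n_{s,i}^{x,z}>\alpha\log^2 n/\eps$, hence $\mu>\frac{2\eps}{\delta\log n}\cdot\frac{\alpha\log^2 n}{\eps}=\frac{2\alpha}{\delta}\log n\ge 40\log n$ using $\alpha\ge 20\delta$, so $\Prob{|X^{s,x,z}_i|<\mu/2}\le e^{-\mu/8}\le n^{-5}$. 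A union bound over all big configurations then finishes the argument: there are $n$ bags, each with at most $k$ nodes and hence at most $k^2$ entry/exit pairs and at most $\tau=O(\log Q/\eps)$ buckets per pair, i.e.\ at most $n k^2\tau=\poly(n,k)$ configurations in total (still $n^{O(1)}\polylog n$ when $Q$ is quasi-polynomially bounded), so with probability $1-o(1)$ every big configuration at every level satisfies $|X^{s,x,z}_i|\ge \mu/2\ge \frac{\eps}{\delta\log n}n_{s,i}^{x,z}\ge \frac{\eps^2}{\delta\log n}n_{s,i}^{x,z}$, as claimed.

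The step I expect to need the most care is the independence used in the Chernoff bound: a single tour of $\OPT$ may cross the boundary of $C_s$ through $(x,z)$ several times and so give rise to several of the $p_j$'s, so the indicators attached to individual partial tours are not independent. As in the tree case I would handle this by collapsing all partial tours of a configuration that come from the same $\OPT$-tour into one Bernoulli trial (this only helps: a sampled tour routed to level $\ell$ makes both of its copies, hence at least as many partial tours as it originally contributed to that configuration, available to cover the corresponding orphant tokens); after this grouping the trials range over distinct tours of $\OPT$ and are genuinely independent, and the expectation bound above is unaffected. The remainder of the construction---the cost bound of Lemma~\ref{lem:cost-btree} and the subsequent re-routing of the orphant tokens of $G_{i,g}^{s,x,z}$ onto these extra tours (the bounded-treewidth analogue of Lemma~\ref{lem:extra2})---then goes through exactly as for trees, with the constants adjusted to the shallower decomposition.
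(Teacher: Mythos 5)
Your proof is correct and follows essentially the same route as the paper's: sample indicator per $\OPT$-tour, the two events (sampled, designated to level $\ell$) giving expectation at least $\tfrac{2\eps}{\delta\log n}n_{s,i}^{x,z}$, a Chernoff lower tail using $n_{s,i}^{x,z}\ge\alpha\log^2 n/\eps$ and $\alpha\ge 20\delta$, and a union bound over the polynomially many configurations. Your explicit handling of the independence issue (collapsing multiple partial tours arising from one $\OPT$-tour into a single Bernoulli trial) is a point the paper leaves implicit, and your computation $\mu\ge\tfrac{2\alpha}{\delta}\log n\ge 40\log n$ is in fact cleaner than the paper's stated bound.
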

\begin{proof}
Suppose $(s,b_i,x,z)$ is a big entry/exit-bag-bucket configuration at some level $V_\ell$. Let $p_1, \ldots, p_{n_{s,i}^{x,z}}$ be the partial tours in the entry/exit-bag-bucket configuration $(s,b_i,x,z)$. Let the tour in $\OPT$ corresponding to $p_i$ be $\calT$. Two copies of tour $p_i$ are assigned to $b_i$ if both of the following events are true: 
\begin{itemize}
    \item Let $A_i$ be the event where tour $\calT$ is sampled as an extra tour. Since each tour is sampled with probability $\eps$, we have $\Prob{A_i} = \eps$. 
    \item Let $B_i$ be the event where tour $\calT$ is assigned to level $\ell$. There are $h = \delta \log n $ many levels and since $\calT$ (if sampled) is assigned to any one of its levels, $\Prob{B_i} \ge 1/h \ge 1/(\delta \log n)$. 
\end{itemize}
Let $Y_i$ be a random variable which is 1 if $p_i$ is an extra tour in $(v,b_i)$ and 0 otherwise. 
$$\Ex{Y_i} = \Prob{Y_i = 1} = \Prob{A_i \land B_i}  = \Prob{A_i} \cdot \Prob{B_i} \ge \eps/ (\delta \log n).$$
Let $Y^{s,x,z}_{i} = \sum_{i = 1}^{n_{s,i}^{x,z}} Y_i$ be the random variable keeping track of the number of sampled tours in $(s,b_i,x,z)$. The number of extra tours, $|X^{s,x,z}_{i}| = 2Y^{s,x,z}_{i}$ since we add two copies of a sampled tour to $X^{s,x,z}_{i} $. By linearity of expectation, we have 
$$\Ex{|X^{s,x,z}_{i}|} = 2\Ex{Y^{s,x,z}_{i}} = 2\sum_{i = 1}^{n_{s,i}^{x,z}} \Ex{Y_i} \ge \frac{2\eps}{\delta \log n} \cdot n_{s,i}^{x,z}. $$
We want to show that $|X^{s,x,z}_{i}| \ge \frac{\Ex{|X^{s,x,z}_{i} |}}{2} \ge \frac{\eps}{\delta \log n} \cdot n_{s,i}^{x,z} $ with high probability over all vertex-bucket pairs. 

Using Chernoff Bound with $\mu = \Ex{|X^{s,x,z}_{i}|} \ge \frac{2\eps^2}{\delta \log^2 n} \cdot n_{s,i}^{x,z} \ge 24 \log n$ since $n_{s,i}^{x,z} \ge \alpha \log^2n /\eps$ and $\alpha \ge 20\delta$. 
\begin{equation*}
    \begin{split}
        \Prob{|X^{s,x,z}_{i} | < \frac{\Ex{|X^{s,x,z}_{i} |}}{2}} & \le e^{-(5\log n)} = \frac{1}{n^5}
    \end{split}
\end{equation*}
Note that the above equation only shows the concentration bound for a single entry/exit-bag-bucket configuration. For a bag, there are $O(k^2)$ many entry/exit pairs. There are $O(kn)$ bags and $\tau = O(\log n /\eps)$ buckets, so the total number of entry/exit-bag-bucket configuration is at most $O(k^2n \log n/\eps)$. Suppose we do a union bound over all buckets, we get
$$\sum_{\text{all }  (s,b_i,x,z) \text{ configurations}} \Prob{|X^{s,x,z}_{i}| < \frac{\Ex{|X^{s,x,z}_{i}|}}{2}} \le \frac{1}{n}. $$
We showed that for every entry/exit-bag-bucket configuration $(s,b_i,x,z)$, $|X^{s,x,z}_{i}|\geq \frac{\eps}{\delta\log n}n_{s,i}^{x,z}$ holds with high probability. 
\end{proof}

\begin{lemma}\label{lem:extra2-bree}
Consider all bags $s\in V_\ell$, big entry/exit-bag-bucket configuration $(s,b_i,x,z)$ and the partial tours in $G_{i,g}^{s,x,z}$.
We can modify the tours in $X^{s,x,z}_{i}$ (without increasing the cost) and adding some extra tokens at nodes in $s$ (if needed) so that:
\begin{enumerate}
\item The tokens picked up by partial tours in $G_{i,g}^{s,x,z}$ are covered by some tour in $X^{s,x,z}_{i}$, and
\item The new partial tours that pick up the orphant tokens in $G_{i,g}^{s,x,z}$ 
have size exactly $h_{i,g}^{s,x,z, \max}$ and all tours still have size at most $Q$.
\item For each (new) partial tour of $X^{s,x,z}_i$ and every level $\ell'>\ell$, the size of partial tours of $X^{s,x,z}_{i}$ at a
bag $s'$ at level $\ell'$ is also one of $O((\log Q \log^2 n)/\eps^2)$ many possible sizes.
\end{enumerate}
\end{lemma}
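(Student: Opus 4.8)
The plan is to follow the proof of Lemma~\ref{lem:extra2} almost line for line; the one genuinely new feature is that a partial tour into a separator $C_s$ is now a walk between an ordered pair of boundary nodes $x,z\in s$, so every occurrence of a vertex/bucket pair $(v,b_i)$ is replaced by an entry/exit-bag-bucket configuration $(s,b_i,x,z)$, and the role of the edge above $v$ is played by the pair $(x,z)$ in $s$. First I would pad with tokens: for the fixed big configuration $(s,b_i,x,z)$ add $\sum_{t\in G_{i,g}^{s,x,z}}\bigl(h_{i,g}^{s,x,z,\max}-|t|\bigr)$ tokens at the entry node $x$, so that every partial tour in the last group $G_{i,g}^{s,x,z}$ now carries exactly $h_{i,g}^{s,x,z,\max}$ tokens; these are precisely the orphant tokens of this configuration that have to be re-covered, and $x$ is the only node of $s$ at which new tokens are placed for this configuration.

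Next I would invoke Lemma~\ref{lem:extra-btree}: w.h.p. $|X_i^{s,x,z}|\ge\frac{\eps}{\delta\log n}\,n_{s,i}^{x,z}$, and since $g=(2\delta\log n)/\eps$ we have $|G_{i,g}^{s,x,z}|=\lceil n_{s,i}^{x,z}/g\rceil$, hence $|X_i^{s,x,z}|\ge 2\,|G_{i,g}^{s,x,z}|$. Because $X_i^{s,x,z}$ consists of exactly two copies of each tour in a set $Y_i^{s,x,z}$ of sampled tours, this gives $|Y_i^{s,x,z}|\ge|G_{i,g}^{s,x,z}|$, so I can fix an injection $\phi\colon G_{i,g}^{s,x,z}\to Y_i^{s,x,z}$; inside each target tour $\phi(r_k)$ I select a bottom partial tour $p_k$ into $C_s$ that enters at $x$, exits at $z$, and lies in bucket $b_i$ (one exists by definition of $Y_i^{s,x,z}$), and since $r_k$ and $p_k$ lie in the same bucket, $h_{i,g}^{s,x,z,\max}=|r_k|\le(1+\eps)|p_k|$. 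Crucially, each tour in $Y_i^{s,x,z}$ has both of its copies designated only to level $\ell$, so each copy still carries no demand from levels $<\ell$ and has its full capacity $Q$ free.

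Now, for each sampled tour $\calT$ with copies $\calT_1,\calT_2$, let $p_1,\dots,p_m$ be all the bottom partial tours of $\calT$ at level $\ell$ selected above, possibly spread over several configurations at distinct bags of $V_\ell$ that $\calT$ visits. Their assigned replacement loads total $\sum_k|r_k|\le(1+\eps)\sum_k|p_k|\le(1+\eps)|\calT|\le(1+\eps)Q$, and each individual replacement load is at most $Q$; hence a greedy two-bin split, putting a maximal prefix of the $r_k$'s into $\calT_1$ and the rest into $\calT_2$, keeps both copies within capacity $Q$, which establishes property~2 and that all tours still have size at most $Q$. For routing, each copy keeps the unchanged top part of $\calT$, which already pays for reaching every boundary pair $\calT$ uses at level $\ell$; inside each $C_s$ for which it is responsible it replaces $\calT$'s own sub-walk by a walk from $x$ through the vertices of the $r_k$'s assigned to it and back to $z$, a valid tour in the metric. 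Property~1 follows since every orphant partial tour $r_k$ is then covered by a copy of $\phi(r_k)$. The cost accounting is exactly as in the tree case: the modified copies retain only the top parts of their source tours plus orphant sub-walks that were released from the tours of $\OPT_\ell$ by the $f$-swap, so the extra cost incurred in passing from $\OPT_\ell$ to $\OPT_{\ell-1}$ is at most the cost of the unmodified extra tours designated to level $\ell$; summing over $\ell$ and applying Lemma~\ref{lem:cost-btree} shows the modification does not increase the overall cost beyond the budget already accounted for.

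Finally, property~3 is bookkeeping through the induction: for each $r_k\in G_{i,g}^{s,x,z}$ and each level $\ell'>\ell$, the sub-walk of $r_k$ inside a separator $C_{s'}$ at level $\ell'$ has already been shaped by the modifications made at levels $>\ell$, so its coverage is either one of the $O(\log^2 n/\eps)$ values of a small configuration or equals $h_{i',j'}^{s',x',z',\max}$ for some group $j'$ of a big configuration, which is one of $O((\log Q\log n)/\eps^2)$ values; together with the coverage $h_{i,g}^{s,x,z,\max}$ at level $\ell$ this shows every modified extra tour of $X_i^{s,x,z}$ has, at each level, one of $O((\log Q\log^2 n)/\eps^2)$ possible sizes. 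I expect the main obstacle to be the routing and cost step for the entry/exit pairs: the load reassigned to a single sampled tour must be split across its two copies consistently with all the configurations $(s,x,z)$ that tour feeds, which is why the two-bin argument must be run once over all of $\calT$'s selected bottom partial tours with the global bound $\sum_k|p_k|\le|\calT|\le Q$ rather than configuration by configuration, and the copies' walks must be patched through the at most $k$ boundary nodes of each bag $s$ without paying for any edge outside the separators $C_s$ they serve.
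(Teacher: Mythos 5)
Your proposal follows the paper's proof essentially step for step: the same token-padding at the entry node $x$, the same counting via Lemma~\ref{lem:extra-btree} to get an injection from $G_{i,g}^{s,x,z}$ into the sampled tours, the same global two-bin split of the reassigned loads across the two copies of each sampled tour using $\sum_k |r_k|\le(1+\eps)|\calT|\le(1+\eps)Q$, and the same inductive bookkeeping for property~3. The extra detail you supply on routing the copies through the boundary pairs $(x,z)$ is implicit in the paper but correct, so there is nothing to change.
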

\begin{proof}
Our proof is going to be very similar to Lemma \ref{lem:extra2} for the case of trees. 
Our goal is to use the extra tours in $X^{s,x,z}_{i}$ to cover tokens picked up by partial tours of $G_{i,g}^{s,x,z}$ and we want each extra tour in $X^{s,x,z}_{i}$ to cover exactly $h_{i,g}^{s,x,z, \max}$ tokens. The tours in the last group, $G_{i,g}^{s,x,z}$, cover $\sum_{t \in G_{i,g}^{s,x,z}} |t|$ many tokens. We will add $\sum_{t \in G_{i,g}^{s,x,z}} (h_{i,g}^{s,x,z, \max} - |t|)$ extra tokens in node $x$ at bag $s$ for each entry/exit-bag-bucket configuration $(s,b_i,x,z)$ so that there are $h_{i,g}^{s,x,z, \max}$ tokens corresponding to each partial tour in $G_{i,g}^{s,x,z}$.  From now on, we will assume each partial tour in a last group $G_{i,g}^{s,x,z}$ covers $h_{i,g}^{s,x,z, \max}$ tokens.

Using Lemma \ref{lem:extra-btree}, we know with high probability that $|X^{s,x,z}_{i}|/|G_{i,g}^{s,x,z}| \ge 2$ since $|X^{s,x,z}_{i}| \ge \frac{\eps }{\delta\log n} \cdot n_{s,i}^{x,z} =2 |G_{i,g}^{s,x,z}|$.  Let $Y^{s,x,z}_{i} $ denote the number of tours in entry/exit-bag-bucket configuration $(s,b_i,x,z)$ that were sampled, so $|X^{s,x,z}_{i}| = 2|Y^{s,x,z}_{i} |$ and $|Y^{s,x,z}_{i} |\ge |G_{i,g}^{s,x,z}|$ with high probability. We will start by creating a one-to-one mapping $s : G_{i,g}^{s,x,z} \rightarrow Y^{s,x,z}_{i} $ which maps each tour in $G_{i,g}^{s,x,z}$ to a sampled tour in $Y^{s,x,z}_{i} $. We know such a one-to-one mapping exists since $|Y^{s,x,z}_{i} |\ge |G_{i,g}^{s,x,z}|$. 

Let $\calT$ be a sampled tour in $Y^{s,x,z}_{i} $ with two extra copies of it, $\calT_1$ and $\calT_2$ in $X^{s,x,z}_{i}$. Let the partial tours of $\calT$ at the bottom part in $V_\ell$ be $p_1, \ldots, p_m$. We know $|\calT| \ge \sum_{i = 1}^m |p_i|$. Like the case for trees, $s$ maps at most one tour in $G_{i,g}^{s,x,z}$ to each $p_j$. If a tour from $G_{i,g}^{s,x,z}$ maps to $p_j$, we will assume the load assigned to $p_j$ would be $r_j = h_{i,g}^{s,x,z, \max}$ and $p_j$ has load 0 if no tour is assigned to it.  

Suppose we think of $r_1, \ldots, r_m$ as items and $\calT_1$ and $\calT_2$ as bins of size $Q$. We might not be able to fit all items $r_1, \ldots, r_m$ into a bin of size $Q$ because $\sum_{i = 1}^m|r_i| \le (1 + \eps)\sum_{i = 1}^m |p_i| \le (1 + \eps)|\calT| \le (1 + \eps)Q$. Similar to the case of trees, we can show that we can assign $r_1, \ldots, r_j$ (for the maximum $j$) to $\calT_1$ such that $\sum_{i = 1}^j |r_i| \le Q$ and the rest, $r_{j+1}, \ldots, r_m$ to $\calT_2$ such that both $\calT_1$ and $\calT_2$ cover at most $Q$ tokens and all items $r_1, \ldots, r_m$ are covered by either $\calT_1$ or $\calT_2$. Hence, we have shown that the extra partial tours pick up exactly $h_{i,g}^{s,x,z, \max}$ while picking up orphant tokens from $G_{i,g}^{s,x,z}$. 

Also, the size of the extra tours after this modification at each bag $s'$ at any level $\ell'>\ell$ is essentially the same as what each of $r_i$'s were at those levels and since we go bottom to top in the tree, each of those partial tours $r_i$ have a size that either
belongs to a small bucket (and hence has one of $\alpha \log^2 n/\eps$ many sizes) or a big entry/exit-bag bucket (and hence has one of $O((\log Q\log n)/\eps^2)$ many sizes). 
Therefore, the size of partial tours of $X^{s,x,z}_i$ at any bag $s'$ at level $\ell'>\ell$ is one of $O((\log Q\log^2 n)/\eps^2)$ many sizes.
\end{proof} 

Therefore, using Lemma \ref{lem:extra2-bree}, all the tokens of $C_s$ remain covered by partial tours;
those partial tours in $G_{i,j}^{s,x,z}$ (for $1\leq j<g$) are tied to the top parts of the tours from group $G_{i,j+1}^{s,x,z}$ and the
partial tours of $G_{i,g}^{s,x,z}$ will be tied to extra tours designated to level $\ell$. We also add extra tokens at nodes in $s$ to be
picked up by the partial tours of $C_s$ so that each partial tour has a size exactly equal to the maximum size of a group.
All in all, the extra cost paid to build $\OPT_\ell$ (from $\OPT_{\ell+1}$) is for the extra tours designated to level $\ell$.

\begin{theorem}\label{lem:struct2} \textbf{(Structure Theorem)}
Let $\opt$ be the cost of the optimal solution to instance $\calI$. We can build an instance $\calI'$
such that each node has $\geq 1$ tokens and there exists a near-optimal solution $\OPT'$ for $\calI'$
having expected cost $(1 + 2\eps)\opt$ with the following property.
The partial tours going down $C_s$ for every bag $s$ in $\OPT'$ has one of $O((\log Q \log^2 n)/\eps^2)$ possible sizes.
More specifically, suppose $(s,b_i,x,z)$ is a entry/exit-bag-bucket configuration for $\OPT'$. Then either:
\begin{itemize}
\item $b_i$ is a small bucket and hence there are at most $\alpha \log^2 n/\eps$ many partial tours of $C_s$ whose size is in bucket $b_i$, or
\item $b_i$ is a big bucket; in this case there are $g = (2\delta \log n)/\eps $ many group sizes in $b_i$: 
$\sigma_i \leq h_{i,1}^{s,x,z,max}\leq \ldots\leq h_{i,g}^{s,x,z,max}<\sigma_{i+1}$ and every tour of bucket $i$ has one of these sizes. 
\end{itemize}
\end{theorem}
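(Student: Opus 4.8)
The plan is to mirror the bottom-up argument in the proof of Theorem~\ref{lem:struct1}, with ``subtree $T_v$'' replaced by ``the node set $C_s$ below bag $s$'' and ``partial tour of $T_v$'' replaced by ``partial tour of $C_s$ entering bag $s$ at node $x$ and leaving at node $z$''. I would initialize $\OPT_h=\OPT$ and, for $\ell=h-1,h-2,\dots,1$, build $\OPT_\ell$ from $\OPT_{\ell+1}$ by modifying only the tours that visit a bag of level $\ge\ell$ and only at the level-$\ell$ bags they pass through, leaving untouched every partial tour strictly below level $\ell$ (only the top parts get re-stitched). This keeps the key invariant propagating upward: once every partial tour below level $\ell+1$ in $\OPT_{\ell+1}$ has one of the few allowed sizes, the same holds for $\OPT_\ell$ below level $\ell$, because the only partial tours whose sizes we touch are those hanging off level-$\ell$ bags.

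For a fixed bag $s\in V_\ell$, entry/exit pair $(x,z)$ and big configuration $(s,b_i,x,z)$, I would do the grouping exactly as in the tree case: split the $n_{s,i}^{x,z}$ partial tours (ordered by coverage, with empty tours padded into $G_{i,1}^{s,x,z}$) into $g=(2\delta\log n)/\eps$ equal groups $G_{i,1}^{s,x,z},\dots,G_{i,g}^{s,x,z}$, take the order-preserving map $f\colon G_{i,j}^{s,x,z}\to G_{i,j-1}^{s,x,z}$, and for each tour $\calT$ owning a partial tour $t\in G_{i,j}^{s,x,z}$ replace $t$ by $f(t)$ and then pad the new partial tour up to $h_{i,j-1}^{s,x,z,\max}$ tokens by adding tokens at the copy of $x$ in bag $s$. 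Since $h_{i,j-1}^{s,x,z,\max}\le h_{i,j}^{s,x,z,\min}\le|t|$, each swap only shrinks $\calT$, so even a tour owning several level-$\ell$ partial tours (across several bags, or through several entry/exit uses of one bag) stays capacity-respecting; small configurations are left alone. The tokens formerly served by the last groups $G_{i,g}^{s,x,z}$ become orphant, and Lemma~\ref{lem:extra2-bree} — applicable because Lemma~\ref{lem:extra-btree} gives $|X_i^{s,x,z}|\ge 2|G_{i,g}^{s,x,z}|$ w.h.p.\ whenever $(s,b_i,x,z)$ is big (using $n_{s,i}^{x,z}\ge\alpha\log^2 n/\eps$ and $\alpha\ge 20\delta$) — reassigns them to the two copies of the sampled tours designated to level $\ell$, so that each such new partial tour has size exactly $h_{i,g}^{s,x,z,\max}$, no tour exceeds $Q$, and nothing at levels $>\ell$ changes.

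Iterating down to $\ell=1$ yields $\OPT'=\OPT_1$ on the modified instance $\calI'$, which has $\ge 1$ token at every node since we only ever add tokens. The only cost ever added is that of the extra tours, whose expected total is $\le 2\eps\cdot\opt$ by Lemma~\ref{lem:cost-btree}, hence $\Ex{\cost(\OPT')}\le(1+2\eps)\opt$. For the size count: a partial tour of $C_s$ with a fixed entry/exit pair lies in one of $\tau=O(\log Q/\eps)$ buckets, and within that bucket its size is either one of the at most $\alpha\log^2 n/\eps$ explicitly stored values of a small configuration or one of the $g=(2\delta\log n)/\eps$ group-maxima $\sigma_i\le h_{i,1}^{s,x,z,\max}\le\cdots\le h_{i,g}^{s,x,z,\max}<\sigma_{i+1}$ of a big one; multiplying gives $O((\log Q\log^2 n)/\eps^2)$ possible sizes, as claimed (the extra factor $O(k^2)$ for the number of entry/exit pairs enters only later, in the DP's running time).

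The step I expect to be most delicate is checking that the level-by-level swaps compose correctly: one tour $\calT$ can hit level $\ell$ through many bags and through several entry/exit pairs of a single bag, so I must argue that performing the group-shift $t\mapsto f(t)$ simultaneously on all of $\calT$'s level-$\ell$ partial tours still produces a valid capacity-$Q$ tour (it does, because each replacement is a pointwise decrease in load) and that the bookkeeping of where the added tokens live — at the specific node $x\in s$, consistent with the tree-decomposition convention that each client's original demand sits in exactly one bag — never creates demand that is counted twice. Everything else is a direct transcription of the tree argument, with the union bound now taken over the $O(k^2 n\log n/\eps)$ entry/exit-bag-bucket configurations instead of $O(n\log n/\eps)$ vertex/bucket pairs.
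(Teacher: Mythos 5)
Your proposal is correct and follows essentially the same route as the paper's proof: the same bottom-up level-by-level modification, the same grouping into $g$ equal groups per big entry/exit-bag-bucket configuration with the order-preserving shift $f$ and token-padding at $x$, the same use of Lemmas~\ref{lem:cost-btree}, \ref{lem:extra-btree} and \ref{lem:extra2-bree} to cover orphant tokens within expected extra cost $2\eps\cdot\opt$, and the same counting of possible sizes. The only difference is that you explicitly flag (and resolve) the composition issue when one tour owns several level-$\ell$ partial tours, a point the paper leaves implicit.
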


\begin{proof}
We will show how to modify $\OPT$ to a near-optimal solution $\OPT'$. We start from $\ell = h$ and let $ \OPT_\ell = \OPT$. For decreasing values of $\ell$ we show, for each $\ell$ how to modify $\OPT_{\ell+1}$ to obtain $\OPT_\ell$. We do this in the following manner: we do not modify partial tours in small entry/exit-bag-bucket configuration. However, for tours in big entry/exit-bag-bucket configuration $(s,b_i,x,z)$ in level $\ell-1$, we place them into $g$ groups $G_{i,1}^{s,x,z},\ldots,G_{i,g}^{s,x,z}$ of equal sizes by placing the $i$'th $n_{s,i}^{x,z}/g$ partial tours into $G_{i,j}^{s,x,z}$. We have a mapping $f$ from each partial tour in $G_{i,j-1}^{s,x,z}$ to one in $G_{i,j}^{s,x,z}$ for $j \in \{2, \ldots, g\}$. We modify $\OPT_\ell$ to $\OPT_{l+1}$ in the following way: for each tour $\calT$ that has a partial tour $t \in G_{i,j}^{s,x,z}$, replace the bottom part of $\calT$ at $s$ from $t$ to $f(t)$ (which is in $G_{i,j-1}^{s,x,z}$). For each tour $t \in G_{i,j-1}^{s,x,z}$, we will add $h_{i,j-1}^{s,x,z, \max} - |t|$ many extra tokens at $x$ in $s$. Note that by this change, the size of any tour such as $\calT$ can only decrease and we are not violating feasibility of the tour because $h_{i,j-1}^{s,x,z, \max} \le h_{i,j}^{s,x,z, \min}$. However, the tokens in $C_s$ picked up by the partial tours in $G_{i,g}^{s,x,z}$ are not covered by any tours. We can use Lemma \ref{lem:extra2-bree} to show how we can use extra tours to cover the partial tours in $G_{i,g}^{s,x,z}$ such that the new partial tours have size exactly $h_{i,g}^{s,x,z, \max}$.  

We will inductively repeat this for levels $\ell-2, \ell-3, \ldots, 1$ and obtain $\OPT_1 = \OPT'$. Note that by adding extra tokens $h_{i,j-1}^{s,x,z, \max} - |t|$ for a tour $t \in  G_{i,j-1}^{s,x,z}$, we are enforcing that the coverage of each tour is the maximum size of tours in its group. In a big bucket, there are $g = (2\delta \log n)/\eps$ many group sizes, so there are $O(\log n/\eps)$ possible sizes for tours in big entry/exit-bag-bucket configuration at a node. In a small entry/exit-bag-bucket configuration, there can be at most $\alpha \log^2 n/\eps$ many tours and since there are $\tau = O(\log Q /\eps)$ many buckets, there can be at most $O((\log Q \log^2 n)/\eps^2)$ many tour sizes covering $C_b$.  

Using Lemma \ref{lem:cost-btree}, we know the expected cost of the extra tours is at most $2\eps \cdot \opt$, so the expected cost of $\opt' 
\le (1 + 2\eps) \opt$.
\end{proof}

\subsection{Dynamic Program}\label{sec:tw-dp}
In this section we prove Theorem \ref{thm:treewidth} by presenting
a dynamic program that will compute a near optimum solution guaranteed 
by the structure theorem (Theorem \ref{lem:struct2}).
For a given bag $s$, we will estimate the number of tours entering and exiting $s$. Informally, we will have a vector $\nvec^{s,x,z} \in [n]^\tau$ where if $i < 1/\eps$, $\nvec_{i}^{s,x,z}$ keeps track of the exact number of tours covering $i$ tokens in $C_s$ by entering through $x$ and exiting though $z$ and if $i \ge 1/\eps$, $\nvec_{i}^{s,x,z}$ keeps track of the number of tours covering between $[\sigma_i, \sigma_{i+1})$ tokens. Let $a_s$ denote the total number of tokens to be picked up from nodes from bags below and including bag $s$. Since each bag $s$ has $k$ nodes, we use $\ovec_s \in [n]^{k}$ to denote the extra tokens to be picked up from nodes at bag $s$. If $v$ is a node in bag $s$, then $\ovec_{s,v}$ denotes the number of extra tokens to be picked up at $v$ in $s$. For a given entry/exit-bag-bucket configuration $(s,b_i,x,z)$, we will keep track of other pieces of information conditional on whether it is small or big. If entry/exit-bag-bucket configuration $(s,b_i,x,z)$ is small, we will store all tour sizes exactly. Since the number of tours in a small entry/exit-bag-bucket configuration is at most $\gamma = \alpha \log^2n /\eps$, we will use a vector $\tvec^{s,x,z,i} \in [n]^\gamma$ to represent the tours where  $\tvec^{s,x,z,i}_j$ represents the size of the $j$-th tour in the $i$-th bucket of tours covering $C_s$ entering through $x$ and exiting through $z$. 

If the entry/exit-bag-bucket configuration $(s,b_i,x,z)$ is big, there are $g = (2\delta \log n)/\eps$ many tour sizes corresponding to $n^{O(g)}$ possibilities. For each  entry/exit-bag-bucket configuration $(s,b_i,x,z)$, we need to keep track of the following information, 
\begin{itemize}
    \item $\hvec^{s,x,z,i} \in [n]^g$ is a vector where $\hvec^{s,x,z,i}_j = h_{i,j}^{s,x,z, \max}$, which is the size of the maximum tour which lies in group $G_{i,j}^{s,x,z}$ of bucket $i$ at bag $s$ entering through $x$ and exiting through $z$. 
    \item $\lvec^{s,x,z,i} \in [n]^g$ is a vector where $\lvec^{s,x,z,i}_j$ denotes the number of partial tours covering $h_{i,j}^{s,x,z, \max}$ tokens which lies in group $G_{i,j}^{s,x,z}$ of bucket $i$ at bag $s$ entering through $x$ and exiting through $z$. 
\end{itemize}
For a bag $s$ and entry/exit pairs, let $\pvec_{s,x,z}$ be a vector containing information about all tours entering and exiting $s$ through $x$ and $z$ across all buckets.
$$\pvec_{s,x,z} = [\nvec^{s,x,z}, (\tvec^{s,x,z,1},\hvec^{s,x,z,1},\lvec^{s,x,z,1}), (\tvec^{s,x,z,2},\hvec^{s,x,z,2},\lvec^{s,x,z,2}), \ldots, (\tvec^{s,x,z,\tau},\hvec^{s,x,z,\tau},\lvec^{s,x,z,\tau}) ]. $$
Similar to the case of trees, an entry/exit-bag-bucket configuration $(s,b_i,x,z)$ is either small or big and cannot be both, hence given $(\tvec^{s,x,z,i},\hvec^{s,x,z,i},\lvec^{s,x,z,i})$, it cannot be the case that $\tvec^{s,x,z,i} \neq \vec{0}, \hvec^{s,x,z,i} \neq \vec{0}$ and $\lvec^{s,x,z,i} \neq \vec{0}$. Since a bag $s$ contains $O(k)$ nodes, then we will let $\yvec_s$ denote a configuration of all partial tours covering tokens in $C_s$ which are entering and exiting $s$. Let $v_1, \ldots, v_d$ be the set of all nodes in $s$, then $\yvec_s$ contains information of tours entering and exiting $s$ through pairs of nodes in
$\{v_1, \ldots, v_d\}$. Note that a tour can enter and exit $s$ through the same node. 
$$\yvec_s = [a_s, \ovec_s, \pvec_{s,v_1,v_1}, \pvec_{s,v_1,v_2}, \ldots, \pvec_{s,v_{d}, v_{d-1}}, \pvec_{s, v_d, v_d}]. $$
The subproblem $\abold[s, \yvec_s]$ is supposed to be the minimum cost collection of partial tours covering $C_s$ having tour profiles corresponding to $\yvec_s$. Our dynamic program heavily relies on the properties of the near-optimal solution characterized by the structure theorem. We will compute $\abold[\cdot, \cdot]$ in a bottom-up manner, computing $\abold[s, \yvec_s]$ after we have computed entries for the children bags of $s$. 

The final answer is obtained by looking at various entries of the root bag of the tree decomposition, denoted by $r_s$. We will take the minimum cost entry amongst $\abold[r_s, \yvec_{r_s}]$ such that $\yvec_{r_s}$ is the configuration where all tours enter and exit $r_s$ only through the depot, $r$. We will compute our solution in a bottom-up manner. 

For any nodes $u,v$ in bag $s$, if there is no edge between $u$ and $v$, we can add an edge between them and the cost of the edge is the shortest path cost between $u$ and $v$ in $G$. Similarly, for two adjacent bags, $s$ and $s_1$, if $u \in s$ and $v \in s_1$ and if there is no edge between $u$ and $v$ in $G$, we will add an edge between them and the cost of the edge is the shortest path cost between $u$ and $v$ in $G$. If $u = v$, then the cost of the edge connecting them can be assumed to be zero. Let $\norm{\ovec_s} = \sum_{u \in s}\ovec_{s,u}$. 

For the base case, we consider leaf bags. A leaf bag $s$ could have $a_s \ge 1$ tokens where $a_s = \norm{\ovec_s}$. We will defer how we compute $\abold[s, \yvec_s]$ to the end of this section. Informally, we will set $\abold[s, \yvec_s]$ to be the minimum cost of the edges between nodes in bag $s$ used for the tours in $\yvec_s$ to pick up $\ovec_s$ tokens located at nodes in bag $s$. The total capacity of the tours in $\yvec_s$ should be exactly $a_s$ and a token at a node should be picked up by one of the tours in $\yvec_s$.  From our structure theorem, we know there exists a near optimum solution such that each partial tour has one of $O(\log Q \log^2 n / \eps^2)$ tour sizes and for each small bucket, there are at most $\alpha \log^2 n/ \eps$ partial tours in it. For every big bucket, there are $g = (2 \delta \log n)/\eps$ many group sizes and every tour of bucket $i$ has one of those sizes. We are computing all possible $\abold[s, \yvec_s]$ entries and from our structure theorem, we know one of them has near-optimum expected cost, so by enumerating all possibilities, our dynamic program finds a near-optimums solution for the leaf bag, proving the base case.

Recall that the tree $T$ is binary. Suppose bag $s$ has two children in $T$, $s_1$ and $s_2$. To compute cell $\abold[s, \yvec_s]$, we will use the entries of its children, $\abold[s_1, \yvec']$ and $\abold[s_2, \yvec'']$. Suppose $C_{s_i}$ has $a_{s_i}$ tokens, then $a_s = \norm{\ovec_s} + a_{s_1} + a_{s_2}$. $\hbold[\ovec_s, \yvec_s, \yvec', \yvec'']$ checks whether the tour profiles $\yvec_s, \yvec'$ and $\yvec''$ are consistent meaning that all tokens picked up by tours in $\yvec'$ and $\yvec''$ along with tokens in $s$, $\ovec_s$ are picked up by tours in $\yvec_s$. We will also define $\ibold[\cdot, \cdot, \cdot,\cdot]$ where $\ibold[\ovec_s, \yvec_s, \yvec', \yvec'']$ denotes the cost of using the edges in bag $s$, edges connecting nodes in $s$ and $s_1$, and edges connecting nodes in $s$ and $s_2$. We can think of $\ibold$ as the cost of using edges to patch up partial tours covering $C_{s_1}$ and partial tours covering $C_{s_2}$ to create tours covering $C_s$. We will explain in the next section how $\hbold$ and $\ibold$ are computed.  Recall $\ovec_s$ is part of $\yvec_s$. Suppose we have already computed the entries $\abold[s_1, \cdot]$ and $\abold[s_2, \cdot]$, we will compute $\abold[s, \cdot]$ in the following way:  
$$ \abold[s, \yvec_s] = \mi{\yvec', \yvec'' :\hbold[\ovec_s, \yvec_s, \yvec', \yvec''] = \text{True}}\{ \abold[s_1, \yvec'] +\abold[s_2, \yvec''] + \ibold[\ovec_s, \yvec_s, \yvec', \yvec''] \}.$$
There are four possibilities for each partial tour $t$ at bag $s$ going down $C_s$ covering tokens for the subtree rooted at children bags, $s_1$ and $s_2$ while also picking up extra tokens from nodes in $s$: 
\begin{itemize}
    \item $t$ could be a tour that picks up tokens from nodes at bag $s$ and does not visit or pick up tokens in $C_{s_1} \cup C_{s_{2}}$. 
    \item $t$ could be a tour that picks up tokens from nodes at bag $s$ and picks up tokens only from $C_{s_{1}}$. 
    \item $t$ could be a tour that picks up tokens from nodes at bag $s$ and picks up tokens only from $C_{s_{2}}$. 
    \item $t$ could be a tour that picks up tokens from nodes at bag $s$ and picks up tokens from $C_{s_1} \cup C_{s_{2}}$. 
\end{itemize} 
We would find the minimum cost over all configurations $\yvec_s, \yvec', \yvec''$ as long as $\yvec_s, \yvec',\yvec''$ are consistent. We say $\yvec_s, \yvec', \yvec''$ are consistent if there is a way to write each tour in $\yvec_s$ as a combination of at most one tour from $\yvec'$, at most one tour from $\yvec''$ while also picking up extra tokens from nodes in $s$. We would also require that all tokens in $\yvec'$ and $\yvec''$ are picked up by tours in $\yvec_s$.

For a leaf bag $s$, $\ibold[\ovec_s, \yvec_s, \vec{0}, \vec{0}]$ denotes the minimum cost of tours entering bag $s$ and visiting the nodes in $s$ such that all tokens in $s$ are picked up by some tour in $\yvec_s$. The last two entries are set to $\vec{0}$ since $s$ is a leaf bag, and has no children, and there are no other tours (apart from those in $\yvec_s$) entering or exiting through nodes in bag $s$. We will set $\abold[s, \yvec_s] = \ibold[\ovec_s, \yvec_s, \vec{0}, \vec{0}]$ since $\ibold[\ovec_s, \yvec_s, \vec{0}, \vec{0}]$ computes exactly the minimum cost collection of partial tours covering $C_s = s$ having tour profiles corresponding to $\yvec_s$. We will explain how to compute the entries of $\ibold[\cdot, \cdot, \cdot, \cdot]$ in the next section. 

\subsection{Checking Consistency}
In our dynamic program, we are given three vectors $\yvec_s, \yvec', \yvec''$ where $s$ is a bag having child bags $s_1$ and $s_2$.  $\yvec'$ represents the configuration of tours covering $C_{s_1}$ and $\yvec''$ represents the configuration of tours covering $C_{s_2}$. Given a $\yvec_s$, for each node $u$ in $s$, there are $\ovec_{s,u}$ many tokens to be picked up at $u$. We require the tokens for nodes in $s$ and tokens covered by the partial tours from $\yvec'$ and $\yvec''$ to be picked up by tours in $\yvec_s$. 
For simplicity, we will refer to a tour from $\yvec_s$ as $t_s$, $\yvec'$ as $t_u$ and a tour from $\yvec''$ as $t_w$. 
\begin{definition}
We say configurations $\yvec_s, \yvec'$ and $\yvec''$ are \textbf{consistent} if the following holds: 
\begin{itemize}
    \item Every tour in $\yvec'$ maps to some tour in $\yvec_s$. 
    \item Every tour in $\yvec''$ maps to some tour in $\yvec_s$. 
    \item Every tour in $\yvec_s$ has at most two mapping to it and both cannot be from $\yvec'$ or $\yvec''$. 
    \item Suppose only one tour $t_u$ ($t_w$) maps to a tour $t_s$ in $\yvec_s$. The number of extra tokens (from nodes in $s$) in total picked up by tour $t_s$ from nodes in bag $s$ is exactly $|t_s| - |t_u|$ ($|t_s| - |t_w|$).
    \item Suppose $t_s$ has two tours: $t_u$ in $\yvec'$ and $t_w$ in $\yvec''$ mapping to it, then the number of extra tokens (from nodes in $s$) picked up by tour $t_s$ at $s$ is exactly $|t_s| - |t_u| - |t_w|$. 
    \item All tokens of nodes at bag $s$, $\ovec_s$ are picked up tours in $\yvec_s$. 
\end{itemize}
\end{definition}

Consistency ensures that we can patch up tours from subproblems and combine them into new tours in a correct manner while also picking up extra tokens from nodes in $s$. We will describe how we can compute consistency. Instead of using $\yvec_s$, we will use $\zvec_s$ which is the same as $\yvec_s$, but excludes information about the number of tokens in a bag, and only tracks information about the number of tours passing through bag $s$. 
$$\zvec_s = [\pvec_{s,v_1,v_1}, \pvec_{s,v_1,v_2}, \ldots, \pvec_{s,v_{d}, v_{d-1}}, \pvec_{s, v_d, v_d}]. $$
We will similarly define $\zvec'$ and $\zvec''$. Suppose $t_{s,x_1,x_2}$ is a tour in $s$ which enters through $x_1$ and exits through $x_2$, let $\zvec_s - t_{s,x_1,x_2}$ refers to the configuration $\zvec_s$ having one less tour of size $|t_{s,x_1,x_2}|$ from tours entering through $x_1$ and exiting through $x_2$ in $s$. Recall that $\ovec_s$ is the vector of extra tokens at each node in bag $s$ which need to be covered by tours in $\zvec_s$. 

Given $\zvec_s, \zvec', \zvec''$ and $\ovec_s$, we will use the table $\hbold$ to check if $\zvec_s, \zvec', \zvec''$ are consistent.  Let $\hbold[\ovec_s, \zvec_s, \zvec', \zvec''] = $True if $\zvec_s, \zvec'$ and $\zvec''$ are consistent and False otherwise. For the base case, $\hbold[\vec{0},\vec{0},\vec{0},\vec{0}] = $True. For the recurrence, we will look at all possible ways of combining tours from $\zvec'$ and $\zvec''$ into $\zvec_s$ while also picking up extra tokens from bag $s$. For a tour $t_s$, let $\ovec'_{s,t_s}$ be a vector where $\ovec'_{s,t_s,u}$ denotes the number of extra tokens picked up by $t_s$ at node $u$ in bag $s$. Let $\norm{\ovec'_{s,t_s}} = \sum_{u \in s}\ovec'_{s,t_s,u}$ count the number of tokens picked up by $t_s$ from nodes in $s$. 

Recall that a tour $t_s$ merges with at most one tour $t_u$ from $\zvec'$ and at most one tour $t_w$ from $\zvec''$. Similar to the case of trees, we can write the recurrence of our consistency table as: 
$$\hbold[\ovec_s, \zvec_s, \zvec', \zvec''] = \underset{\substack{t_s, t_u, t_w, \ovec'_{s,t_s} \\ |t_s| = |t_u| + |t_w| + \norm{\ovec'_{s,t_s}}}}{\bigvee} \hbold[\ovec_s - \ovec'_{s,t_s}, \zvec_s - t_s, \zvec' - t_u, \zvec'' - t_w].$$

Although the above DP lets us check if $\yvec_s, \yvec'$ and $\yvec''$ are consistent, the entries of $\hbold$ are True/False and does not give us information about the optimum order in which tour $t_s$ should visit nodes in $s$ or the cost associated with such an ordering. Suppose the tour $t_s$ visited $k_s$ nodes in bag $s$, there are $O(k^{k_s})$ many paths that tour $t_s$ can choose to take and each path has a cost associated with it. Our goal is to find a path having the smallest cost while also picking up tokens from nodes in bag $s$. We will next compute the minimum cost way to visit nodes in $s$ and pick up tokens from them. Recall the recurrence of our dynamic program for $\abold$ is the following, 
$$ \abold[s, \yvec_s] = \mi{\yvec', \yvec'' :\hbold[\ovec_s, \yvec_s, \yvec', \yvec''] = \text{True}}\{ \abold[s_1, \yvec'] +\abold[s_2, \yvec''] + \ibold[\ovec_s, \yvec_s, \yvec', \yvec''] \}.$$
The cost of using edges in $C_{s_1}$ and $C_{s_2}$ by the partial tours in $\yvec'$ and $\yvec''$ in $\yvec_s$ are accounted for by $ \abold[s_1, \yvec'] +\abold[s_2, \yvec'']$. However, we have not accounted for the cost of hopping from one node to the other in $s$ and also the cost of going from nodes in $s$ to nodes in child bags, $s_1$ and $s_2$. Note that a tour $t_s$ enters and exits through each node in $s$ at most once. Note that a tour visits a node $u$ in $s$ if it either has to pick up tokens at $u$ or if it uses $u$ to enter the child bag. If a tour $t_s$ enters and exist a node two or more times, we can short cut it so that it enters and exits only once. A tour in $t_s$ can visit up to $k$ nodes in a bag $s$ and it could use one of the nodes to enter a child bag ($s_1$ or $s_2$) and if so, it would use a node in $s$ to return to the bag $s$. This means the tour $t_s$ could visit up to $k$ nodes in $s$. Let $P_{t_s}$ be the ordered collection of edges where either both endpoints are in $s$ or one endpoint is in $s$ and the other is in $s_1 \cup s_2$. There are $O((3k)^{3k})$ possible permutations of for $P_{t_s}$ and $t_s$ could pick up at most $Q$ tokens from each node that it visits and each permutation has an associated cost with it. The number of possibilities for $P_{t_s}$ characterized by the the order of visiting nodes and the number of tokens picked up by tour $t_s$ from the $k$ nodes in bag $s$ is at most $O(Q^k (3k)^{3k})$. We will let $\cost(P_{t_s})$ denote the cost of the edges in $P_{t_s}$. The following figure illustrates an example of one such tour $t_s$ (in red) and $P_{t_s}$ (in blue). 
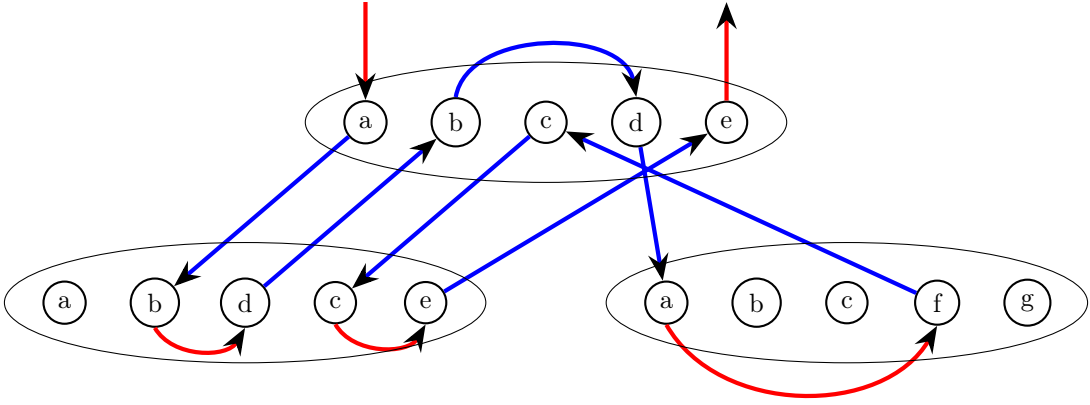
\begin{figure}[H]
\centering
\begin{tikzpicture}[scale=0.8]
\begin{scope}[every node/.style={circle,thick,draw}]
    \node (A_1) at (2,3) {a};
    \node (B_1) at (3.5,3) {b};
    \node (C_1) at (5,3) {c};
    \node (D_1) at (6.5,3) {d};
    \node (E_1) at (8,3) {e};
    
    \node (A_2) at (-3,0) {a};
    \node (B_2) at (-1.5,0) {b};
    \node (C_2) at (1.5,0) {c};
    \node (D_2) at (0,0) {d};
    \node (E_2) at (3,0) {e};
    
    \node (A_3) at (7,0) {a};
    \node (B_3) at (8.5,0) {b};
    \node (C_3) at (10,0) {c};
    \node (F_3) at (11.5,0) {f};
    \node (G_3) at (13,0) {g};
\end{scope}

\begin{scope}[>={Stealth[black]},
              every node/.style={fill=white,circle},
              every edge/.style={}]
              
    \draw[ultra thick, red, ->] (2,5)-- (A_1.north);
    \draw[ultra thick, blue,->] (A_1) -- (B_2);
    \draw[ultra thick, red,->,bend right=60] (B_2.south) to (D_2.south);
    \draw[ultra thick, blue,->] (D_2) -- (B_1);
    \draw[ultra thick, blue,->,bend right=280] (B_1.north) to (D_1.north);
    \draw[ultra thick, blue,->] (D_1) -- (A_3);
    \draw[ultra thick, red,->,bend right=60] (A_3.south) to (F_3.south);
    \draw[ultra thick, blue,->] (F_3) -- (C_1);
    \draw[ultra thick, blue,->] (C_1) -- (C_2);
    \draw[ultra thick, red,->,bend right=60] (C_2.south) to (E_2.south);
    \draw[ultra thick, blue,->] (E_2) -- (E_1);
    \draw[ultra thick, red, ->] (E_1.north)-- (8,5);
\end{scope}

\draw \boundellipse{0,0}{4}{1};
\draw \boundellipse{10,0}{4}{1};
\draw \boundellipse{5,3}{4}{1};

\end{tikzpicture}
\caption{Blue edges represent one such edge set for a particular tour $t_s$}
\end{figure}
Although $\hbold$ tells us if $\yvec_s, \yvec'$ and $\yvec''$ are consistent, $\hbold[\ovec_s, \zvec_s, \zvec', \zvec'']$ does not give us the cost of patching up $\yvec'$ and $\yvec''$ to form $\yvec_s$. We will use $\hbold$ to compute $\ibold$. Let $\ibold[\ovec_s, \zvec_s, \zvec', \zvec'']$ denote the cost of using the edges in bag $s$, edges connecting nodes in $s$ and $s_1$, and edges connecting nodes in $s$ and $s_2$. We can think of $\ibold$ as the cost of using edges to patch up partial tours covering $C_{s_1}$, $\zvec'$ and partial tours covering $C_{s_2}$,
$\zvec''$, to create tours covering $C_s$, $\zvec_s$. For the base case, we will set $\ibold[\vec{0},\vec{0},\vec{0},\vec{0}] =0$ and set all other entries to infinity. We will only compute an entry $\ibold[\ovec_s, \zvec_s, \zvec', \zvec'']$ if $\hbold[\ovec_s, \zvec_s, \zvec', \zvec'']=$True. Along with all possible values of $\ovec'_s, t_s,t_u,t_w$, we will also look at all possible paths $P_{t_s}$. In our recurrence, we are taking a tour $t_s$ from $\yvec_s$ along with maybe a tour $t_u$ from $\yvec'$, maybe a tour $t_w$ from $\yvec''$ along with tokens $\ovec'_s$ that $t_s$ covers at nodes in bag $s$. For such a tour $t_s$, there are $O(Q^k (3k)^{3k})$ many possibilities for $P_{t_s}$. For a fixed $P_{t_s}$, $\cost(P_{t_s})$ is the cost of forming $t_s$ from patching up $t_u$ and $t_w$ while picking up extra tokens from nodes in $s$. We will enumerate through all possibilities, break the recurrence into subproblems and find a solution of minimum cost. We can write the recurrence as follows: 
$$\ibold[\ovec_s, \zvec_s, \zvec', \zvec''] = \underset{\substack{t_s, t_u, t_w, P_{t_s}, \ovec'_{s,t_s} \\ |t_s| = |t_u| + |t_w| + \norm{ \ovec'_{s,t_s}}}}{\text{min}}\left\{ \cost(P_{t_s}) + \ibold[\ovec_s - \ovec'_{s,t_s}, \zvec_s - t_s, \zvec' - t_u, \zvec'' - t_w] \right\}.$$

\subsection{Time Complexity}
We will work bottom-up and analyze the time complexity of $\abold[\cdot, \cdot]$ on the assumption that we have already precomputed our consistency table $\ibold[\cdot, \cdot, \cdot, \cdot]$. Computing $\abold[s, \cdot]$ requires looking at entries of child bags in $\abold[\cdot, \cdot]$. Given $\yvec_s, \yvec'$ and $\yvec''$ which are consistent, computing the cost of $\abold[s, \yvec_s]$ takes $O(1)$ time. Each $\yvec_s$ consists of $O(k^2)$ different $\pvec_{s,u,v}$ vectors. Each $\pvec_{s,u,v}$ contains $\tau$ many triples $(\tvec^{s,x,z,i},\hvec^{s,x,z,i},\lvec^{s,x,z,i})$. 
\begin{enumerate}
    \item Each $\tvec^{s,x,z,i}$ has $n^{O(\log^2n/\eps)}$ possibilities since there are at most $O(\log^2 n/\eps)$ tours in a small bucket. 
    \item Each $\hvec^{s,x,z,i}$ and $\lvec^{s,x,z,i}$ have $n^{O(g)}$ possibilities. Recall that $g = (2 \delta \log n)/\eps$, so each $\hvec^{s,x,z,i}$ and $\lvec^{s,x,z,i}$ have $n^{O(\log n /\eps)}$ possibilities. 
    \item Each triple $(\tvec^{s,x,z,i},\hvec^{s,x,z,i},\lvec^{s,x,z,i})$ has $n^{O(\log^2 n/\eps)}$  possibilities. 
    \item Since $\pvec_{s,u,v}$ has $\tau = O(\log Q/ \eps)$ many such triples, the number of possible entries for $\pvec_{s,u,v}$ is $n^{O(\tau\log^2 n/\eps)} = n^{O(\log Q\log^2 n/\eps^2)}$. 
    \item Since $\yvec_s$ consists of $O(k^2)$ different entries of $\pvec$, so the total number of possible entries for each $\yvec_s$ is $n^{O(k^2\log Q\log^2 n/\eps^2)}$.
\end{enumerate}
Since there are $n^{O(k^2\log Q\log^2 n/\eps^2)}$ possibilities for $\yvec_s, \yvec'$ and $\yvec''$, the time of computing DP entries of $\abold[s, \cdot]$ for a single bag $s$ would take $n^{O(k^2\log Q\log^2 n/\eps^2)}$ and across all bags of the tree decomposition, it would still be $n^{O(k^2\log Q\log^2 n/\eps^2)}$. 

Now, we will analyze the time of computing the consistency table $\ibold[\cdot, \cdot, \cdot, \cdot]$. Assuming we have computed smaller entries, the cost of computing if $\ibold[\ovec_s, \zvec_s,\zvec', \zvec'']$ requires taking all possibilities way of picking $t_s, t_u, t_w, \ovec'_s, P_{t_s}$. Since there are at most $O(n)$ different tours, the number of possible ways of picking $t_s, t_u$ and $t_w$ is $O(n^3)$. Since the number of entries in the vector of $\ovec$ is $O(k)$, there are $n^{O(k)}$ possibilities for $\ovec'_s$. Each path $P_{t_s}$ consists of $O(k)$ nodes and at most $Q$ tokens can be picked up from each node, this would lead to $O(Q^k (3k)^{3k})= (nk)^{O(k)}$ many possibilities for $P_{t_s}$ since $Q \le n$. Hence, the total cost of computing a single entry of $\ibold[\cdot, \cdot, \cdot, \cdot]$ is $(nk)^{O(k)}$. Similar to the analysis for $\abold[\cdot, \cdot]$, there are $n^{O(k^2\log Q\log^2 n/\eps^2)}$ possibilities for $\ovec_s, \zvec_s,\zvec', \zvec''$, hence the total cost of computing $\ibold[\cdot, \cdot, \cdot, \cdot]$ is $(nk)^{O(k)}n^{O(k^2\log Q\log^2 n/\eps^2)}$. Similarly, the cost of computing $\hbold[\cdot, \cdot, \cdot, \cdot]$ is $(nk)^{O(k)}n^{O(k^2\log Q\log^2 n/\eps^2)}$.

Since the cost of computing $\ibold[\cdot, \cdot, \cdot, \cdot]$ dominates the cost of computing $\abold[\cdot, \cdot]$, the total time complexity of our algorithm is $(nk)^{O(k)}n^{O(k^2\log Q\log^2 n/\eps^2)} = n^{O(k^2\log Q\log^2 n/\eps^2)}$. Hence, for the unit demand case, since $Q \le n$, the runtime of our algorithm is $n^{O(k^2\log^3 n/\eps^2)}$.  

\subsection{Extension to Splittable CVRP in Bounded Treewidth Graphs}
We will extend our algorithm for unit demand CVRP on bounded-treewidth graphs to the splittable CVRP when demands are quasi-polynomially bounded.  In our algorithm for unit demand CVRP for bounded-treewidth CVRP, we viewed the unit demand of each node as a token placed at the node. For the splittable case, we can rescale the demand $d(v)$ such that there are $1 \le d(v) < nQ$ tokens on a node and we can use the same structure theorem as before by modifying tours such that there are at most $O(\log Q \log^2 n/\eps^2)$ different tours for partial tours at a node. We can use the same DP to compute the solution. Each $\yvec_s$ consists of $O(k^2)$ different $\pvec_{s,u,v}$ vectors. Each $\pvec_{s,u,v}$ contains $\tau$ many triples $(\tvec^{s,x,z,i},\hvec^{s,x,z,i},\lvec^{s,x,z,i})$. 
\begin{enumerate}
    \item Each $\tvec^{s,x,z,i}$ has $(nQ)^{O(\log^2n/\eps^2)}$ possibilities since there are at most $O(\log^2 n/\eps)$ tours in a small bucket. 
    \item Each $\hvec^{s,x,z,i}$ and $\lvec^{s,x,z,i}$ have $(nQ)^{O(g)}$ possibilities. Recall that $g = (2 \delta \log n)/\eps^2$, so each $\hvec^{s,x,z,i}$ and $\lvec^{s,x,z,i}$ have $(nQ)^{O(\log n /\eps^2)}$ possibilities. 
    \item Each triple $(\tvec^{s,x,z,i},\hvec^{s,x,z,i},\lvec^{s,x,z,i})$ has $(nQ)^{O(\log^2 n/\eps)}$  possibilities. 
    \item Since $\pvec_{s,u,v}$ has $\tau = O(\log Q/ \eps)$ many such triples, the number of possible entries for $\pvec_{s,u,v}$ is $(nQ)^{O(\tau\log^2 n/\eps)} = (nQ)^{O(\log Q\log^2 n/\eps^2)}$. 
    \item Since $\yvec_s$ consists of $O(k^2)$ different entries of $\pvec$, the total number of possible entries for each $\yvec_s$ is $(nQ)^{O(k^2\log Q\log^2 n/\eps^2)}$.
\end{enumerate}
Similar to the analysis of the runtime of the unit demand case, the time complexity of computing the entries of DP tables $\abold$ and consistency table $\ibold$ is, $(kQ)^{O(k)}(nQ)^{O(k^2\log Q\log^2 n/\eps^2)} = (nQ)^{O(k^2\log Q\log^2 n/\eps^2)}$ since $k \le n$. Suppose $Q = n^{O(\log^c n)}$, then the runtime of our algorithm is $n^{O(k^2 \log^{2c + 3}n/\eps^2)}$. 

\section{Extension to Splittable CVRP for Graphs of Bounded Doubling Metrics and Bounded Highway Dimension}
In this section, we will show how we can use our algorithm for CVRP on bounded-treewidth graphs as a blackbox to obtain a QPTAS for graphs of bounded doubling metrics and graphs of bounded highway dimension.  We will use the following result about emdedding graphs of doubling dimension $D$ into a bounded-treewidth graph of treewidth $k \le 2^{O(D)}\ceil*{\br{\frac{4D\log \Delta}{\eps}}^D}$ by Talwar \cite{Talwar-embedding}. 
\begin{lemma}\label{lem:doubling-embed}
(Theorem 9 in \cite{Talwar-embedding}) Let $(X,d)$ be a metric with doubling dimension $D$ and aspect ratio $\Delta$. For any $\eps > 0$, $(X,d)$ can be $(1 + \eps)$ probabilistically approximated by a family of treewidth $k$-metrics for $k \le 2^{O(D)}\ceil*{\br{\frac{4D\log \Delta}{\eps}}^D}$.
\end{lemma}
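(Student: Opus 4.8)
The plan is to reproduce Talwar's hierarchical‑decomposition construction. First I would normalize so the minimum interpoint distance is $1$ and the diameter is $\Delta$, and set $L=\lceil\log_2\Delta\rceil=\Theta(\log\Delta)$. The starting ingredient is a \emph{randomized padded hierarchical decomposition} of $(X,d)$: a nested sequence of partitions $\mathcal{P}_0,\dots,\mathcal{P}_L$ such that every cluster of $\mathcal{P}_i$ has diameter at most $2^i$, every cluster of $\mathcal{P}_{i-1}$ is contained in one of $\mathcal{P}_i$, $\mathcal{P}_0$ is the partition into singletons, and the padding bound
\[
\Pr\bigl[B(x,r)\not\subseteq\mathcal{P}_i(x)\bigr]\le c\,\frac{D\,r}{2^i}
\]
holds for every $x$, every $i$, every $r>0$, and an absolute constant $c$, where $\mathcal{P}_i(x)$ denotes the cluster of $\mathcal{P}_i$ containing $x$. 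Such decompositions for a metric of doubling dimension $D$ (with padding parameter $\Theta(1/D)$) follow from a Calinescu–Karloff–Rabani style random ball partition together with the doubling packing bound; I would state this as a lemma and cite it.

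Next I would build the target graph $H$. For each cluster $C\in\mathcal{P}_i$ pick a $\delta_i$‑net $N_C\subseteq C$ with $\delta_i:=\tfrac{\eps}{8cDL}\,2^i$ (so $N_C=C$ at level $0$). Since a set of diameter $2^i$ lies in a ball of that radius, the doubling property gives $|N_C|\le 2^{O(D)}(2^i/\delta_i)^D=2^{O(D)}(8cDL/\eps)^D$, which after absorbing $(8c)^D=2^{O(D)}$ is at most $2^{O(D)}\ceil*{(4D\log\Delta/\eps)^D}$. Let $V(H):=\bigcup_C N_C$ (this contains $X$, since $N_C=C$ for level‑$0$ clusters), and for every non‑root cluster $C$ with parent $C'$ put an edge between every $p\in N_C$ and every $q\in N_{C'}$ of weight $d(p,q)$; let $d_H$ be the shortest‑path metric of $H$. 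A tree decomposition of $H$ is the hierarchy tree itself, assigning to the node of cluster $C$ the bag $N_C\cup N_{\mathrm{parent}(C)}$: every edge of $H$ lies in the bag of the lower of its two clusters, and the bags containing a fixed $p\in N_C$ form a connected subtree (the node $C$ together with the subtrees below any descendant whose net still contains $p$). Hence the treewidth of $H$ is at most $2\max_C|N_C|-1\le k$ with $k=2^{O(D)}\ceil*{(4D\log\Delta/\eps)^D}$.

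It remains to verify the two metric properties. Domination, $d_H(x,y)\ge d(x,y)$, is immediate since every edge weight is a genuine distance, so every path in $H$ has length at least $d(x,y)$ by the triangle inequality. For the upper bound fix $x,y$, write $d=d(x,y)$, and let $\ell$ be the smallest level at which $x$ and $y$ lie in a common cluster. I would route in $H$ along $x=p_0\to p_1\to\cdots\to p_\ell\to q\to q_{\ell-1}\to\cdots\to q_0=y$, where $p_i$ is the point of $N_{\mathcal{P}_i(x)}$ nearest $x$ and $q_i$ the point of $N_{\mathcal{P}_i(y)}$ nearest $y$ (these steps are legal since $\mathcal{P}_i(x)\subseteq\mathcal{P}_{i+1}(x)$, and $p_\ell,q:=q_\ell$ are both in the common level‑$\ell$ cluster's net). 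Using $d(x,p_i)\le\delta_i$, $d(y,q_i)\le\delta_i$, the triangle inequality, and the fact that the $\delta_i$ are geometric, this walk has length at most $d+O\bigl(\sum_{i\le\ell}\delta_i\bigr)=d+O(\delta_\ell)=d+O\bigl(\tfrac{\eps}{DL}2^\ell\bigr)$. Taking expectations, $\mathbb{E}[d_H(x,y)]\le d+O\bigl(\tfrac{\eps}{DL}\bigr)\,\mathbb{E}[2^\ell]$. Now $\Pr[\ell\ge i]\le\Pr[B(x,2d)\not\subseteq\mathcal{P}_{i-1}(x)]\le O(Dd/2^i)$ by padding, so splitting $\mathbb{E}[2^\ell]=\sum_i 2^i\Pr[\ell=i]\le\sum_i 2^i\Pr[\ell\ge i]$ at $i^\ast=\lceil\log_2 d\rceil$ gives $\mathbb{E}[2^\ell]\le O(d)+O(Dd)\cdot L=O(DdL)$, whence $\mathbb{E}[d_H(x,y)]\le(1+O(\eps))d$; rescaling $\eps$ by the hidden constant (which only changes $k$ by a $2^{O(D)}$ factor) completes the proof. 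The main obstacle is the first ingredient — the padded hierarchical decomposition with padding parameter $\Theta(1/D)$ — since this is the only genuinely nontrivial step and is exactly where the parameter $D$ (rather than $\log|X|$) enters, which is what keeps the net resolution at $\Theta(\eps 2^i/(DL))$ and hence $|N_C|$ at $2^{O(D)}(D\log\Delta/\eps)^D$. The net packing bound, the tree‑decomposition bookkeeping, and the expected‑stretch telescoping are routine, but constants must be tracked carefully to land on the precise bound $k\le 2^{O(D)}\ceil*{(4D\log\Delta/\eps)^D}$ stated in the lemma.
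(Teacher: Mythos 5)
The paper never proves this lemma: it is imported wholesale as Theorem~9 of Talwar's paper and used purely as a black box, so there is no internal argument to compare against. Your sketch is essentially a reconstruction of Talwar's construction itself (randomized padded hierarchical decomposition, per-cluster nets at resolution $\Theta(\eps 2^i/(D\log\Delta))$, bags formed by a cluster's net together with its parent's net, and expected distortion controlled by the padding probability), and your quantitative bookkeeping is sound: the net-size bound $2^{O(D)}(D\log\Delta/\eps)^D$, the detour bound $O(\delta_\ell)$, and $\mathbb{E}[2^\ell]=O(DdL)$ combine exactly as you say to give expected stretch $1+O(\eps)$ with the stated treewidth after rescaling $\eps$. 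Citing the padded hierarchical decomposition rather than reproving it is also consistent with how the paper treats this material.

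Two details in your write-up do need repair, though both are routine. First, your edge set joins $N_C$ only to $N_{\mathrm{parent}(C)}$, yet your routing uses the hop $p_\ell\to q_\ell$ between two net points of the \emph{same} level-$\ell$ cluster, which is not an edge of $H$ as you defined it; either add all intra-cluster edges on each $N_C$ (they are already covered by the bag of $C$, so the treewidth bound is unchanged) or route through a net point of the parent cluster chosen near $p_\ell$, at an extra cost of $O(\delta_{\ell+1})$. Second, the claim that the bags containing a fixed point $p$ form a connected subtree is false for arbitrarily chosen per-level nets: if $p$ is a net point of $\mathcal{P}_0(p)$ and of $\mathcal{P}_4(p)$ but of none of the intermediate ancestors, then the bags of $\mathcal{P}_0(p)$ and of $\mathcal{P}_3(p),\mathcal{P}_4(p)$ contain $p$ while those of $\mathcal{P}_1(p),\mathcal{P}_2(p)$ do not, violating the connectivity condition of a tree decomposition. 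The fix is to make the nets nested along the hierarchy (build them top-down, seeding the net of each child cluster with the parent's net points lying inside it, which is possible since $\delta_{i-1}<\delta_i$); then the set of levels at which $p$ is a net point is a contiguous bottom segment of its cluster chain and connectivity holds. With these two corrections your argument matches Talwar's and yields the bound $k\le 2^{O(D)}\lceil(4D\log\Delta/\eps)^D\rceil$ as stated.
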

We will also use the following result by Feldmann et al. \cite{FeldmannFKP15-embedding} related to graphs of low highway dimension. 
\begin{lemma}\label{lem:highway-embe}
(Theorem 3 in \cite{FeldmannFKP15-embedding}) Let $G$ be a graph with highway dimension $D$ of violation $\lambda > 0$, and aspect ratio $\Delta$. For any $\eps > 0$, there is a polynomial-time computable probabilistic embedding $H$ of $G$ with treewidth $(\log \Delta)^{O\br{\log^2(\frac{D}{\eps \lambda})/\lambda}}$ and expected distortion $1 + \eps$. \end{lemma}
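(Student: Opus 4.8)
The plan is to reprove this embedding by building a hierarchical, hub-based decomposition of $G$ and turning it into a randomized, tree-decomposition-respecting embedding, in the same spirit as the bounded-doubling embedding of Lemma~\ref{lem:doubling-embed} but driven by the highway-dimension structure. First I would recall the structural consequence of bounded highway dimension: for every scale $r_i = 2^i$, $0 \le i \le \log\Delta$, there is a \emph{shortest-path cover} $H_i \subseteq V(G)$ (a set of \emph{hubs}) such that (i) every shortest path of length in $(r_i, 2 r_i]$ that lies in a ball of radius $\Theta(r_i)$ passes through some vertex of $H_i$, and (ii) $H_i$ is \emph{locally sparse}: every ball of radius $O(r_i)$ contains at most $D$ hubs of $H_i$. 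The violation parameter $\lambda$ is exactly what relaxes the constants in (i) while preserving the clean bound $D$ in (ii); all the $H_i$ can be computed greedily, one scale at a time, in polynomial time.

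Next I would construct the embedding top-down over scales. Using a random-shift padded decomposition with a parameter $\rho$ to be fixed, partition $V(G)$ at each scale $r_i$ into clusters of diameter $O(r_i)$ that refine the scale-$r_{i+1}$ partition, so the clusters form a laminar family with decomposition tree $\mathcal{D}$. For a cluster $C$ at scale $r_i$, let its \emph{portal set} $P(C)$ consist of the hubs, over the $\Theta\br{\log(D/(\eps\lambda))}$ scales around $r_i$, that lie within distance $O(r_i)$ of $C$, and impose that any tour or path entering or leaving $C$ must do so through $P(C)$. The embedded graph $H$ then replaces, for each cluster $C$, the edges leaving $C$ by a clique on $P(C)$ weighted with true $G$-distances, recursing on the children of $C$. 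By construction $H$ creates no shortcuts, so $d_H \ge d_G$, and $H$ inherits a tree decomposition from $\mathcal{D}$ whose bag at $C$ is $P(C)$ together with the portal sets of the $O(1/\lambda)$ ancestor clusters whose scale windows still overlap that of $C$.

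The treewidth estimate is the first place where care is needed. A cluster of diameter $O(r_i)$ meets $O(D)$ hubs per scale by local sparsity; resolving routing inside a cluster, however, requires a recursive hub structure, so the number of portals relevant at a single scale grows to $(\log\Delta)^{O(\log(D/(\eps\lambda)))}$, and $P(C)$ spans $\Theta\br{\log(D/(\eps\lambda))}$ scales, and a bag unions $P(C)$ over $O(1/\lambda)$ nested clusters; multiplying these out gives bag size $(\log\Delta)^{O\br{\log^2(D/(\eps\lambda))/\lambda}}$, matching the statement. The heart of this step is proving that only $O(1/\lambda)$ scale windows are ever simultaneously active in one bag — this is precisely what the violation $\lambda$ buys, and is what pulls the exponent down from the much larger violation-free bound; I expect this counting argument, together with the coupled choice of $\rho$ below, to be the main obstacle.

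Finally I would bound the expected distortion. Fix $u,v$ and a shortest $u$–$v$ path $\pi$ of length $L = d_G(u,v)$ in $G$. Whenever $\pi$ crosses a scale-$r_i$ cluster boundary the portal rule forces a detour to the nearest portal and back; by taking the scale-$r_i$ partition sufficiently fine — subdividing it into $\Theta(\rho)$ sub-levels, at the price of a $\Theta(\rho)$ factor in the number of portals — the expected detour charged to $\pi$ at scale $r_i$ can be made $O(L/\rho)$. Summing over the $O(\log(D/(\eps\lambda)))$ scales relevant to a length-$L$ path and choosing $\rho = \Theta\br{\eps^{-1}\log(D/(\eps\lambda))}$ bounds the total expected extra length by $\eps L$; since $d_H \ge d_G$ always, this yields expected distortion $1 + \eps$. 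The $\Theta(\rho)$ blow-up in portals is absorbed into the constant of the treewidth exponent. The recurring tension — finer partitions help distortion but enlarge the bags — is resolved exactly because the violation $\lambda$ caps the number of simultaneously active scales, so that both the treewidth target and the distortion target can be met at once; making that trade-off quantitative is the crux of the whole argument.
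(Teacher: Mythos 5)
This statement is not proved in the paper at all: it is quoted verbatim as Theorem~3 of Feldmann et al.\ \cite{FeldmannFKP15-embedding} and used as a black box, so there is no internal proof to compare against. Judged as a standalone reconstruction of that external theorem, your proposal is a plan rather than a proof. The three quantitative claims that carry the entire result are asserted, not established, and you say so yourself: the claim that only $O(1/\lambda)$ scale windows are ``simultaneously active'' in a bag (which you call the main obstacle), the resulting bag-size count of $(\log\Delta)^{O(\log^2(D/(\eps\lambda))/\lambda)}$, and the distortion accounting in which the portal blow-up by $\Theta(\rho)$ is ``absorbed'' into the treewidth exponent (which you call the crux). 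Without these, the argument does not close; in particular, a generic padded-decomposition-with-portals construction over all $\log\Delta$ scales gives no a priori reason why the number of portals per bag should be polylogarithmic in $\Delta$ rather than polynomial in $n$, and that is exactly the difficulty the cited theorem resolves.

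It is also worth noting that the actual proof in \cite{FeldmannFKP15-embedding} takes a different route from your sketch: the violation $\lambda$ is used to obtain approximate shortest-path covers at all scales, the graph is decomposed into ``towns'' that are well separated at each scale, and the key structural lemma is that the core hubs of a town form a metric of doubling dimension roughly $O(\log(D/(\eps\lambda)))$; the embedding is then obtained by applying a Talwar-style split-tree construction (as in Lemma~\ref{lem:doubling-embed}) to these core hubs and recursing on towns, which is where the $\log^2(\frac{D}{\eps\lambda})$ in the exponent comes from. Your portal-clique construction on a random-shift decomposition of $G$ itself never exploits this doubling-dimension phenomenon, so even setting aside the admitted gaps, it is not clear the stated treewidth bound is reachable along your route. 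For the purposes of this paper, citing \cite{FeldmannFKP15-embedding} as the source of the lemma is the appropriate move; if you want a self-contained proof, you would need to supply the town decomposition and the core-hub doubling bound, not just the portal bookkeeping.
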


For both graph classes, our algorithm works as follows. The input graph $G$ is embedded into a host graph $H$ of bounded treewidth using the embedding given in Lemma \ref{lem:doubling-embed} and Lemma \ref{lem:highway-embe}. The algorithm then finds a $(1 + \eps)$-approximation for CVRP for $H$, using the dynamic programming solution from the Section 5. The solution for $H$ is then \emph{lifted} back to a solution in $G$. For each tour in the solution for $H$, a tour in $G$ will visit nodes in the same order as the tour in $H$. The embedding given in Lemma \ref{lem:doubling-embed} and Lemma \ref{lem:highway-embe} is such that an optimal set of tours in the host graph gives a $(1 + \eps)$ solution in $G$. The embedding also ensures that $H$ has treewidth small enough that the algorithm runs in quasi-polynomial time. 
\begin{theorem}
For any $\eps > 0$ and $D > 0$, there is a an algorithm that, given an instance of the splittable CVRP with capacity $Q = n^{\log^c n}$ and the graph has doubling dimension $D$ with cost $\opt$, finds a  
$(1 + \eps)$-approximate solution in time 
$n^{O(D^D \log^{2c + D + 3}n/\eps^{D+2})}$.
\end{theorem}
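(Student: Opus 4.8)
The plan is to reduce the problem to the bounded-treewidth case and then invoke Theorem~\ref{thm:treewidth}. First I would run the preprocessing described in the Preliminaries on $G$ so that all shortest-path distances are polynomially bounded positive integers; in particular the aspect ratio is $\Delta=\poly(n)$, so $\log\Delta=O(\log n)$. I then apply Lemma~\ref{lem:doubling-embed} with error parameter $\eps'=\eps/c_0$ for a suitable absolute constant $c_0$: the metric of $G$ is $(1+\eps')$-probabilistically approximated by a distribution over treewidth-$k$ metrics with
\[
 k \;\le\; 2^{O(D)}\ceil*{\br{\tfrac{4D\log\Delta}{\eps'}}^{D}} \;=\; O\!\br{\tfrac{D^{D}\log^{D} n}{\eps^{D}}},
\]
where the $2^{O(D)}$ factor and the absolute constants are absorbed because $D$ is treated as fixed.

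Second, I would argue that $O(\log n)$ independent samples suffice. The embedding is dominating ($d_H\ge d$ pointwise) and has expected distortion $1+\eps'$, so for the fixed optimum $\OPT$ of $\calI$ we have $\mathbb{E}[\cost_H(\OPT)]\le(1+\eps')\opt$, hence $\mathbb{E}[\opt_H]\le(1+\eps')\opt$. Since $\opt_H\ge\opt$, Markov's inequality gives $\Prob{\opt_H\le(1+\eps)\opt}\ge 1/2$; drawing $O(\log n)$ host graphs, running the algorithm below on each, and keeping the cheapest lifted solution makes the bound hold with high probability.

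Third, on a sampled host graph $H$ — which has treewidth $k$, has $\poly(n)$ vertices, and on which the capacity is still $Q=n^{O(\log^{c}n)}$ — I run the splittable-CVRP algorithm of Theorem~\ref{thm:treewidth} (with parameter $\eps'$), obtaining a solution of cost at most $(1+\eps')\opt_H\le(1+O(\eps))\opt$. Finally I lift this solution to $G$: in Talwar's embedding every vertex of $H$ that a tour can visit is a net point, i.e.\ an actual vertex of $G$, so each tour of the $H$-solution visits a sequence of vertices of $G$, and I route a $G$-tour through that same sequence along $G$-shortest paths. Because $d\le d_H$, each lifted tour costs no more than its $H$-counterpart, and it visits exactly the same demand vertices, so no capacity is violated. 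Rescaling $\eps$ gives a $(1+\eps)$-approximate feasible solution for $G$.

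For the running time, substituting $k=O(D^{D}\log^{D} n/\eps^{D})$ into the bound $n^{O(k^{2}\log^{2c+3}n/\eps^{2})}$ of Theorem~\ref{thm:treewidth} (absorbing the constants that depend only on $D$) yields the claimed $n^{O(D^{D}\log^{2c+D+3}n/\eps^{D+2})}$, and the $O(\log n)$ repetitions add only a polynomial factor. I expect the main obstacle to be the bookkeeping around the probabilistic embedding — the distortion guarantee is only in expectation, so correctness has to be routed through sampling-and-take-the-best — together with checking that the lift is simultaneously cost-nonincreasing and feasibility-preserving, which relies on the host-graph vertices touched by any tour being genuine vertices of $G$ so that each demand is delivered exactly once.
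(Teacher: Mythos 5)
Your proposal follows essentially the same route as the paper: embed $G$ via Talwar's theorem into a treewidth-$k$ host graph (with $\log\Delta=O(\log n)$ after the standard preprocessing), run the bounded-treewidth algorithm of Theorem~\ref{thm:treewidth} as a black box, and lift the solution back; your added care about the expectation-only distortion guarantee (Markov plus $O(\log n)$ independent samples, keeping the cheapest lifted solution) and about the lift being cost-nonincreasing and capacity-preserving only makes the argument more complete than the paper's two-line version. One caveat you share with the paper: substituting $k=O(D^{D}\log^{D}n/\eps^{D})$ into the bound $n^{O(k^{2}\log^{2c+3}n/\eps^{2})}$ actually gives exponent $O(D^{2D}\log^{2D+2c+3}n/\eps^{2D+2})$ rather than the stated $O(D^{D}\log^{2c+D+3}n/\eps^{D+2})$ (the stated exponent corresponds to substituting $k$, not $k^{2}$), so the final arithmetic in both your write-up and the paper does not literally yield the claimed exponent, although the algorithm remains a QPTAS for fixed $D$ either way.
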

\begin{proof}
This follows easily from Lemma \ref{lem:doubling-embed} and using the algorithm for bounded-treewidth as a blackbox. In place of $k$, we will substitute $k = 2^{O(D)}\ceil*{\br{\frac{4D\log \Delta}{\eps}}^D}$ into the runtime for the algorithm for bounded-treewidth which is $n^{O(k^2 \log^{2c + 3}n/\eps^2)}$. Hence, we have an algorithm for graphs of bounded doubling dimension with runtime $n^{O(D^D \log^{2c + D + 3}n/\eps^{D+2})}$. 
\end{proof}

As an immediate corollary, since $\mathbb{R}^2$ has doubling dimension 7 \cite{disk-covering}, the above theorem implies an approximation scheme for unit demand CVRP on Euclidean metrics on $\mathbb{R}^2$ in time $n^{O(\log^{10}n/\eps^{9})}$ which improves on the run time of $n^{\log^{O(1/\epsilon)}n}$ of \cite{Das-Mathieu}. 
\begin{theorem}
For any $\eps > 0, \lambda > 0$ and $D > 0$, there is a an algorithm that, given an instance of the splittable CVRP with capacity $Q = n^{\log^c n}$ and a graph with highway dimension $D$ and violation $\lambda$ finds
a $(1+\eps)$-approximate solution in time $n^{O( \log^{2c + 3 + \log^2(\frac{D}{\eps \lambda})\cdot \frac{1}{\lambda}}n/\eps^2)}$.
\end{theorem}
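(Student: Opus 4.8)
The plan is to mirror exactly the argument used for bounded doubling dimension in the preceding theorem, but substituting the Feldmann--Fuchs--Klein--Paul embedding (Lemma \ref{lem:highway-embe}) for the Talwar embedding (Lemma \ref{lem:doubling-embed}). Concretely: first I would embed the input graph $G$ (of highway dimension $D$ with violation $\lambda$) into a host graph $H$ of treewidth $k = (\log \Delta)^{O(\log^2(\frac{D}{\eps\lambda})/\lambda)}$ with expected distortion $1+\eps$, using Lemma \ref{lem:highway-embe}. Then I would run the QPTAS for splittable CVRP on bounded-treewidth graphs (Theorem \ref{thm:treewidth}) on $H$, which produces a $(1+\eps)$-approximate solution for the CVRP instance on $H$ in time $n^{O(k^2 \log^{2c+3} n/\eps^2)}$. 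Finally, I would lift the tours back to $G$: each tour visits the same nodes in the same order, and since the embedding only contracts distances in one direction and distorts by at most $1+\eps$ in expectation, the lifted solution is a $(1+\eps)$-approximation to $\opt$ on $G$ in expectation (and can be boosted to a w.h.p.\ guarantee by repeating the embedding $O(\log n)$ times and taking the cheapest lifted solution, invoking Markov's inequality).

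The one quantitative point that needs care is bounding the aspect ratio $\Delta$. By the preprocessing in the Preliminaries section, we may assume the edge weights are polynomially bounded positive integers, so $\Delta = \poly(n)$ and hence $\log \Delta = O(\log n)$. Substituting this into the treewidth bound gives $k = (\log n)^{O(\log^2(\frac{D}{\eps\lambda})/\lambda)}$, so $k^2 = (\log n)^{O(\log^2(\frac{D}{\eps\lambda})/\lambda)}$ as well, and therefore
\[
k^2 \log^{2c+3} n = \log^{\,2c+3+O(\log^2(\frac{D}{\eps\lambda})/\lambda)} n .
\]
Plugging this into the running time $n^{O(k^2 \log^{2c+3} n/\eps^2)}$ of the bounded-treewidth algorithm yields the claimed bound $n^{O(\log^{2c+3+\log^2(\frac{D}{\eps\lambda})\cdot\frac{1}{\lambda}} n/\eps^2)}$.

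I expect the main (minor) obstacle to be purely bookkeeping: ensuring that the $1+\eps$ loss from the embedding, the $1+\eps$ loss from the height-reduction/structure theorem inside the treewidth algorithm, and the $1+O(\eps)$ loss from the weight rounding all compose into a single $1+O(\eps)$ factor, which is handled by rescaling $\eps$ by a constant at the outset. There is no new structural idea here; the theorem is a direct corollary of Theorem \ref{thm:treewidth} and Lemma \ref{lem:highway-embe}, exactly parallel to the doubling-dimension corollary, with the only genuine content being the substitution of the treewidth parameter and the observation that $\log\Delta = O(\log n)$ after preprocessing.
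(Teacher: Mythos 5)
Your proposal is correct and matches the paper's own proof, which likewise treats the theorem as a direct corollary of the bounded-treewidth algorithm combined with the embedding of Lemma \ref{lem:highway-embe}, substituting $k = (\log \Delta)^{O(\log^2(\frac{D}{\eps\lambda})/\lambda)}$ into the running time $n^{O(k^2\log^{2c+3}n/\eps^2)}$. Your added remarks on bounding $\log\Delta = O(\log n)$ via the preprocessing and on composing the $1+O(\eps)$ losses are consistent with, and slightly more explicit than, what the paper states.
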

\begin{proof}
This follows easily from Lemma \ref{lem:highway-embe} and using the algorithm for bounded-treewidth as a blackbox. In place of $k$, we will substitute $k = (\log \Delta)^{O\br{\log^2(\frac{D}{\eps \lambda})/\lambda}}$ into the runtime for the algorithm for bounded-treewidth which is $n^{O(k^2 \log^{2c + 3}n/\eps^2)}$. Hence, we have an algorithm for graphs of bounded doubling dimension with runtime $n^{O( \log^{2c + 3 + \log^2(\frac{D}{\eps \lambda})\cdot \frac{1}{\lambda}}n/\eps^2)}$. 
\end{proof}

\section{Conclusion}
In this paper we presented QPTAS's for CVRP on trees, graphs of bounded treewidths, bounded doubling dimension, and bounded highway dimension. The immediate questions to consider are whether these approximation schemes can in fact be turned into PTAS's. Even for the case of trees, although we can improve the run time slightly by shaving off one (or maybe two) log factors from the exponent, it is not clear if it can be turned into a PTAS without significant new ideas. 

Although our result implies a QPTAS with a better run time
for CVRP on Euclidean plan $\mathbb{R}^2$ ($n^{O(\log^{10}n/\eps^{9})}$ vs the time of
f $n^{\log^{O(1/\epsilon)}n}$ of \cite{Das-Mathieu}), getting a PTAS remains an interesting open question.
As discussed in \cite{AdamaszekCL09}, the difficult case appears to be when $Q$ is polynomial in $n$ (e.g. $Q=\sqrt{n}$). Another interesting question is to consider CVRP on planar graphs and develop approximation schemes for them and more generally graphs of bounded genus or minor free graphs.

\bibliographystyle{abbrv}
\bibliography{references}{}

\begin{thebibliography}{10}

\bibitem{AdamaszekCL09}
A.~Adamaszek, A.~Czumaj, and A.~Lingas.
\newblock {PTAS} for \emph{k}-tour cover problem on the plane for moderately
  large values of \emph{k}.
\newblock In Y.~Dong, D.~Du, and O.~H. Ibarra, editors, {\em Algorithms and
  Computation, 20th International Symposium, {ISAAC} 2009, Honolulu, Hawaii,
  USA, December 16-18, 2009. Proceedings}, volume 5878 of {\em Lecture Notes in
  Computer Science}, pages 994--1003. Springer, 2009.

\bibitem{ALTINKEMER1987149}
K.~Altinkemer and B.~Gavish.
\newblock Heuristics for unequal weight delivery problems with a fixed error
  guarantee.
\newblock {\em Operations Research Letters}, 6(4):149--158, 1987.

\bibitem{Arora-Euclidean-PTAS}
S.~Arora.
\newblock Polynomial time approximation schemes for euclidean traveling
  salesman and other geometric problems.
\newblock {\em J. ACM}, 45(5):753–782, Sept. 1998.

\bibitem{stoc/AsanoKTT97}
T.~Asano, N.~Katoh, H.~Tamaki, and T.~Tokuyama.
\newblock Covering points in the plane by \emph{k}-tours: Towards a polynomial
  time approximation scheme for general \emph{k}.
\newblock In F.~T. Leighton and P.~W. Shor, editors, {\em Proceedings of the
  Twenty-Ninth Annual {ACM} Symposium on the Theory of Computing, El Paso,
  Texas, USA, May 4-6, 1997}, pages 275--283. {ACM}, 1997.

\bibitem{Becker18}
A.~Becker.
\newblock A tight 4/3 approximation for capacitated vehicle routing in trees.
\newblock In E.~Blais, K.~Jansen, J.~D.~P. Rolim, and D.~Steurer, editors, {\em
  Approximation, Randomization, and Combinatorial Optimization. Algorithms and
  Techniques, {APPROX/RANDOM} 2018, August 20-22, 2018 - Princeton, NJ, {USA}},
  volume 116 of {\em LIPIcs}, pages 3:1--3:15. Schloss Dagstuhl -
  Leibniz-Zentrum f{\"{u}}r Informatik, 2018.

\bibitem{QPTAS-Planar-boundedgenus}
A.~Becker, P.~N. Klein, and D.~Saulpic.
\newblock A quasi-polynomial-time approximation scheme for vehicle routing on
  planar and bounded-genus graphs.
\newblock In K.~Pruhs and C.~Sohler, editors, {\em 25th Annual European
  Symposium on Algorithms, {ESA} 2017, September 4-6, 2017, Vienna, Austria},
  volume~87 of {\em LIPIcs}, pages 12:1--12:15. Schloss Dagstuhl -
  Leibniz-Zentrum f{\"{u}}r Informatik, 2017.

\bibitem{Becker-boundedhighway}
A.~Becker, P.~N. Klein, and D.~Saulpic.
\newblock Polynomial-time approximation schemes for k-center, k-median, and
  capacitated vehicle routing in bounded highway dimension.
\newblock In Y.~Azar, H.~Bast, and G.~Herman, editors, {\em 26th Annual
  European Symposium on Algorithms, {ESA} 2018, August 20-22, 2018, Helsinki,
  Finland}, volume 112 of {\em LIPIcs}, pages 8:1--8:15. Schloss Dagstuhl -
  Leibniz-Zentrum f{\"{u}}r Informatik, 2018.

\bibitem{PlanarPTAS-Klein}
A.~Becker, P.~N. Klein, and A.~Schild.
\newblock A {PTAS} for bounded-capacity vehicle routing in planar graphs.
\newblock In Z.~Friggstad, J.~Sack, and M.~R. Salavatipour, editors, {\em
  Algorithms and Data Structures - 16th International Symposium, {WADS} 2019,
  Edmonton, AB, Canada, August 5-7, 2019, Proceedings}, volume 11646 of {\em
  Lecture Notes in Computer Science}, pages 99--111. Springer, 2019.

\bibitem{Becker-Paul-Bricriteria}
A.~Becker and A.~Paul.
\newblock A framework for vehicle routing approximation schemes in trees.
\newblock In Z.~Friggstad, J.~Sack, and M.~R. Salavatipour, editors, {\em
  Algorithms and Data Structures - 16th International Symposium, {WADS} 2019,
  Edmonton, AB, Canada, August 5-7, 2019, Proceedings}, volume 11646 of {\em
  Lecture Notes in Computer Science}, pages 112--125. Springer, 2019.

\bibitem{Vygen}
J.~Blauth, V.~Traub, and J.~Vygen.
\newblock Improving the approximation ratio for capacitated vehicle routing.
\newblock {\em CoRR}, abs/2011.05235, 2020.

\bibitem{BodlaenderH95}
H.~L. Bodlaender and T.~Hagerup.
\newblock Parallel algorithms with optimal speedup for bounded treewidth.
\newblock In Z.~F{\"{u}}l{\"{o}}p and F.~G{\'{e}}cseg, editors, {\em Automata,
  Languages and Programming, 22nd International Colloquium, ICALP95, Szeged,
  Hungary, July 10-14, 1995, Proceedings}, volume 944 of {\em Lecture Notes in
  Computer Science}, pages 268--279. Springer, 1995.

\bibitem{Klein-minorfree}
V.~Cohen{-}Addad, A.~Filtser, P.~N. Klein, and H.~Le.
\newblock On light spanners, low-treewidth embeddings and efficient traversing
  in minor-free graphs.
\newblock In {\em 61st {IEEE} Annual Symposium on Foundations of Computer
  Science, {FOCS} 2020, Durham, NC, USA, November 16-19, 2020}, pages 589--600.
  {IEEE}, 2020.

\bibitem{DBLP:conf/esa/CyganGLPS12}
M.~Cygan, F.~Grandoni, S.~Leonardi, M.~Pilipczuk, and P.~Sankowski.
\newblock A path-decomposition theorem with applications to pricing and
  covering on trees.
\newblock In L.~Epstein and P.~Ferragina, editors, {\em Algorithms - {ESA} 2012
  - 20th Annual European Symposium, Ljubljana, Slovenia, September 10-12, 2012.
  Proceedings}, volume 7501 of {\em Lecture Notes in Computer Science}, pages
  349--360. Springer, 2012.

\bibitem{Dantzig}
J.~H. Dantzig, G. B.and~Ramser.
\newblock The truck dispatching problem.
\newblock {\em Management Science}, 6(1):80--91, 1959.

\bibitem{Das-Mathieu}
A.~Das and C.~Mathieu.
\newblock A quasipolynomial time approximation scheme for euclidean capacitated
  vehicle routing.
\newblock {\em Algorithmica}, 73(1):115--142, 2015.

\bibitem{FeldmannFKP15-embedding}
A.~E. Feldmann, W.~S. Fung, J.~K{\"{o}}nemann, and I.~Post.
\newblock A (1+{\(\epsilon\)})-embedding of low highway dimension graphs into
  bounded treewidth graphs.
\newblock In M.~M. Halld{\'{o}}rsson, K.~Iwama, N.~Kobayashi, and B.~Speckmann,
  editors, {\em Automata, Languages, and Programming - 42nd International
  Colloquium, {ICALP} 2015, Kyoto, Japan, July 6-10, 2015, Proceedings, Part
  {I}}, volume 9134 of {\em Lecture Notes in Computer Science}, pages 469--480.
  Springer, 2015.

\bibitem{Golden-Wong}
B.~L. Golden and R.~T. Wong.
\newblock Capacitated arc routing problems.
\newblock {\em Networks}, 11(3):305--315, 1981.

\bibitem{Haimovich-Kan}
M.~Haimovich and A.~H. G.~R. Kan.
\newblock Bounds and heuristics for capacitated routing problems.
\newblock {\em Mathematics of Operations Research}, 10(4):527--542, 1985.

\bibitem{Hamaguchi-Katoh}
S.-y. Hamaguchi and N.~Katoh.
\newblock A capacitated vehicle routing problem on a tree.
\newblock In K.-Y. Chwa and O.~H. Ibarra, editors, {\em Algorithms and
  Computation}, pages 399--407, Berlin, Heidelberg, 1998. Springer Berlin
  Heidelberg.

\bibitem{Khachay-PTAS}
M.~Khachay and R.~Dubinin.
\newblock {PTAS} for the euclidean capacitated vehicle routing problem in
  r{\^{}}d.
\newblock In Y.~Kochetov, M.~Khachay, V.~L. Beresnev, E.~A. Nurminski, and
  P.~M. Pardalos, editors, {\em Discrete Optimization and Operations Research -
  9th International Conference, {DOOR} 2016, Vladivostok, Russia, September
  19-23, 2016, Proceedings}, volume 9869 of {\em Lecture Notes in Computer
  Science}, pages 193--205. Springer, 2016.

\bibitem{Khachay-moderatenumer}
M.~Khachay and Y.~Ogorodnikov.
\newblock {QPTAS} for the {CVRP} with a moderate number of routes in a metric
  space of any fixed doubling dimension.
\newblock In I.~S. Kotsireas and P.~M. Pardalos, editors, {\em Learning and
  Intelligent Optimization - 14th International Conference, {LION} 14, Athens,
  Greece, May 24-28, 2020, Revised Selected Papers}, volume 12096 of {\em
  Lecture Notes in Computer Science}, pages 27--32. Springer, 2020.

\bibitem{Khachay-moderatecapacity}
M.~Khachay, Y.~Ogorodnikov, and D.~Khachay.
\newblock An extension of the das and mathieu {QPTAS} to the case of polylog
  capacity constrained {CVRP} in metric spaces of a fixed doubling dimension.
\newblock In A.~V. Kononov, M.~Khachay, V.~A. Kalyagin, and P.~M. Pardalos,
  editors, {\em Mathematical Optimization Theory and Operations Research - 19th
  International Conference, {MOTOR} 2020, Novosibirsk, Russia, July 6-10, 2020,
  Proceedings}, volume 12095 of {\em Lecture Notes in Computer Science}, pages
  49--68. Springer, 2020.

\bibitem{Labbe-Mercure}
M.~Labbé, G.~Laporte, and H.~Mercure.
\newblock Capacitated vehicle routing on trees.
\newblock {\em Operations Research}, 39(4):616--622, 1991.

\bibitem{Mitzenmacher}
M.~Mitzenmacher and E.~Upfal.
\newblock {\em Probability and Computing: Randomization and Probabilistic
  Techniques in Algorithms and Data Analysis}.
\newblock Cambridge University Press, USA, 2nd edition, 2017.

\bibitem{Papadimitrio-Yannakakis}
C.~H. Papadimitriou and M.~Yannakakis.
\newblock The traveling salesman problem with distances one and two.
\newblock {\em Mathematics of Operations Research}, 18(1):1--11, 1993.

\bibitem{Talwar-embedding}
K.~Talwar.
\newblock Bypassing the embedding: Algorithms for low dimensional metrics.
\newblock In {\em Proceedings of the Thirty-Sixth Annual ACM Symposium on
  Theory of Computing}, STOC '04, page 281–290, New York, NY, USA, 2004.
  Association for Computing Machinery.

\bibitem{disk-covering}
E.~W. Weisstein.
\newblock Disk covering problem.
\newblock {\em From MathWorld--A Wolfram Web Resource}, 2018.

\end{thebibliography}


\end{document}